\colorlet{darkblue}{blue!50!black}
\colorlet{darkmagenta}{magenta!80!black}
\newcommand{\p}{\partial}
\newcommand{\e}{\varepsilon}
\newcommand{\R}{{\mathbb R}}
\newcommand{\Z}{{\mathbb Z}}
\newcommand{\IP}{{\mathbb P}}
\newcommand{\I}{{\mathbb I}}
\newcommand{\E}{{\mathbb E}}
\newcommand{\T}{{\mathbb T}}
\newcommand{\N}{{\mathbb N}}
\newcommand{\bfS}{{\boldsymbol{S}}}
\newcommand{\DDD}{{\boldsymbol{D}}}
\newcommand{\QQQ}{{\boldsymbol{Q}}}
\newcommand{\eeta}{{\boldsymbol{{\eta}}}}
\newcommand{\mmu}{{\boldsymbol{{\mu}}}}
\newcommand{\nnu}{{\boldsymbol{{\nu}}}}
\newcommand{\tnnu}{{\boldsymbol{{\tilde\nu}}}}
\newcommand{\III}{{\boldsymbol{I}}}
\newcommand{\JJJ}{{\boldsymbol{J}}}
\newcommand{\TTT}{{\boldsymbol{T}}}
\newcommand{\xxx}{{\boldsymbol{x}}}
\newcommand{\XXX}{{\boldsymbol{X}}}
\newcommand{\yyy}{{\boldsymbol{y}}}
\newcommand{\UUpsilon}{{\boldsymbol\varUpsilon}}
\newcommand{\tUUpsilon}{{\boldsymbol{\widetilde\varUpsilon}}}
\newcommand{\bUUpsilon}{{\boldsymbol{\overline\varUpsilon}}}
\newcommand{\bUpsilon}{{\overline\varUpsilon}}
\newcommand{\tUpsilon}{{\widetilde\varUpsilon}}
\newcommand{\hUUpsilon}{{\boldsymbol{\widehat\varUpsilon}}}
\newcommand{\ssigma}{{\boldsymbol\sigma}}
\newcommand{\llambda}{{\boldsymbol\lambda}}
\newcommand{\ttheta}{{\boldsymbol\theta}}
\newcommand{\aA}{{\cal A}}
\newcommand{\BB}{{\cal B}}
\newcommand{\CC}{{\cal C}}
\newcommand{\DD}{{\cal D}}
\newcommand{\EE}{{\cal E}}
\newcommand{\FF}{{\cal F}}
\newcommand{\HH}{{\cal H}}
\newcommand{\KK}{{\cal K}}
\newcommand{\LL}{{\cal L}}
\newcommand{\MM}{{\cal M}}
\newcommand{\OO}{{\cal O}}
\newcommand{\PP}{{\cal P}}
\newcommand{\RR}{{\cal R}}
\newcommand{\UU}{{\cal U}}
\newcommand{\VV}{{\cal V}}
\newcommand{\XX}{{\cal X}}
\newcommand{\YY}{{\cal Y}}
\newcommand{\lag}{\langle}
\newcommand{\rag}{\rangle}
\newcommand{\dd}{{\textup d}}
\newcommand{\PPPP}{{\mathfrak P}}
\newcommand{\QQQQ}{{\mathfrak Q}}
\newcommand{\MMMM}{{\mathfrak M}}
\newcommand{\XXXX}{{\mathscr X}}
\newcommand{\KKKK}{{\mathscr K}}
\newcommand{\YYYY}{{\mathscr Y}}
\newcommand{\HHHH}{{\mathscr H}}
\newcommand{\EEEE}{{\mathscr E}}
\newcommand{\FFFF}{{\mathscr F}}
\newcommand{\GGGG}{{\mathscr G}}
\newcommand{\VVVV}{{\mathscr V}}
\newcommand{\frX}{{\mathfrak X}}
\newcommand{\bfrX}{\boldsymbol{\mathfrak X}}
\newcommand{\uuu}{{\boldsymbol{\mathit u}}}
\newcommand{\zzz}{{\boldsymbol{\mathit z}}}
\newcommand{\lspan}{\mathop{\rm span}\nolimits}
\newcommand{\supp}{\mathop{\rm supp}\nolimits}
\newcommand{\diver}{\mathop{\rm div}\nolimits}
\newcommand{\Ent}{\mathop{\rm Ent}\nolimits}
\newcommand{\ep}{\mathop{\rm ep}\nolimits}
\newcommand{\Image}{\mathop{\rm Image}\nolimits}
\theoremstyle{plain}
\newtheorem*{mtheorem}{Main Theorem}
\newtheorem{theorem}{Theorem}[section]
\newtheorem{lemma}[theorem]{Lemma}
\newtheorem{proposition}[theorem]{Proposition}
\newtheorem{corollary}[theorem]{Corollary}
\theoremstyle{definition}
\theoremstyle{remark}
\newtheorem{remark}[theorem]{Remark}
\numberwithin{equation}{section}
\let\@fnsymbol\@arabic
\begin{document}

\title{Large deviations and entropy production \\in viscous fluid flows}
\date{\today}
\author{V.~Jak\v si\'c\footnotemark[1]
\and V.~Nersesyan\footnotemark[2]$^{ \ , 5}$
\and C.-A.~Pillet\footnotemark[3]
\and A.~Shirikyan\footnotemark[4]$^{ \ , 6}$}
\footnotetext[1]{Department of Mathematics and Statistics,
McGill University, 805 Sherbrooke Street West, Montreal, QC, H3A 2K6
Canada; e-mail: \href{mailto:Jaksic@math.mcgill.ca}{Jaksic@math.mcgill.ca}}
\footnotetext[2]{Laboratoire de Math\'ematiques, UMR CNRS 8100, UVSQ, Universit\'e Paris-Saclay, 45, av. des Etats-Unis, F-78035 Versailles, France;  e-mail: \href{mailto:Vahagn.Nersesyan@math.uvsq.fr}{Vahagn.Nersesyan@math.uvsq.fr}}
\footnotetext[3]{Aix Marseille Univ, Universit\'e de Toulon, CNRS, CPT, Marseille, France; e-mail: \href{mailto:Pillet@univ-tln.fr}{Pillet@univ-tln.fr}}
\footnotetext[4]{Department of Mathematics, University of Cergy--Pontoise, CNRS UMR 8088, 2 avenue Adolphe Chauvin, 95302 Cergy--Pontoise, France; 
e-mail: \href{mailto:Armen.Shirikyan@u-cergy.fr}{Armen.Shirikyan@u-cergy.fr}}
\footnotetext[5]{Centre de Recherches Math\'ematiques, CNRS UMI 3457, Universit\'e de Montr\'eal, Montr\'eal,  QC, H3C 3J7, Canada}
\footnotetext[6]{Department of Mathematics and Statistics,
McGill University, 805 Sherbrooke Street West, Montreal, QC, H3A 2K6, Canada}
\maketitle

\begin{abstract} 
We study the motion of a particle in a random time-dependent vector field defined by the 2D Navier--Stokes system with a noise. Under suitable non-degeneracy hypotheses we prove that the empirical measures of the trajectories of the pair (velocity field,  particle)  satisfy the LDP with a good rate function. Moreover, we show that the law of a unique stationary solution restricted to the particle component  possesses a positive smooth density with respect to the Lebesgue measure in any finite time. This allows one to define a natural concept of the entropy production, and to  show that its time average is a bounded function of the trajectory. The proofs are based on a new criterion for the validity of the level-$3$ LDP for Markov processes and an application of a general result on the image of probability measures under smooth maps to the laws associated with the motion of the particle. 

\smallskip
\noindent
{\bf AMS subject classifications:}  35Q30, 35R60, 60B12, 60F10, 76D05, 93B05

\smallskip
\noindent
{\bf Keywords:} Large deviations, entropy production, Navier--Stokes system, Lagrangian trajectories, regular densities
 \end{abstract}

\newpage
\tableofcontents
\setcounter{section}{-1}
\section{Introduction}
\label{s0}
The theory of entropic fluctuations in deterministic and stochastic systems of mathematical physics underwent a spectacular development in the last thirty years. It was initiated in the middle of nineties of the last century in physics literature (see~\cite{ECM-1993,ES-1994,GC-1995,GC-1995a}), and was developed rapidly by various research groups. We refer the reader to the papers~\cite{gallavotti-1995, kurchan-1998,LS-1999,maes-1999,ruelle-1999, ES-2002,gaspard-2005,RM-2007,CG-2008, JPR-2011,CJPS-2017} and the references therein for a detailed account of major achievements in the field. The viewpoints and the frameworks adopted in these papers are not necessarily the same, and we start by  briefly describing the approach to the modern theory of entropic fluctuations that we will adopt here, confining ourselves to the discrete-time setting. For additional information, see the paper~\cite{CJPS-2017} and  the forthcoming review articles~\cite{CJNPS-2018,CJPS-2019}. 

The starting point of the theory of entropic fluctuations is the {\it Large Deviation Principle\/} (LDP) for the empirical measures associated with trajectories.\footnote{All the concepts used in this introduction are defined in the main text.}  Namely, denoting by~$\XXXX$ the phase space of the system in question and by $\{u_k\}_{k\ge0}$ a random trajectory, we introduce the {\it empirical measures\/} by 
\begin{equation} \label{empiricalmeasure}
	\nnu_t=t^{-1}\sum_{k=0}^{t-1}\delta_{\uuu_k}, \quad t\ge1,
\end{equation}
where $\uuu_k=(u_l,l\ge k)$. Thus, $\nnu_t$ is a {\it random probability measure\/} on the product space~$\XXX=\XXXX^{\Z_+}$, where~$\Z_+$ is the set of non-negative integers. If the LDP holds for the sequence $\{\nnu_t\}_{t\ge1}$, then we get an object---the rate function~$\III$---giving a detailed information on its large-time asymptotics. Very often $\III$~does not depend on a choice of trajectory, which makes it an important characteristic of the system. 

Suppose now that the system under study possesses a natural time reversal operation~$\theta$ that can be lifted to an involution~$\ttheta$ in the space of probability measures~$\PP(\XXX)$ (on which~$\III$ is defined). One can ask then how~$\III$ transforms under the action of~$\ttheta$. It was observed in~\cite{BL-2008,BC-2015,CJPS-2017} that, under some additional hypotheses, there is an affine function $\ep:\PP(\XXX)\to\R$ such that 
\begin{equation} \label{level3FR}
\III(\llambda\circ\ttheta)=\III(\llambda)+\ep(\llambda)
\end{equation}
for a large class of measures $\llambda\in\PP(\XXX)$. Identity~\eqref{level3FR} is called {\it level-$3$ fluctuation relation\/}, and the second term on its right-hand side is called the {\it mean entropy production with respect to~$\llambda$\/}. In the Markovian situation, under some regularity hypotheses, the quantity~$\ep(\llambda)$ is the integral of a function $\sigma:\XXX\to\R$ with respect to~$\llambda$ (which will be denoted by~$\langle\sigma,\llambda\rangle$). More generally,  in practically all cases of interest, the mean entropy production  can be written in the form
\begin{equation} \label{meanep}
	\ep(\llambda)=\lim_{t\to\infty}t^{-1}\langle\sigma_t,\llambda\rangle,
\end{equation}
where $\{\sigma_t\}$ is a sequence of measurable functions on~$\XXX$. The functions~$\sigma$ and~$\sigma_t$ (called {\it entropy production functional\/} and {\it entropy production in time~$t$\/}) may be very irregular, and their identification is often a delicate question. Furthermore, the study of the large time behaviour of the quantities $\langle\sigma,\nnu_t\rangle$ or~$t^{-1}\sigma_t$, which are called the {\it time average of the entropy production\/}, is typically a  difficult mathematical problem. Of particular importance are the convergence to a limit and the LDP as $t\to\infty$ because these properties are related to the emergence of the arrow of time and its quantitive description. Namely, if the sequence $\{t^{-1}\sigma_t\}$ has a non-vanishing deterministic limit~$\bar\sigma$ (called {\it mean entropy production rate\/}), then the law of the process~$\{u_k\}$ and its image under the time reversal~$\ttheta$ separate from each other as $t\to\infty$ and eventually become mutually singular.  Moreover, if $\{t^{-1}\sigma_t\}$ satisfies the LDP (or even local LDP on a sufficiently large interval), then one can give a detailed description of the above-mentioned separation of measures in terms of the Hoeffding error exponents (see~\cite{JOPS-2012,CJPS-2017,CJNPS-2018}). If, in addition,  the rate function~$I$ of the full LDP for $\{t^{-1}\sigma_t\}$ is obtained from~$\III$ by the contraction relation
\begin{equation} \label{contractionRF}
	I(r)=\inf\{\III(\llambda):\ep(\llambda)=r\}, 
\end{equation} 
then~$I$ has to satisfy the celebrated {\it Gallavotti--Cohen symmetry relation\/} 
\begin{equation} \label{GC-symmetry}
	I(-r)=I(r)+r\quad\mbox{for $r\in\R$}. 
\end{equation}
Finally,  one can prove that the mean entropy production rate~$\bar\sigma$ is always non-negative, and its strict positivity ensures the non-triviality of the error exponents and the emergence of the arrow of time.  Since mathematical justification of the above program amounts to proving a fine form of the second law of thermodynamics for the system under consideration, it should not come as a surprise that for physically relevant models each step of the program is often a formidable mathematical problem. 

\smallskip
Summarising the above discussion, we can state the following steps in the investigation of entropic fluctuations for a given system: 
\begin{enumerate}[label=\bf{(\alph*)}]
	\item LDP for the empirical measures~\eqref{empiricalmeasure}, also called level-$3$ LDP.
	\item Level-$3$ fluctuation relation~\eqref{level3FR}.
	\item Identification of $\sigma_t$, the functional of entropy production in time~$t$, and its relation with physical transport properties.
	\item Law of large numbers for the sequence of time averages $\{t^{-1}\sigma_t\}$.
	\item Strict positivity of the mean entropy production rate $\bar{\sigma}$. 
	\item Local and global LDP for the sequence of time averages $\{t^{-1}\sigma_t\}$.
\end{enumerate}
We emphasise that each of these steps is essentially a separate problem, and they do not need to be studied in the stated order. 

\medskip
The aim of this paper is to address  questions~(a) and~(c) for a fluid particle moving in a two-dimensional periodic box. Namely, we consider the ordinary differential equation (ODE)
\begin{equation} \label{ode-fluidparticle}
	\dot y=u(t,y), \quad y\in \T^2, 
\end{equation} 
where $u(t,y)$ is a time-dependent vector field defined by the 2D Navier--Stokes system subject to an external random forcing. The law of~$u$ is assumed to be invariant under the time translation $t\mapsto t+1$, while the process itself should have good mixing properties. We do not give more details on the random field~$u$, referring the reader to Section~\ref{ss-formulation} for the exact hypotheses.  The ODE~\eqref{ode-fluidparticle} is supplemented with the initial condition
\begin{equation} \label{ode-IC}
	y(0)=p,
\end{equation}
where $p\in\T^2$ is a given point. The solution of~\eqref{ode-fluidparticle}, \eqref{ode-IC} defines a random dynamical system $\varphi_t:\T^2\to\T^2$, $t\ge0$, and  we are interested in the large-time behaviour of the restriction of~$\varphi_t$ to the integer times. More precisely,   let $\TTT:=(\T^2)^{\Z_+}$ and 
\begin{equation} \label{particleEM}
	\llambda_t^p=t^{-1}\sum_{k=0}^{t-1}\delta_{\yyy_k}, \quad t\ge1,
\end{equation}
where $\delta_\yyy\in\PP(\TTT)$ is the Dirac mass at the point $\yyy\in\TTT$, and $\yyy_k=(\varphi_t(p),t\ge k)$. For any $p\in\T^2$, $\{\llambda_t^p\}$ is a sequence of random probability measures on~$\TTT$. The following theorem is a concise and informal formulation of the main results of this paper. The exact statements and further details can be found in Section~\ref{ss-formulation}. 

\begin{mtheorem}
	Under suitable hypotheses on the vector field~$u(t,y)$, there is a $\T^2$-valued random process $\{z_t,t\ge0\}$ such that its almost every trajectory satisfies~\eqref{ode-fluidparticle} and   the following assertions hold.
	
	\smallskip
	{\sc{Stationarity}}. The laws of the processes $\{z_t\}_{t\ge0}$ and $\{z_{1+t}\}_{t\ge0}$ coincide, and the law of each component coincides with the normalised Lebesgue measure on~$\T^2$. 

	\smallskip	
	{\sc{Convergence}}. 
	For any $s\ge1$ and any initial point $p\in\T^2$, the law of the vector $(\varphi_{t}(p),\dots,\varphi_{t+s}(p))$  converges  exponentially fast  in the total variation norm, as $t\to\infty$, to that of $(z_0,\dots,z_s)$. 
	
	\smallskip
	{\sc{Large deviations}}. 
	For any $p\in\T^2$, the sequence $\{\llambda_t^p\}_{t\ge1}$ satisfies the LDP with some good rate function~$\III:\PP(\TTT)\to[0,+\infty]$. 

	\smallskip
	{\sc{Entropy production}}.
	For any $t\ge1$, the law of $(z_1,\dots,z_t)$ has a strictly positive smooth density $\rho_t(x_1,\dots,x_t)$ with respect to the Lebesgue measure on~$\T^{2t}$. Moreover, there is a number $C>0$ such that the entropy production in time~$t$, defined by 
	\begin{equation} \label{EPon0t}	\sigma_t(\yyy^t)=\log\frac{\rho_t(y_1,\dots,y_t)}{\rho_t(y_t,\dots,y_1)}, \quad \yyy^t:=(y_1,\dots,y_t), 
	\end{equation}
	satisfies the inequality $-C\le t^{-1}\sigma(\yyy^t)\le C$ for all $\yyy^t\in\T^{2t}$. 
\end{mtheorem}

Let us mention that the problem of transport of particles in time-dependent or random vector fields was studied by many authors; see, for example,  the papers~\cite{kraichnan-1970,FP-1994,molchanov-1996,KPS-2013} and the references therein. However, most of these works treat questions that are different from those studied here. To the best of our knowledge, the only exception is the article~\cite{KPS-2013}, which establishes the law of large numbers and central limit theorem for the particle position~$y(t)$ considered in the whole space~$\R^2$ (rather than~$\T^2$). This type of results is  not sufficient to get the convergence of the law of~$y$ to a limiting measure or to study the large deviations for empirical measures. We also mention the recent article~\cite{BBP-2018}, which studies another aspect of chaotic behaviour of fluids---the strict positivity of the top Lyapunov exponent for the dynamics of the Lagrangian particle. The hypotheses imposed in~\cite{BBP-2018} are somewhat different from ours and require the noise to be sufficiently irregular in the space variables. 

The mathematical theory of entropic fluctuations for randomly forced  PDEs is in the beginning of its development. The only two cases for which the  complete program (a)--(f) has been carried out are  the 1D Burgers equation and a nonlinear reaction-diffusion system perturbed by a {\it rough\/} kick noise; see~\cite{JNPS-cmp2015}. However, from the physical  point of view, the roughness hypothesis on the noise  is not always justified, especially in the context of the fluid motion.  Although the  Navier--Stokes system perturbed by a {\it smooth\/} random force satisfies the level-$3$ LDP (see\footnote{For the Navier--Stokes system perturbed by a coloured white noise, the level-$2$ LDP was established in~\cite{nersesyan-2019}.}~\cite{JNPS-cpam2015}), in this case the laws of the forward and backward evolutions are typically singular with respect to each other, and the basic object of the theory of entropic fluctuations---the entropy production in time $t$---is not defined. The present paper bypasses this basic obstruction in a physically and mathematically natural way by focusing on the motion of a particle immersed in the fluid for which we show that all the objects of the theory of entropic fluctuations are well defined. In particular, we establish the level-$3$ LDP and a uniform bound for the mean entropy production in time~$t$. At the same time, the points~(b), (d), (e), and~(f) of the above-mentioned program are yet to be studied. Regarding this last remark, the resolution  of the points~(a) and\footnote{The part of~(c) concerning the relation with the physical notion of transport will be discussed elsewhere.}~(c)  is technically involved and relies on two general results presented in an abstract form in Sections~\ref{s-ldp-control} and~\ref{s-measureimage}. The first of them is the  main novelty of the paper and concerns  a new LDP criterion for randomly forced PDEs. Its proof builds on the results of~\cite{JNPS-cpam2015} and singles out some simple controllability properties that are sufficient for the validity of LDP. This approach makes it possible to treat problems with degenerate noises and is likely to have large scope of applicability, including PDEs studied in~\cite{KNS-2018,shirikyan-jems2019}. In contrast to~(a), the proof of~(c) does not require development of new techniques and is based on a direct  application of a particular case of the  general theory presented in~\cite{bogachev2010}. One may anticipate that a successful  resolution of the remaining points will require developments of new tools that may find applications beyond specific questions dictated by the entropic fluctuations program.

\smallskip
The paper is organised as follows. In Section~\ref{s-MR}, we formulate our main results and describe the scheme of their proof. Section~\ref{s-ldp-control} is devoted to the problem of large deviations. There  we  establish a general criterion for the LDP in terms of certain control properties of the system under study. Section~\ref{s-measureimage} deals with the problem of existence of a density and its positivity for images of probability measures under smooth mappings. In Section~\ref{s-NS}, we study the randomly forced 2D Navier--Stokes system coupled with a Lagrangian  particle. Finally, the Appendix gathers some known results used in the main text. 

\subsubsection*{Acknowledgments}
This research was supported by the \textit{Agence Nationale de la Recherche\/} through the grant NONSTOPS (ANR-17-CE40-0006-01, ANR-17-CE40-0006-02, ANR-17-CE40-0006-03), the CNRS collaboration grant \textit{Fluctuation theorems in sto\-chastic systems\/}, and the \textit{Initiative d'excellence Paris-Seine\/}. VJ acknowledges the support of NSERC. The work of CAP has been carried out in the framework of the Labex Archim\`ede (ANR-11-LABX-0033) and of the A*MIDEX project (ANR-11-IDEX-0001-02), funded by the {\it Investissements d'Avenir\/} French Government programme managed by the French National Research Agency (ANR). AS acknowledges the support of the MME-DII Center of Excellence (ANR-11-LABX-0023-01) and is grateful to  F.~Otto for a discussion on the subject of this paper during the conference \href{https://sites.google.com/site/levico2016/}{\it SPDEs and Applications-X\/} in Trento.

\subsubsection*{Notation}
We write~$\Z^d$ for the integer lattice in~$\R^d$, with the convention $\Z=\Z^1$, and use the notations $\N=\{r\in\Z:r\ge1\}$, $\Z_\pm=\{r\in\Z: \pm r\ge0\}$, $[\![a,b]\!]=[a,b]\cap\Z$, and $\Z^d_*=\Z^d\setminus\{0\}$. We denote by $I\subset\R$ a closed interval, by~$\T^2=\R^2/2\pi\Z^2$ the two-dimensional torus, by~$X$  a Polish space, and by~$\HH$ a separable Banach space. We shall always assume that~$X$ is endowed with the Borel $\sigma$-algebra~$\BB(X)$, and we write~$\MM(X)$ for the space of finite signed measures on~$X$ and~$\PP(X)\subset\MM(X)$ for the simplex of probability measures. We recall the  standard functional spaces of the theory of 2D Navier--Stokes equations, where $s\ge1$ is assumed to be an integer.  

\smallskip
\noindent
$H$ denotes the space of divergence-free vector fields on~$\T^2$ with zero mean value. It is endowed with the usual $L^2$ norm $\|\cdot\|$. 

\smallskip
\noindent
$H^s$ is the usual Sobolev space of~$\R^2$-valued function on~$\T^2$ and $V^s=H^s\cap H$. The corresponding norm will be denoted by~$\|\cdot\|_s$.

\smallskip
\noindent
$L^p(I,\HH)$ stands for the space of Borel-measurable functions $f:I\to\HH$ such that 
$$
\|f\|_{L^p(I,\HH)}=\biggl(\int_I\|f(t)\|_\HH^p\dd t\biggr)^{1/p}<\infty.
$$

\noindent
$C(I,\HH)$ denotes the space of bounded continuous functions $f:I\to \HH$, endowed with the natural norm 
$\|f\|_{C(I,\HH)}=\sup_{t\in I}\|f(t)\|_\HH$. 

\smallskip
\noindent
$\XX_s(I)$ is the space of functions $u\in L^2(I,V^{s+1})$ such that $\p_tu\in L^2(I,V^{s-1})$. 

\medskip
Given a measure~$\mu\in\PP(X)$ and a map~$F(\cdot)$ defined on~$X$, we denote by~$F_*(\mu)$ the image of~$\mu$ under~$F$. If~$F$ depends on an additional parameter~$u$, then we shall write~$F_*(u,\mu)$ to denote the image of~$\mu$ for a fixed value of the parameter. For a function $f:X\to\R$ and a measure~$\mu$ on~$X$, we write $\langle f,\mu\rangle$ for the integral of~$f$ against~$\mu$. We shall also use the following notation for spaces of functions and measures. 

\smallskip
\noindent
$L^\infty(X)$ is the space of bounded measurable functions $f:X\to\R$ with the supremum norm~$\|\cdot\|_\infty$. 

\smallskip
\noindent
$C_b(X)$ is the space of bounded continuous functions $f:X\to\R$ endowed with the norm~$\|\cdot\|_\infty$. For a compact space~$X$, we shall simply write~$C(X)$. 

\smallskip
\noindent
$L_b(X)$ is the space of Lipschitz continuous functions $f\in C_b(X)$ with the norm
$$
\|f\|_L=\|f\|_\infty+\sup_{u\ne v}\frac{|f(u)-f(v)|}{d_X(u,v)}.
$$
$C_b(X,\HH)$ and $L_b(X,\HH)$ are defined in a similar way.

\smallskip
\noindent
$\MM(X)$ is endowed with the weak$^*$ topology which is generated by the functionals $\mu\mapsto \langle f,\mu\rangle$ with $f\in C_b(X)$. The restriction of this topology to~$\PP(X)$ can be metrised by the {\it dual-Lipschitz distance\/} defined as 
$$
\|\mu-\nu\|_L^*=\sup_{\|f\|_L\le 1}\bigl|\langle f,\mu\rangle-\langle f,\nu\rangle\bigr|.
$$
For two measures $\mu,\nu\in\PP(X)$, we denote by $\Ent(\mu\,|\,\nu)$ the relative entropy of~$\mu$ with respect to~$\nu$:
$$
\Ent(\mu\,|\,\nu)=\sup_{V\in C_b(X)}\bigl(\langle V,\mu\rangle-\log\langle e^V,\nu\rangle\bigr)=\int_X\log\frac{\dd\mu}{\dd\nu}\dd\mu,
$$
where the second relation holds if~$\mu$ is absolutely continuous with respect to~$\nu$. 

\section{Main results}
\label{s-MR}

\subsection{Formulations}
\label{ss-formulation}

\subsubsection*{Setting of the problem and preliminaries}
We consider the motion of a particle in a random time-dependent vector field defined by the 2D Navier--Stokes system. More precisely, we study the Cauchy problem~\eqref{ode-fluidparticle}, \eqref{ode-IC}, in which $u=(u_1,u_2)$ is a solution of the system of equations
\begin{equation} \label{NS}
	\p_t u+\langle u,\nabla\rangle u-\nu\Delta u+\nabla \pi=\eta(t,x), \quad \diver u=0,\quad x\in\T^2, 
\end{equation}
supplemented with the initial condition
\begin{equation} \label{NS-IC}
	u(0,x)=u_0(x). 
\end{equation}
Here $\pi=\pi(t,x)$ is the pressure of the fluid, $\nu>0$  the kinematic viscosity, $u_0$ is a square-integrable divergence-free vector field on the torus, and~$\eta$ is a random process of the form
\begin{equation} \label{eta}
	\eta(t,x)=\sum_{k=1}^\infty \eta_k(t-k+1,x)\I_k(t),
\end{equation}
where $\I_k$ is the indicator function of the interval $[k-1,k)$, and~$\{\eta_k\}$ is a sequence of i.i.d.\ random variables in $L^2([0,1]\times\T^2)$. To simplify the formulas, we assume (which can be done without loss of generality) that~$\eta_k$'s are divergence-free. To ensure the boundedness of the energy of solutions for $t\ge0$, we require all the functions to have zero mean value with respect to~$x$. 

Our aim is to study the large-time asymptotics of the pair $(u,y)$. Recall that the scale of spaces~$V^s$ is defined at the end of the Introduction. To ensure the existence of the dynamics for~$y$, we assume that $\eta_k\in L^2(J,V^2)$ almost surely, where $J=[0,1]$. In this case, almost every trajectory of~\eqref{NS} with an initial condition $u_0\in V^3$ belongs to the space $C(\R_+,V^3)$, and it follows that the Cauchy problem~\eqref{ode-fluidparticle}, \eqref{ode-IC} has a unique solution $y\in C(\R_+,\T^2)$ for any initial point $p\in\T^2$. We shall write
\begin{equation} \label{curve}
\varUpsilon(t)=\bigl(u(t),y(t)\bigr), \quad t\ge0,
\end{equation}
for the coupled trajectory and consider it as a continuous curve in $V^3\times \T^2$. Under the hypotheses imposed on~$\eta$, the family of trajectories $\{\varUpsilon(t)\}$ corresponding to all possible initial conditions does not form a Markov process. However, their restriction to integer times does, and our goal is to study the large-time  behaviour of the discrete-time process $\varUpsilon_k=\varUpsilon(k)$, $k\in\Z_+$. 

We now describe the class of random forces~$\eta_k$ we deal with. Denote by~$\{e_j\}_{j\in\Z_*^2}$ the $L^2$ normalised trigonometric basis in the space of divergence-free functions with zero mean value:
\begin{equation} \label{trigonometric-basis}
e_j(x)=E_j^{-1} j^\bot\left\{
\begin{aligned}
	 \cos\langle j,x\rangle & \quad \mbox{for $j_1>0$ or $j_1=0$, $j_2>0$}, \\[2pt]
	 \sin\langle j,x\rangle & \quad\mbox{for $j_1<0$ or $j_1=0$, $j_2<0$},
\end{aligned}
\right.
\end{equation}
where $j^\bot=(-j_2, j_1)$ and $E_j=\sqrt2\pi|j|$ (so that $\|e_j\|=1$ for any $j\in\Z_*^2$). Note that~$\{e_j\}$ is an orthogonal basis in any of the spaces~$V^s$ with respect to the inner product $(u,v)_s=(u,(-\Delta)^sv)$. Furthermore, setting $J=[0,1]$,  we fix an orthonormal basis~$\{\psi_l\}_{l\ge1}$ in the space $L^2(J)$ that satisfies the following {\it  Poincar\'e property\/}: there are positive numbers~$C_r$ and $\theta$ such that
\begin{equation} \label{poincareineq}
\|{\mathsf Q}_Ng\|_{L^2(J)}\le C_rN^{-\theta r}\|g\|_{H^r(J)}\quad \mbox{for $g\in H^r(J)$, $N\ge1$},
\end{equation}
where $r\ge1$ is an arbitrary integer, and~${\mathsf Q}_N$ denotes the orthogonal projection in~$L^2(J)$ onto the closed subspace spanned by~$\psi_l$, $l\ge N$. For instance, the trigonometric basis $\{e^{2\pi i\,lt}\}_{l\in\Z}$ satisfies Poincar\'e property with $\theta=1$. We now formulate our hypothesis on the noise~$\eta_k$. 

\begin{itemize}
	\item [\hypertarget{(N)}{\bf(N)}]
	{\sl The random variables~$\eta_k$ can be written as 
	\begin{equation} \label{eta-k}
		\eta_k(t,x)=\sum_{j\in\Z_*^2}\sum_{l\ge1}b_jc_l\xi_{lj}^k\psi_l(t)e_j(x),
	\end{equation}
	where $\xi_{lj}^k$ are independent scalar random variables. Moreover, the law of~$\xi_{lj}^k$ possesses an infinitely smooth density $\rho_{lj}$ with support in  the interval~$[-1,1]$ such that, for some  $\delta>0$ and  all~$j,l$, $\rho_{lj}(r)>0$ for $|r|<\delta$. Finally, there are positive numbers~$C_m, c$, and $\beta>1/2$ such that
	\begin{align}
0<|b_j|&\le C_m|j|^{-m}\quad
\mbox{for all $m\ge1$},\label{b-j}\\
|c_l|&\ge c\,l^{-\beta}\quad\mbox{for all $l\ge1$}, \quad \sum_{l\ge1}c_l^2<\infty.\label{c-l}
	\end{align}}
\end{itemize}
Note that if this hypothesis is satisfied, then almost every realisation of~$\eta_k$ belongs to $L^2(J,V^s)$ for any $s\ge1$. It follows that, with probability~$1$, the restriction to~$J_k=[k-1,k]$ of the solution~$u$ for the Navier--Stokes system~\eqref{NS} with $C^\infty$ initial condition belongs to $C(J_k,V^s)$ for any $s\ge1$. Therefore, the resolving operator for the Cauchy problem~\eqref{ode-fluidparticle}, \eqref{ode-IC} can be made as smooth as we wish by choosing~$s$ sufficiently large. 

\subsubsection*{Large deviations for empirical measures}
Given an interval $I\subset\R$, we define the spaces 
$$
\XX_s(I)=\bigl\{u\in L^2(I,V^{s+1}):\p_tu\in L^2(I,V^{s-1})\bigr\}, \quad \YY(I)=C(I,\T^2),
$$ 
where $s\ge1$ is an integer, and note that~$\XX_s(I)$ is continuously embedded into $C(I,V^s)$. In the case $I=[0,1]$, we often write $\XX_s$ and~$\YY$, respectively. For any integer $s\ge3$ we denote by 
$$
\bfS:V^s\times\T^2\times L^2([0,1],V^{s})\to \XX_s\times\YY, \quad (u_0,p,\eta)\mapsto(u,y),
$$ 
the resolving operator of the  set of equations~\eqref{NS}, \eqref{ode-fluidparticle}, \eqref{NS-IC}, \eqref{ode-IC}. It is well known that, if $s\ge3$, then~$\bfS$  is $(s-2)$-times\footnote{\label{fn-regularity}The index $s-2$ comes from the fact that $u\in \XX_s$ is a continuous function of time with range in $C^{r}(\T^2)$ for any $r<s-1$, and standard results from the theory of ODEs can ensure only the existence of $s-2$ continuous derivatives for~$\bfS^y$.} continuously differentiable in the Fr\'echet sense. We denote by $S(u_0,p,\eta)$  the value of~$\bfS(u_0,p,\eta)$ at $t=1$. Note that $S$ is a map with range in~$V^s\times \T^2$. We  write $\bfS=(\bfS^u,\bfS^y)$ and $S=(S^u,S^y)$, with a natural definition of the $u$- and~$y$-components. 

Our first result deals with the level-$3$ LDP for trajectories issued from an initial point belonging to  the domain of attainability from~$\{0\}\times\T^2$ (which is also the support of the unique stationary distribution for~\eqref{NS}, \eqref{ode-fluidparticle}; see the next subsection on the regularity of laws). Namely, for a fixed $s\ge3$, let~$\KK^s\subset L^2([0,1],V^{s})$ be the support of the law of~$\eta_k$. We define the sets
\begin{equation} \label{attdomain}
	\aA_0^s=\{0\}, \quad \aA_k^s=S^u(\aA_{k-1}^s,\KK^s), \quad k\ge1,
\end{equation}
and denote by~$\aA^s$ the closure of the union $\cup_{k\ge0}\aA_k^s$ in the space~$V^s$. The following lemma is easy to establish, and we omit its proof. 

\begin{lemma} \label{l-attdomain}
	Let Hypothesis~\hyperlink{(N)}{\rm(N)} be satisfied. Then the following properties hold for any integer $s\ge3$. 
	\begin{description}
		\item [\sc{Compactness}.] The set~$\aA^s$ is compact in~$V^s$ and contains the point~$0$. 
		\item [\sc{Compatibility}.] If $r>s$ is another integer, then~$\aA^s$ is the closure of~$\aA^r$ in~$V^s$. 
		\item [\sc{Invariance}.] The set~$\XXXX^s:=\aA^s\times\T^2$ is invariant, that is, $S(\XXXX^s,\KK^s)\subset\XXXX^s$. 
	\end{description}
\end{lemma}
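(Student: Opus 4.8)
The three properties are of a rather elementary nature, and the plan is to deduce them from standard well-posedness and smoothing facts for the 2D Navier--Stokes system together with the structure of the noise space~$\KK^s$. First recall that, under Hypothesis~\hyperlink{(N)}{\rm(N)}, each realisation of~$\eta_k$ lies in $L^2(J,V^s)$ for every~$s$, and $\KK^s$ is a compact subset of $L^2([0,1],V^{s})$: indeed, by~\eqref{eta-k} it is contained in the set of functions $\sum_{j,l}b_jc_l r_{lj}\psi_l(t)e_j(x)$ with $|r_{lj}|\le1$, and the rapid decay~\eqref{b-j} of the~$b_j$ in~$j$ (for every~$m$) together with $\sum_l c_l^2<\infty$ from~\eqref{c-l} shows that this set is a compact ``infinite-dimensional box'' in $L^2([0,1],V^{s})$ for each fixed~$s$; its compatibility across different~$s$ is immediate from the same formula. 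This is the input I would isolate at the start.

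For \textsc{Compactness}, the idea is that $\aA^s=\overline{\bigcup_{k\ge0}\aA^s_k}$ is in fact already bounded at the level of one step: by the standard energy estimate for~\eqref{NS}, the time-$1$ map $S^u(\cdot,\cdot)$ sends any ball in~$V^s$ times the compact set~$\KK^s$ into a fixed bounded set of~$V^s$, because the viscous term gives a uniform contraction of the low-regularity norm over a unit time interval and the parabolic smoothing upgrades this to a bound in~$V^s$ depending only on~$\|\eta\|_{L^2(J,V^{s-1})}$, which is uniformly controlled on~$\KK^s$. Hence all the sets~$\aA^s_k$ for $k\ge1$ lie in one fixed ball $B_R\subset V^s$; applying $S^u(\cdot,\KK^s)$ once more and using that $S^u:B_R\times\KK^s\to V^s$ is continuous with relatively compact image (again by one extra degree of parabolic smoothing, $V^s\hookrightarrow V^{s-\e}$ being compact, or more directly by the Aubin--Lions-type compactness built into~$\XX_s$), one gets that $\bigcup_{k\ge1}\aA^s_k$ is relatively compact, and adjoining the point~$0=\aA^s_0$ and taking closure yields a compact set containing~$0$.

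\textsc{Compatibility} follows by the same mechanism run at two regularity levels: if $r>s$, then $\aA^r_k\subset\aA^s_k$ trivially (the same trajectories, measured in a weaker norm), so $\overline{\aA^r}^{V^s}\subset\aA^s$; conversely, since one step of the dynamics with forcing in~$\KK^r$ already lands in~$V^r$, we have $\aA^s_k=\overline{\aA^r_k}$ in... more carefully, $\aA^s_{k}$ is obtained from $\aA^s_{k-1}$ by $S^u(\cdot,\KK^s)$, but $\KK^s$ is the $V^{s}$-closure of $\KK^r$ and $S^u$ is continuous, so an induction on~$k$ together with continuity of the time-$1$ map gives $\aA^s_k\subset\overline{\aA^r_k}^{V^s}$, and passing to the union and closure proves the two inclusions. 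Finally, \textsc{Invariance} is immediate from the definitions: $S^u(\aA^s_{k-1},\KK^s)=\aA^s_k\subset\aA^s$, and continuity of $S^u$ extends this to the closure, $S^u(\aA^s,\KK^s)\subset\aA^s$; since $S^y$ takes values in~$\T^2$ automatically, we get $S(\aA^s\times\T^2,\KK^s)\subset\aA^s\times\T^2=\XXXX^s$.

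The only genuinely non-routine point is the first one—ensuring that a single application of the time-$1$ resolving operator, uniformly over the compact noise set~$\KK^s$, produces a relatively compact image in~$V^s$. This is where one really uses parabolic regularisation of the Navier--Stokes flow on a unit time window; everything else is bookkeeping with continuity of~$\bfS$ and the already-noted compactness of~$\KK^s$. That is presumably why the authors say the lemma is easy and omit the proof.
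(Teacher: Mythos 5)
The paper deliberately omits a proof of Lemma~\ref{l-attdomain} (``easy to establish, and we omit its proof''), so there is no paper argument to compare against; what one can assess is whether your plan closes the gap, and it essentially does, with the natural ingredients: compactness of the noise support~$\KK^s$ (an ``infinite-dimensional box'' thanks to~\eqref{b-j}--\eqref{c-l}), dissipativity of the Navier--Stokes flow in~$H$ to produce an absorbing ball, parabolic smoothing to upgrade to~$V^s$ and then to~$V^{s+1}$ for pre-compactness, and continuity of~$S^u$ for invariance.

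Two small imprecisions are worth flagging. First, the claim that ``the time-$1$ map $S^u(\cdot,\cdot)$ sends \emph{any} ball in~$V^s$ times~$\KK^s$ into a \emph{fixed} bounded set of~$V^s$'' is not literally correct: the one-step $V^s$-bound coming from smoothing depends on the low-regularity norm of the initial condition, and a unit time interval does not make that norm uniformly small. What is true, and is what your ``uniform contraction of the low-regularity norm'' remark is really invoking, is that the dissipative structure gives an absorbing ball in~$H$ which the orbit of~$\{0\}$ never leaves, so $\bigcup_k\aA^s_k$ is bounded in~$H$; one smoothing step then bounds $\bigcup_{k\ge1}\aA^s_k$ in~$V^s$, a second bounds $\bigcup_{k\ge2}\aA^s_k$ in~$V^{s+1}$, and the compact embedding $V^{s+1}\hookrightarrow V^s$ (not $V^s\hookrightarrow V^{s-\e}$, which points the wrong way) finishes the compactness argument. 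Second, for \textsc{Compatibility} the argument becomes cleaner once you observe that $\KK^r$ and~$\KK^s$ are the \emph{same set} for $r>s$: the law of $\eta_k$ is concentrated on a $V^r$-compact (hence also $V^s$-closed) set, which forces $\KK^s\subset\KK^r$, while the reverse inclusion is immediate because $V^r$-balls are contained in $V^s$-balls. Consequently $\aA^r_k=\aA^s_k$ as subsets of~$V^s$ for every~$k$, and the two closures relate exactly as claimed; your induction on~$k$ with continuity of~$S^u$ reaches the same conclusion, just less directly. The \textsc{Invariance} part is fine as written.
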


We now introduce the {\it empirical measures\/} for~\eqref{NS}, \eqref{ode-fluidparticle} by the formula 
\begin{equation} \label{empiricalmeasureMP}
	\nnu_t^\varUpsilon=t^{-1}\sum_{n=0}^{t-1}\delta_{\UUpsilon_n}, \quad t\ge1,
\end{equation}
where $\varUpsilon=(u_0,p)$ is an initial point, $\UUpsilon_n=(\varUpsilon_k,k\ge n)$, and~$\varUpsilon_k$ is the value of the solution of~\eqref{NS}, \eqref{ode-fluidparticle}, \eqref{NS-IC}, \eqref{ode-IC} at $t=k$. Setting $\XXX^s=(\XXXX^s)^{\Z_+}$, it is straightforward to see that if~$\varUpsilon\in\XXXX^s$, then $\UUpsilon_n\in\XXX^s$ for any $n\ge0$. The following theorem uses standard  notions of the theory of large deviations.\footnote{For their definitions we refer the reader to  Section~\ref{ss-result}.}

\begin{theorem} \label{t-ldp}
	Let Hypothesis~\hyperlink{(N)}{\rm(N)} be fulfilled and let $s\ge3$ be an integer. Then the family of empirical measures~$\{\nnu_t^\varUpsilon,\varUpsilon\in\XXXX^s\}_{t\ge1}$ satisfies the uniform LDP with some good rate function $\III^s:\PP(\XXX^s)\to[0,+\infty]$. Moreover, $\III^s$ is an affine function on~$\PP(\XXX^s)$ given by the Donsker--Varadhan entropy formula. 
\end{theorem}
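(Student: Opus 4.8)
\emph{Strategy of the proof.} The plan is to deduce the theorem from the general level-$3$ LDP criterion established in Section~\ref{s-ldp-control}. Restricting the coupled curve~\eqref{curve} to integer times makes $\UUpsilon_k=\varUpsilon(k)$ a discrete-time Markov process on the state space $\XXXX^s=\aA^s\times\T^2$, which by Lemma~\ref{l-attdomain} is compact and invariant under its one-step transition kernel $P$, the latter being the push-forward of the law of $\eta_k$ under the map $\eta\mapsto S(\varUpsilon,\eta)$. The empirical measures $\nnu_t^\varUpsilon$ are precisely the level-$3$ empirical measures of this process on $\XXX^s=(\XXXX^s)^{\Z_+}$, and since the phase space is compact, exponential tightness in $\PP(\XXX^s)$ holds automatically. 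The abstract criterion then reduces the claim --- including the uniformity over $\varUpsilon\in\XXXX^s$, the affineness of $\III^s$, and its Donsker--Varadhan form --- to checking, for the coupled system, (i) a uniform regularity (Feller-type) property of $P$, and (ii) suitable controllability properties: global approximate controllability to a reference state together with a robust (``solid'') controllability near it.

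Property (i) follows from the $C^{s-2}$ Fr\'echet differentiability of the resolving operator $\bfS$ combined with the continuity, in the dual-Lipschitz metric, of the law of $\eta_k$; these give the required equicontinuity of $\varUpsilon\mapsto\langle g,P(\varUpsilon,\cdot)\rangle=\E\,g\bigl(S(\varUpsilon,\eta_k)\bigr)$ on the compact set $\XXXX^s$. The controllability of the velocity component is by now classical in this context: since $b_j\neq0$ for every $j\in\Z_*^2$, the support $\KK^s$ of the law of $\eta_k$ is non-degenerate in each Fourier mode, and combining the dissipativity of the Navier--Stokes flow (which drives $u$ towards $0$ under zero forcing) with the controllability of the linearisation near $0$ yields approximate controllability of $\bfS^u$ from any point of $\aA^s$ into an arbitrarily small neighbourhood of $0$ in $V^s$, as well as solid controllability onto finite-dimensional sections of $\aA^s$.

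The new ingredient, and the main obstacle, is the controllability of the Lagrangian component: one must show that by choosing the force $\eta$ inside $\KK^s$ the time-$1$ flow map $\varphi_1=S^y(u_0,\cdot,\eta)$ can be made to carry a prescribed initial position $p\in\T^2$ into a neighbourhood of an arbitrary target $q\in\T^2$, uniformly in the initial velocity $u_0\in\aA^s$ and stably enough under perturbations of $\eta$ to survive the lower-bound argument. The difficulty is that the control enters only through the velocity field, so this is an \emph{indirect} controllability statement. I would argue as follows. A velocity field of size $\e$ displaces the particle by $O(\e)$ per unit time, so finitely many unit-time steps suffice to realise any displacement on the compact torus while keeping $u$ uniformly small: at each step one uses the velocity controllability to make $u|_{[k-1,k]}$ track a small fixed divergence-free trigonometric field --- or a short list of such fields whose flows act transitively on $\T^2$ --- the tracking error being contracted over the unit interval by the parabolic dissipation. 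Prepending zero-forcing steps, which send $u$ close to $0$, and arranging the particle to be parked near the target at the end, gives the global approximate controllability in (ii); the solid controllability then comes from the smooth and $u_0$-uniform dependence of $\varphi_1$ on $\eta$, the derivative of $\varphi_1$ with respect to $\eta$ spanning the tangent plane of $\T^2$ at the target along the constructed trajectory.

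Granting (i) and (ii), the criterion of Section~\ref{s-ldp-control} delivers the uniform level-$3$ LDP for $\{\nnu_t^\varUpsilon\}$ with a good rate function $\III^s$ that is affine and given by the Donsker--Varadhan entropy formula (a relative-entropy rate of $\llambda$ with respect to the Markov dynamics, equal to $+\infty$ off the shift-invariant measures), these last two features being general consequences of the level-$3$ Donsker--Varadhan theory built into the abstract statement. The substance of the proof, carried out in Section~\ref{s-NS}, is therefore the controllability analysis of the coupled Navier--Stokes--particle system: realising the indirect particle control with the $V^s$-regularity ($s\ge3$) and the uniformity in the initial data required by the criterion, and merging it with the velocity control into a single admissible control on the pair.
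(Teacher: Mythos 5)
Your proposal follows essentially the same route as the paper: apply the abstract level-$3$ LDP criterion of Section~\ref{s-ldp-control} (Theorem~\ref{t-uniformLDP}) to the discrete-time Markov process~\eqref{RDS} on the compact invariant set~$\XXXX^s$, and reduce the work to verifying the regularity, decomposability, and two controllability hypotheses; you also correctly identify the main new ingredient as the indirect (through the velocity field) controllability of the Lagrangian particle and sketch the right mechanism --- using the control to make $u$ track small explicit divergence-free fields whose flow transports the particle, with zero-forcing steps to damp $u$ and a chaining of short moves. The constructions you allude to (a short list of trigonometric fields, tracking over a unit interval) match in spirit the paper's explicit choice $u(t,x)=\varphi_1(t)U_1(x-\gamma(t))+\varphi_2(t)U_2(x-\gamma(t))$, $y(t)=\gamma(t)$, with $U_1=(\cos x_2,0)$, $U_2=(0,\cos x_1)$.

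There is, however, one genuine gap in the verification of Hypothesis~\hyperlink{(ACL)}{\rm(ACL)}. You only argue that the derivative of the time-$1$ particle map $\varphi_1$ with respect to $\eta$ spans $T_y\T^2$, treating the velocity part as ``classical'' and handling the two components separately. But Theorem~\ref{t-uniformLDP} requires the image of $(D_\eta S)(\varUpsilon,\eta)$ to be dense in the \emph{product} $V^s\times T_y\T^2$: one must exhibit a single control increment $\zeta\in L^2(J,V^s)$ such that the linearised velocity $v(1)$ is close to an arbitrary prescribed $\hat v\in V^s$ \emph{and simultaneously} $z(1)$ is close to an arbitrary $\hat q\in\R^2$. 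Knowing separate surjectivity of each component does not yield joint density, because the controls that move the particle also perturb the velocity. The paper resolves this with an interpolation device: $v_\delta$ is chosen to realise the prescribed particle displacement on $[0,1-\delta]$ and to relax to $\hat v$ on $[1-\delta,1]$, with $\delta\to0$ showing that the induced error in $z(1)$ vanishes (see the estimates~\eqref{supvDu}--\eqref{supzdelta}). Your proposal needs an analogous simultaneous construction; without it the coupling/local-stabilisation argument underlying the uniform Feller property does not go through. A related, smaller point: the criterion's controllability hypothesis is \hyperlink{(ACL)}{\rm(ACL)} (dense image of the linearisation), not ``solid controllability'' in the Agrachev--Sarychev sense, and it is this dense-image property, together with \hyperlink{(R)}{\rm(R)}, that produces the squeezing estimate used in the proof of Proposition~\ref{p-asymptoticsFK}.
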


\subsubsection*{Regularity of laws for the particle and convergence}

We now focus on  the law of the particle in more detail. Note that, for any $s\ge3$, the compact invariant set~$\XXXX^s$ carries a stationary measure for the Markov process associated with~\eqref{NS}, \eqref{ode-fluidparticle}. More precisely, if~\eqref{curve} is a trajectory for~\eqref{NS}, \eqref{ode-fluidparticle}, then the vector functions $\varUpsilon_k=\varUpsilon(k)$ satisfy the relations
\begin{equation} \label{RDS}
\varUpsilon_k=S(\varUpsilon_{k-1},\eta_k), \quad k\ge1. 
\end{equation}
Since~$\{\eta_k\}$ is a sequence of i.i.d.\ random variables, Eq.~\eqref{RDS} defines a discrete-time homogeneous Markov process in~$V^s\times\T^2$ whose transition function has the form
\begin{equation} \label{transitionfunction}
	P_1(\varUpsilon,\cdot)=S_*(\varUpsilon,\ell),
\end{equation}
where~$\ell$ stands for the law of~$\eta_k$, and the right-hand side denotes the image of~$\ell$ under the mapping $\zeta\mapsto S(\varUpsilon,\zeta)$. By Lemma~\ref{l-attdomain}, the set~$\XXXX^s$ is invariant in the sense that $P_1(\varUpsilon,\XXXX^s)=1$ for any $\varUpsilon\in\XXXX^s$. In what follows we consider the restriction of the Markov process defined by~\eqref{RDS} to~$\XXXX^s$ and denote by~$\PPPP_k$ and~$\PPPP_k^*$ the corresponding Markov operators acting on the spaces~$C(\XXXX^s)$ and~$\PP(\XXXX^s)$, respectively. Since~$\XXXX^s$ is compact, there is at least one stationary measure~$\MMMM\in\PP(\XXXX^s)$. Applying Theorem~\ref{t-expomixing}, one can prove  that~$\MMMM$ is the unique stationary measure for~\eqref{RDS}. Let us note that the uniqueness of a stationary distribution was proved in~\cite{BBP-2018} for the coupled system~\eqref{NS}, \eqref{ode-fluidparticle} with a coloured white noise~$\eta$; however, their approach is not applicable in our situation since it is based on the strong Feller property and requires the noise to be rough in the space variables.

A simple argument based on the uniqueness of the stationary measure proves that~$\MMMM$ is independent of~$s$. Moreover, another  short computation shows  that\footnote{Note, however, that this product structure is not preserved on the level of path measures.} $\MMMM=\mu\otimes\lambda$, where~$\mu$ is the unique stationary measure for~\eqref{NS} and~$\lambda$ is the normalised Lebesgue measure on~$\T^2$. We shall denote by~$\boldsymbol{\MMMM}\in \PP(\XXX^s)$ the corresponding path measure and by~$\mmu\in\PP(\boldsymbol{\aA})$ and~$\llambda\in\PP(\TTT)$ its projections to the $u$- and $y$-components, where $\boldsymbol{\aA}=\aA^{\Z_+}$ and $\TTT=(\T^2)^{\Z_+}$. Similarly, given an initial point $\varUpsilon\in\XXXX^s$, we shall denote by $\boldsymbol{\MMMM}^\varUpsilon\in \PP(\XXX^s)$ the path measure of the trajectory for~\eqref{NS}, \eqref{ode-fluidparticle} issued from~$\varUpsilon$, by~$\MMMM_t^\varUpsilon\in\PP(\XXXX^s)$ its projection the $t^\text{th}$ component, and by $\mmu^\varUpsilon\in\PP(\boldsymbol{\aA})$ and~$\llambda^\varUpsilon\in\PP(\TTT)$ its projections to the $u$- and~$y$-components, respectively. Finally, given an integer interval $I\subset\Z_+$, we denote by~$\llambda_I^\varUpsilon\in\PP(\T^{2|I|})$ the projection of~$\llambda^\varUpsilon$ to~$I$ and define~$\llambda_I$ similarly. We shall write $\llambda_t^\varUpsilon$ and~$\llambda_t$ for $I=[\![1,t]\!]$.

\begin{theorem} \label{t-particle}
Suppose that Hypothesis~\hyperlink{(N)}{\rm(N)} is satisfied. Then the following holds for any integer $t\ge2$. 
\begin{description}
	\item [\sc{Regularity}.] 
	For any $\varUpsilon\in\XXXX^3$, the measure $\llambda_{[\![2,t]\!]}^\varUpsilon$ has a density $\rho_{[\![2,t]\!]}^\varUpsilon$ that belongs to~$C^\infty(\T^{2(t-1)})$, and the function $\varUpsilon\mapsto\rho_{[\![2,t]\!]}^\varUpsilon$ is Lipschitz continuous from~$\XXXX^3$ to~$C^k(\T^{2(t-1)})$ for any $k\ge1$. Moreover, the measure~$\llambda_t$ has a density $\rho_t\in C^\infty(\T^{2t})$.
	\item [\sc{Convergence}.]
	There is $\gamma>0$ such that, for any integer $k\ge1$, we have
	\begin{equation} \label{convergencedensity}
		\sup_{\varUpsilon\in\XXXX^3}\bigl\|\rho_{[\![n+1,n+t]\!]}^\varUpsilon-\rho_t\bigr\|_{C^k(\T^{2t})}\le C_{tk}e^{-\gamma n},\quad n\ge1,
	\end{equation}
	where the constant $C_{tk}>0$ does not depend on~$n$. 
\end{description}
\end{theorem}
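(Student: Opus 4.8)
The plan is to derive the \textsc{Regularity} assertion from the abstract result on images of measures under smooth maps established in Section~\ref{s-measureimage}, and then to deduce the existence of $\rho_t$ and the \textsc{Convergence} assertion by combining that regularity with the exponential mixing of the Markov process~\eqref{RDS}.

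\smallskip
Fix $\varUpsilon=(u_0,p)\in\XXXX^3$ and an integer $k\ge1$. Since the measure $\llambda_{[\![2,t]\!]}^\varUpsilon$ is independent of $s$ (\textsc{Compatibility} in Lemma~\ref{l-attdomain}), we may take $s=s(k)$ as large as needed. Conditioning the $\varUpsilon$-issued path on the first noise $\eta_1$ --- equivalently, on $\varUpsilon_1=(u_1,y_1)$ with $u_1=S^u(u_0,\eta_1)$ and $y_1=S^y(u_0,p,\eta_1)$ --- we have $(y_2,\dots,y_t)=\Psi^{\varUpsilon_1}(\eta_2,\dots,\eta_t)$, where $y_m=S^y(u_{m-1},y_{m-1},\eta_m)$ and $u_m=S^u(u_{m-1},\eta_m)$ for $m\ge2$. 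In the coordinates $\xi_{lj}^m$ the law of $(\eta_2,\dots,\eta_t)$ is a product of one-dimensional measures with $C^\infty$ compactly supported densities, hence differentiable of all orders along every coordinate direction. By the parabolic smoothing of the 2D Navier--Stokes equation together with the boundedness of the noise support $\KK^s$ (which follows from~\eqref{b-j}--\eqref{c-l} and $\xi_{lj}^m\in[-1,1]$), the velocity $u_1$ lies, for almost every $\eta_1$ and uniformly over $\varUpsilon\in\XXXX^3$, in a fixed ball of $V^s$; consequently $\Psi^{\varUpsilon_1}$ is $(s-2)$ times continuously differentiable with uniformly bounded derivatives. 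This is precisely why the statement is about $\llambda_{[\![2,t]\!]}^\varUpsilon$ and not $\llambda_{[\![1,t]\!]}^\varUpsilon$: the map $(\eta_1,\eta_2,\dots)\mapsto(y_1,y_2,\dots)$ is only $C^1$, since $u_0\in\aA^3$ is merely $V^3$-regular. Granting the non-degeneracy discussed below, the abstract result of Section~\ref{s-measureimage} then yields that the image of the law of $(\eta_2,\dots,\eta_t)$ under $\Psi^{\varUpsilon_1}$ has a density in $C^k(\T^{2(t-1)})$, with $C^k$-norm and Lipschitz modulus --- in the frozen data, and a fortiori in $\eta_1$ --- controlled by the above derivative bounds and by a lower bound on the covariance matrix of the map. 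Averaging over $\eta_1$ and using that $\varUpsilon\mapsto\varUpsilon_1$ is Lipschitz gives $\rho_{[\![2,t]\!]}^\varUpsilon\in C^\infty(\T^{2(t-1)})$ together with its Lipschitz dependence on $\varUpsilon\in\XXXX^3$ in every $C^k$.

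\smallskip
The crux is the non-degeneracy: $D\Psi^{\varUpsilon_1}$ must be onto $\R^{2(t-1)}$ with a lower bound on its smallest singular value uniform over $\varUpsilon\in\XXXX^3$ and over the noise, equivalently, the covariance matrix of $\Psi^{\varUpsilon_1}$ must be uniformly positive definite. Because $y_m$ depends only on $(\eta_2,\dots,\eta_m)$, the operator $D\Psi^{\varUpsilon_1}$ is block lower-triangular in the blocks indexed by $m=2,\dots,t$, so it suffices to prove surjectivity, with a uniform bound, of each diagonal block --- the derivative of $\eta_m\mapsto y_m$ for frozen $(u_{m-1},y_{m-1})$. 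One has
\begin{equation*}
	D_{\eta_m}y_m[\zeta]=\int_{m-1}^{m}R(m,\tau)\,\delta u_\zeta(\tau,y(\tau))\,\dd\tau ,
\end{equation*}
where $\delta u_\zeta$ solves the 2D Navier--Stokes equation linearised along the trajectory $u(\cdot)$ with source $\zeta$ and zero initial data, and $R(\cdot,\cdot)$ is the fundamental matrix of the linear ODE $\dot w=(\nabla_xu)(t,y(t))\,w$. By Hypothesis~\hyperlink{(N)}{\rm(N)} (all $b_j,c_l\ne0$) the admissible sources $\zeta$ span a dense subspace of $L^2([m-1,m],V^s)$; testing against sources concentrated near time $m$ at a single spatial frequency $j$ and letting the concentration window shrink, one finds that $e_j(y_m)$ lies, up to a nonzero scalar, in the range of $D_{\eta_m}y_m$ for every $j\in\Z_*^2$. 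Since this range is a linear subspace of $\R^2$ and $\lspan\{e_j(y_m):j\in\Z_*^2\}=\R^2$, the map $D_{\eta_m}y_m$ is onto $\R^2$; the uniformity follows because its smallest singular value is a positive continuous function of the data $(u_{m-1},y_{m-1},\eta_m)$ on a compact set. I expect this controllability-type step --- tracking how a perturbation of the force propagates through the nonlinear fluid equation to the Lagrangian particle and quantifying the resulting non-degeneracy for use in the abstract theorem --- to be the main difficulty of the proof.

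\smallskip
For the existence of $\rho_t$ and the \textsc{Convergence} assertion, write $\MMMM=\mu\otimes\lambda$ and use the stationarity of the path measure $\boldsymbol{\MMMM}$ together with the Markov property. Conditioning on the state at time $0$ gives $\rho_t=\bigl\langle\rho_{[\![2,t+1]\!]}^{(\cdot)},\MMMM\bigr\rangle$ (here $\llambda_t=\llambda_{[\![2,t+1]\!]}$ by stationarity), and conditioning a $\varUpsilon$-issued path on the state at time $n-1$ (which lies in $\XXXX^3$ by \textsc{Invariance} in Lemma~\ref{l-attdomain}) gives, for $n\ge1$,
\begin{equation*}
	\rho_{[\![n+1,n+t]\!]}^\varUpsilon=\bigl\langle\rho_{[\![2,t+1]\!]}^{(\cdot)},\MMMM_{n-1}^\varUpsilon\bigr\rangle .
\end{equation*}
By the \textsc{Regularity} part applied with $t+1$ in place of $t$, the map $\varUpsilon'\mapsto\rho_{[\![2,t+1]\!]}^{\varUpsilon'}$ is Lipschitz from $\XXXX^3$ to $C^k(\T^{2t})$ for every $k$; as $\XXXX^3$ is compact, it is bounded in every $C^k$, so $\rho_t$ is a Bochner average of $C^k$-functions and therefore belongs to $C^\infty(\T^{2t})$, which settles the last claim of \textsc{Regularity}. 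Subtracting the two identities and applying $|\langle f,m_1-m_2\rangle|\le\|f\|_L\,\|m_1-m_2\|_L^*$ to each partial derivative of order $\le k$, we obtain
\begin{equation*}
	\bigl\|\rho_{[\![n+1,n+t]\!]}^\varUpsilon-\rho_t\bigr\|_{C^k(\T^{2t})}\le L_{tk}\,\bigl\|\MMMM_{n-1}^\varUpsilon-\MMMM\bigr\|_L^* ,
\end{equation*}
and the exponential mixing of~\eqref{RDS} (Theorem~\ref{t-expomixing}) bounds the right-hand side by $C\,e^{-\gamma(n-1)}$ uniformly over $\varUpsilon\in\XXXX^3$; absorbing the factor $e^\gamma$ into the constant gives~\eqref{convergencedensity}.
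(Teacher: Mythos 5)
Your proposal is essentially the paper's proof, organised in the same way: condition on one step to gain regularity, invoke the abstract image-of-measure result of Section~\ref{s-measureimage}, establish surjectivity of the derivative via a controllability argument, then average against $\MMMM_{n}^\varUpsilon$ and invoke exponential mixing (Theorem~\ref{t-expomixing}) for the \textsc{Convergence} part. The paper's Section~\ref{ss-regularitylaw} phrases Step~1 by conditioning on $\varUpsilon_1=S(\varUpsilon,\eta_1)$ and invoking auxiliary statements~(i)--(ii) about $\rho_t^\upsilon$ for $\upsilon\in\XXXX^\infty$, whereas you freeze $\eta_1$ directly and work with $\Psi^{\varUpsilon_1}$, but this is the same manoeuvre.

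The one place where your write-up is weaker than the paper's is the surjectivity of the diagonal block $D_{\eta_m}y_m$. You correctly recognise this as the crux and propose a time-concentration heuristic (approximate-delta sources near $t=m$ at single Fourier modes $e_j$, then pass to the limit); the paper instead proves the precise property~\hyperlink{(ACL)}{(ACL)} in Section~\ref{ss-ldpproof} by an explicit construction (building $v_\delta$ from the basis fields $U_1,U_2$ transported along the particle trajectory $\gamma(t)$ and tracking the resulting $z_\delta$ via Gronwall), then applies it through Remark~\ref{r-exactcontrol} and an induction in $t$ whose algebraic content is exactly your block-triangular observation. Your heuristic would require some care to make rigorous --- the limiting source leaves $L^2$, so you would need to argue that the range of $D_{\eta_m}y_m$ (a closed subspace of $\R^2$) contains points arbitrarily close to $e_j(y_m)$ and hence $e_j(y_m)$ itself --- but this is filled in precisely by the (ACL) construction, so it is a gap in exposition rather than in strategy. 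One minor over-complication: you state that the proof requires a uniform lower bound on the smallest singular value of $D\Psi^{\varUpsilon_1}$, ``equivalently, the covariance matrix \dots must be uniformly positive definite''. Theorem~\ref{T:2.7} only asks for pointwise surjectivity~\eqref{2.7} together with the regularity hypothesis~\hyperlink{(F)}{(F)}; the uniformity of the resulting bounds is extracted inside its proof by localisation and compactness, so you need not verify a uniform non-degeneracy constant separately. Likewise, your phrase ``the admissible sources span a dense subspace'' is slightly off-target: the derivative $D_\eta S$ in~\hyperlink{(ACL)}{(ACL)} acts on the whole control space $\EEEE$, not only on increments within $\supp\ell$, so density of the image over $\EEEE$ is what matters.
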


Let us note that if $\varUpsilon\in\XXXX^s$ is not infinitely smooth, there is no reason for~$\rho_t^\varUpsilon$ to be $C^\infty$ even for $t=1$. Indeed, as it was mentioned in footnote~\ref{fn-regularity}, the map $\eta\mapsto S^y(\varUpsilon,\eta)$ acting from $L^2([0,1],V^s)$ to~$\T^2$ possesses only finite regularity, unless $\varUpsilon\in\XXXX^s$ is infinitely smooth. Therefore, without any regularisation mechanism, the image of a measure under the action of~$S^y(\varUpsilon,\cdot)$ does not need to have a smooth density. On the other hand, the following remark about finite regularity will be important in the definition of the entropy production.

\begin{remark} \label{r-densitytime1}
	The proof of Theorem~\ref{t-particle} will imply that, for any integer $k\ge0$, there is $s\ge3$ such that, for any $t\ge1$ and~$\varUpsilon\in\XXXX^s$, the measure $\llambda_t^\varUpsilon$ has a density $\rho_t^\varUpsilon\in C^k(\T^{2t})$. Moreover, the mapping $\varUpsilon\mapsto \rho_t^\varUpsilon$ is Lipschitz continuous from~$\XXXX^s$ to~$C^k(\T^{2t})$. 
\end{remark}

\subsubsection*{Strict positivity of densities}
To ensure strict positivity of the densities~$\rho_t$ and to derive a uniform bound on the mean entropy production in time $t$,  we need to replace the random force on the right-hand side of~\eqref{NS} by $\eta^a:=a\eta$, where $a>0$ is a large parameter. We shall denote by~$\rho_t^a$ the densities corresponding to the resulting equation. 

\begin{theorem} \label{t-entropy}
	Suppose that Hypothesis~\hyperlink{(N)}{\rm(N)} is satisfied. Then there is $a_0>0$ such that the following holds for any $a\ge a_0$. 
	\begin{description}
		\item [\sc{Strict positivity}.]
		The functions~$\rho_t^a$ are bounded below by positive numbers. 
		\item [\sc{Uniform bound on the entropy production}.]
		There is $C>0$ such that the entropy production defined by~\eqref{EPon0t} satisfies the inequality
		\begin{equation} \label{entropybound}
			\bigl|t^{-1}\sigma_t(y_1,\dots,y_t)\bigr|\le C\quad\mbox{for all $(y_1,\dots,y_t)\in\T^{2t}$, $t\ge1$}. 
		\end{equation}
	\end{description}
\end{theorem}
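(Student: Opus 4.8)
The plan is to deduce both assertions from Theorem~\ref{t-particle} together with a scaling argument: replacing $\eta$ by $a\eta$ with $a$ large makes the particle dynamics on a unit time interval "well-stirred", so that the Lagrangian flow map becomes a submersion onto $\T^2$ with quantitative lower bounds on its Jacobian, uniformly over realisations of the noise. Concretely, I would first record that the $a$-dependent version of Theorem~\ref{t-particle} still holds (the hypotheses on $\eta$ are not affected by the rescaling), so for each integer $t\ge1$ the density $\rho_t^a$ belongs to $C^\infty(\T^{2t})$ and is the projection of the stationary path measure $\boldsymbol{\MMMM}^a$. For strict positivity I would argue as follows: since $\T^{2t}$ is compact, it suffices to show $\rho_t^a>0$ everywhere. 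By the convergence part of Theorem~\ref{t-particle}, $\rho_t^a$ is the $C^0$-limit (indeed $C^k$-limit for any $k$) of $\rho_{[\![n+1,n+t]\!]}^{\varUpsilon,a}$ as $n\to\infty$, uniformly in $\varUpsilon\in\XXXX^3$; and the latter is, by the semigroup/Markov decomposition, obtained by pushing forward the density at time $n$ under the resolving map. So it is enough to produce, for $a$ large, a single point $\varUpsilon_0\in\XXXX^s$ (e.g. $\varUpsilon_0=(0,p_0)$, which lies in $\XXXX^s$) whose time-$[\![1,t]\!]$ particle law has a strictly positive density on all of $\T^{2t}$, and to propagate positivity forward using irreducibility of the $u$-component together with the controllability of the ODE~\eqref{ode-fluidparticle}.

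The key mechanism is a controllability statement: for $a$ large enough, given any $\varUpsilon\in\XXXX^s$ and any target points $q_1,\dots,q_t\in\T^2$, there is an open set of noise realisations $(\eta_1,\dots,\eta_t)$ in the support $\KK^s$ (hence of positive measure under $\ell^{\otimes t}$, using the positivity of the densities $\rho_{lj}$ near $0$ from Hypothesis~\hyperlink{(N)}{(N)}) such that the corresponding particle positions at times $1,\dots,t$ are exactly $q_1,\dots,q_t$, and moreover the map $(\eta_1,\dots,\eta_t)\mapsto(\varphi_1,\dots,\varphi_t)$ is a submersion there with Jacobian bounded below. This is where the largeness of $a$ enters: a large forcing amplitude produces, on a unit time interval, velocity fields whose flow can reach any point of $\T^2$ and whose differential with respect to the noise is surjective onto $T\T^2$, with constants depending only on $a$ and the structure of $\eta$. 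Given such a statement, the Area/Coarea formula (in the abstract form of~\cite{bogachev2010}, already invoked for Theorem~\ref{t-particle}) yields $\rho_t^{\varUpsilon,a}(q_1,\dots,q_t)\ge c(a,t)>0$ uniformly in $(q_1,\dots,q_t)$ and in $\varUpsilon\in\XXXX^s$; letting $n\to\infty$ in~\eqref{convergencedensity} transfers the bound to $\rho_t^a$, giving strict positivity.

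For the uniform bound on the entropy production, the point is to make the lower bound $c(a,t)$ and the upper bound $\|\rho_t^a\|_\infty$ both grow at most exponentially in $t$, so that $|t^{-1}\sigma_t|=t^{-1}|\log\rho_t^a(y_1,\dots,y_t)-\log\rho_t^a(y_t,\dots,y_1)|\le t^{-1}(\log\|\rho_t^a\|_\infty-\log c(a,t))\le C$. The upper bound $\|\rho_t^a\|_{C^0}\le M^t$ should follow from an inductive estimate: $\rho_t^a$ is obtained from $\rho_{t-1}^a$ (or from the stationary $u$-marginal) by one step of push-forward under the single-step Lagrangian map, and a uniform-in-the-previous-variable bound on that map's inverse Jacobian gives a fixed multiplicative constant $M$ per time step; the exponential mixing of Theorem~\ref{t-expomixing} keeps these constants $n$-independent. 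The lower bound $c(a,t)\ge m^t$ is the multiplicative analogue, built from the same one-step controllability with a fixed per-step positive Jacobian floor; this is precisely the property that a large, temporally-localised, spatially-smooth noise buys us. The main obstacle, and the place where the argument needs genuine care, is establishing this quantitative one-step controllability of~\eqref{ode-fluidparticle}---surjectivity of the noise-to-endpoint differential with a uniform Jacobian lower bound---for large $a$, uniformly over the attractor $\aA^s$ of the Navier--Stokes component; everything else is bookkeeping with the change-of-variables formula and the convergence estimate~\eqref{convergencedensity} already in hand.
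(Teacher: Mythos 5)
Your overall strategy is correct and parallels the paper's: reduce to one-step density bounds, establish exact controllability of the particle for large $a$, convert it into pointwise positivity of the density, and iterate via the Markov structure to get the exponential-in-$t$ estimates $m^t\le\rho_t^a\le M^t$ and hence the bound on $t^{-1}\sigma_t$. However, several steps in the middle are either muddled or missing the decisive ideas, and I want to flag them.

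First, the reduction is cleaner than you suggest and your ``single point $\varUpsilon_0$'' phrasing is not quite right. The paper's route is to prove $\rho_1^\varUpsilon(y)>0$ for \emph{all} $\varUpsilon\in\XXXX^s$ and $y\in\T^2$; continuity in $(\varUpsilon,y)$ plus compactness of $\XXXX^s\times\T^2$ then gives uniform bounds $m\le\rho_1^\varUpsilon(y)\le M$, and the Kolmogorov--Chapman relation immediately yields $m^t\le\rho_t\le M^t$ with no reference to exponential mixing (the mixing is irrelevant at this point; the iteration is elementary). Your invocation of Theorem~\ref{t-expomixing} to keep ``constants $n$-independent'' is a red herring: the constants depend only on $t$ and are automatically of the form $m^t,M^t$.

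Second, the part you single out as the ``main obstacle''---the quantitative one-step controllability---really is where the proof has content, and your sketch there is missing the two tools the paper actually uses. (i) Exact controllability (not merely approximate) is needed because Theorem~\ref{T:2.8} demands a genuine pre-image $\hat\eta\in\KK_0$ with $S^y(\varUpsilon,\hat\eta)=\hat p$. The paper upgrades approximate to exact via Brouwer's fixed-point theorem applied to $\varPhi_1(p)=p-S^y(\varUpsilon,\varPhi^\varUpsilon(p))+\hat p$. (ii) To handle an arbitrary initial fluid state $u_0\in\aA^s$, one first needs to steer the $u$-component close to zero on $[0,1/2]$; this uses the Agrachev--Sarychev theorem (Theorem~\ref{t-AS}), producing a time-localised, low-mode control in $C^\infty([0,1/2],\HH_1)$. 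Only after this can one run the explicit ``carry the particle along a straight segment'' construction on $[1/2,1]$ analogous to the one used for~\hyperlink{(AC)}{(AC)}.

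Third, the role of large $a$ is not what you describe. You cast it as a ``well-stirring'' mechanism that makes the flow map a submersion with Jacobian bounded below; in fact, the surjectivity of the noise-to-endpoint differential holds for all $a$ and is already established in Remark~\ref{r-exactcontrol}. What genuinely requires $a\ge a_0$ is that the control $\hat\eta$ constructed above must lie in the ``interior'' set $\KK_0=\{|\langle\psi,\varphi_{lj}\rangle_\EEEE|<\delta_{lj}\}$ where the one-dimensional densities are positive; the thresholds are $\delta_{lj}=ab_jd_jc_l\delta$, so enlarging $a$ enlarges $\KK_0$ until the (fixed, $a$-independent) control fits. This is also why a large parameter only in front of finitely many Fourier modes suffices (Remark~\ref{r-parameter}). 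Your ``quantitative Jacobian floor'' is therefore extraneous: once pointwise positivity of $\rho_1^\varUpsilon(y)$ is obtained from Theorem~\ref{T:2.8}, the uniform lower bound $m$ is a free consequence of compactness and continuity; no direct Jacobian estimates are needed.
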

As we shall describe in the next section, the uniform bound on the entropy production is an easy consequence of the strict positivity of~$\rho_1^\varUpsilon(y)$. The proof given in Section~\ref{ss-EP} will imply that, for this theorem to be true, it suffices to have a large parameter in front of finitely many Fourier modes in~$x$. On the other hand, the following simple observation shows that~$\rho_1^\varUpsilon(y)$ cannot be strictly positive for any $\varUpsilon\in\XXXX^s$ and $y\in\T^2$, unless the noise is sufficiently large. Indeed, suppose that $\varUpsilon=(0,p)$ and $|y-p|$ is of order~$1$. In this case, the size of the velocity field on the interval $[0,1]$ can be bounded by the norm of the noise. If the latter is of order~$\e>0$, then the particle can travel a distance no larger than~$C\e$, and so  $\rho_1^\varUpsilon(y)=0$ for $|y-p|>C\e$.

\subsection{Schemes of the proofs}
\label{ss-scheme}

\subsubsection*{Theorem~\ref{t-ldp}}
In Section~\ref{s-ldp-control} we shall derive a sufficient condition for the validity of LDP in the context of the Markovian RDS~\eqref{RDS}. Apart from the regularity of~$S$ and a decomposability hypothesis on the law of the random noise, this criterion requires two properties: approximate controllability of the nonlinear system by controls belonging to the support of the law~$\eta$ and the density of the image of the linearised operator; see~\hyperlink{(AC)}{(AC)} and~\hyperlink{(ACL)}{(ACL)}. The verification of these two properties is based on essentially the same idea, which we briefly outline here, leaving the details for Section~\ref{ss-ldpproof}. Note that some related problems on the control of a particle by the vector field appeared in the papers~\cite{nersisyan-2011,nersesyan-2015, BBP-2018}, and our proof uses some ideas from these articles.

\smallskip
Suppose we wish to prove that a point $\varUpsilon_0=(0,p)\in\XXXX^s$ can be exactly  steered to any point $\widehat\varUpsilon=(0,\hat p)$ that is sufficiently close to~$\varUpsilon_0$. Let us set
\begin{equation} \label{U1U2gamma}
U_1(x)=(\cos x_2,0), \quad U_2(x)=(0,\cos x_1), \quad \gamma(t)=\bigl(1-\alpha(t)\bigr)\,p+\alpha(t)\hat p,	
\end{equation}
where $\alpha\in C^\infty(\R)$ is such that $\alpha(t)=0$ for $t\le1/3$ and $\alpha(t)=1$ for $t\ge 2/3$. Writing 
\begin{equation}  \label{gammadot}
	\dot\gamma(t)
	=\dot\alpha(t)(\hat p-p)
	=\bigl(\varphi_1(t),\varphi_2(t)\bigr),
\end{equation}
we define the functions
\begin{equation} \label{uy-control}
	u(t,x)=\varphi_1(t) U_1\bigl(x-\gamma(t)\bigr)+\varphi_2(t) U_2\bigl(x-\gamma(t)\bigr), \quad y(t)=\gamma(t), 
\end{equation}
where  $t\in[0,1]$. Then the vector function $\varUpsilon=(u,y)$ is infinitely smooth,  coincides with~$(0,p)$ and~$(0,\hat p)$ at the endpoints of the interval $[0,1]$, and satisfies Eqs.~\eqref{NS}, \eqref{ode-fluidparticle} with 
\begin{equation} \label{eta-control}
	\eta(t)=\Pi g(t), \quad g(t)=\p_t u+\langle u,\nabla\rangle u-\nu\Delta u,
\end{equation}
where $\Pi:L^2(\T^2,\R^2)\to H$ stands for Leray's projection. It is straightforward to check that $g$ can be written as 
\begin{equation} \label{gtx}
	(\Pi g)(t,x)=\sum_{j\in\Lambda}\alpha_j(t)e_j(x),
\end{equation}
where $\Lambda=\{j=(j_1,j_2)\in\Z_*^2:|j_1|+|j_2|\le 2\}$, the trigonometric basis~$\{e_j\}$ is defined by~\eqref{trigonometric-basis}, and $\alpha_j$'s are smooth functions of $t\in[0,1]$ whose~$C^r$ norms are  proportional to~$|\hat p-p|$ for any $r\ge1$. We claim that~$\Pi g$ is in the support~$\KK^s$ of~$\DD(\eta_k)$, provided that $|\hat p-p|\ll1$. Indeed, it follows from~\hyperlink{(N)}{(N)} that~$\KK^s$ contains any function of the form 
\begin{equation} \label{functioninKK}
	h(t,x)=\sum_{j\in\Lambda}\sum_{l=1}^\infty h_{lj}\psi_l(t)e_j(x),
\end{equation}
where the coefficients satisfy the inequality $|h_{lj}|\le \e l^{-\beta}$ with $\e\ll1$. Since~$\alpha_j$'s are infinitely smooth, it follows from~\eqref{poincareineq} that the coefficients~$\alpha_{jl}$ of the expansion of~$\alpha_j$ in the basis~$\{\psi_l\}$ decay faster than any negative degree of~$l$. Since they are bounded by a number propositional to $|\hat p-p|$, we conclude that $\Pi g\in\KK^s$, provided that $|\hat p-p|\ll1$. 

\subsubsection*{Theorem~\ref{t-particle}}
In Section~\ref{s-measureimage}, we present a sufficient condition for the existence of a regular density for the image of a probability measure under a smooth mapping; see Theorem~\ref{T:2.7}. Roughly speaking, it says that if a smooth map~$F$ with range in a finite-dimensional manifold is such that its derivative is surjective everywhere, then the image of a probability measure~$\ell$ has smooth density, provided that~$\ell$ is regular in an appropriate sense.  Measures satisfying Hypothesis~\hyperlink{(N)}{(N)} do possess the required regularity property, and the position of the particle can be written as a smooth function~$F$ of the noise and the initial condition of the system. The fact that the derivative of~$F$ is surjective will follow from the density of the image for the linearised operator. This will establish the existence of $\rho_{[\![2,t]\!]}^\varUpsilon$. 

\smallskip
To prove convergence~\eqref{convergencedensity}, we first note that the sequence of measures $\{\MMMM_k^\varUpsilon\}_{k\ge1}$ converges, as $k\to\infty$, to~$\MMMM$ exponentially fast in the dual-Lipschitz norm; this is established in Theorem~\ref{t-expomixing}. Let us fix any $s\ge3$, set $\EEEE=L^2(J,V^s)$, and introduce a map
$$
F^t:\XXXX^3\times\underbrace{\EEEE\times\cdots\times\EEEE}_{\mbox{\footnotesize$t$ times}}\to \T^{2t}
$$
that takes $(\varUpsilon,\eta_1,\dots,\eta_t)$ to $(y_1,\dots,y_t)$, where $y_k$ is the $y$-component of the trajectory~$\varUpsilon_k$ for~\eqref{RDS}. In this case, we can write
\begin{equation} \label{lambdan1nt}
	\llambda_{[\![n+1,n+t]\!]}^\varUpsilon
=\E\,F_{*}^t(\varUpsilon_n, \underbrace{\ell\otimes\cdots\otimes\ell}_{\mbox{\footnotesize $t$ times}}).
\end{equation}
Now note that~$\rho_t^\varUpsilon$ is the density of $F_{*}^t(\varUpsilon, \ell\otimes\cdots\otimes\ell)$  with respect to the Lebesgue measure on~$\T^{2t}$. It follows that 
\begin{equation} \label{rhon1nt}
	\rho_{[\![n+1,n+t]\!]}^\varUpsilon(y_1,\dots,y_t)
	=\int_{\XXXX^s}\rho_t^{\upsilon}(y_1,\dots,y_t)\,\MMMM_n^\varUpsilon(\dd\upsilon).
\end{equation}
Since~$\rho_t^\upsilon(y)$ is Lipschitz continuous in~$\upsilon$, together with all its derivatives in~$y$, this will imply the required convergence~\eqref{convergencedensity}. 

\subsubsection*{Theorem~\ref{t-entropy}}
As it was established in Theorem~\ref{t-particle} and Remark~\ref{r-densitytime1}, if an integer $s\ge3$ is sufficiently large, then for any $\varUpsilon\in\XXXX^s$ the projection of the transition function $P_1(\varUpsilon, \cdot)$ to the $y$-component possesses a density $\rho_1^\varUpsilon(y)$,
\begin{equation} \label{transition-y}
	P_1^y(\varUpsilon,\dd y)=\rho_1^\varUpsilon(y)\,\dd y, 
\end{equation}
and the mapping $(\varUpsilon,y)\mapsto \rho_1^\varUpsilon(y)$ is continuous from~$\XXXX^s$ to~$C(\T^2)$. It follows that~$\rho_1^\varUpsilon(y)$ is continuous in~$(\varUpsilon,y)$ and, by the compactness of~$\XXXX^s\times\T^2$, there is $M>0$ such that
\begin{equation} \label{upperbound-density}
\rho_1^\varUpsilon(y)\le M\quad\mbox{for all $\varUpsilon\in\XXXX^s$, $y\in\T^2$}. 
\end{equation}
By the Kolmogorov--Chapman relation, for an arbitrary non-negative function $f:\T^{2t}\to\R$, we have 
\begin{align*}
	\langle f,\llambda_t\rangle
	&=\int\limits_{\XXXX^s(t+1)}
	f(\yyy^t)\,\MMMM(\dd\varUpsilon)P_1(\varUpsilon,\dd \varUpsilon_1)\cdots P_1(\varUpsilon_{t-1},\dd \varUpsilon_t)\\
	&=\int\limits_{\XXXX^s(t)\times\T^2}
	f(\yyy^t)\rho_1^{\varUpsilon_{t-1}}(y_t)\,\MMMM(\dd\varUpsilon)P_1(\varUpsilon,\dd \varUpsilon_1)\cdots P_1(\varUpsilon_{t-2},\dd \varUpsilon_{t-1})\,\dd y_t\\
	&\le M \int\limits_{\XXXX^s(t)\times\T^2}
	f(\yyy^t)\,\MMMM(\dd\varUpsilon)P_1(\varUpsilon,\dd \varUpsilon_1)\cdots P_1(\varUpsilon_{t-2},\dd \varUpsilon_{t-1})\,\dd y_t,
\end{align*}
where $\yyy^t=(y_1,\dots,y_t)\in\T^{2t}$,  $\XXXX^s(t)$ denotes the $t$-fold product of the space~$\XXXX^s$, and we used~\eqref{transition-y} and~\eqref{upperbound-density}. Iterating this argument and using the relation $\MMMM(\XXXX)=1$, we derive
$$
	\langle f,\llambda_t\rangle
	\le M^t\int_{\T^{2t}}f(y_1,\dots,y_t)\,\dd y_1\dots\dd y_t. 
$$
Since $f\ge0$ was arbitrary, it follows that
\begin{equation} \label{density-upperbound}
	\rho_t(y_1,\dots,y_t)\le M^t\quad\mbox{for any $(y_1,\dots,y_t)\in\T^{2t}$}. 
\end{equation}
Note that the upper bound for the density does not require any additional hypotheses on the noise. 

We now turn to the lower bound. Suppose we have proved that 
\begin{equation} \label{positivity-density-particle}
\rho_1^\varUpsilon(y)>0\quad\mbox{for all $\varUpsilon\in\XXXX^s$, $y\in\T^2$}. 
\end{equation}
Then, by continuity and compactness, we can find $m>0$ such that $\rho_1^\varUpsilon(y)\ge m$ for $\varUpsilon\in\XXXX^s$, $y\in\T^2$. Repeating the above argument, one gets that,  for any non-negative function $f:\T^{2t}\to\R$, 
 $$
	\langle f,\llambda_t\rangle
	\ge m^t\int_{\T^{2t}}f(y_1,\dots,y_t)\,\dd y_1\dots\dd y_t,
$$
and so it follows that 
\begin{equation} \label{density-lowerbound}
	\rho_t(y_1,\dots,y_t)\ge m^t\quad\mbox{for any $(y_1,\dots,y_t)\in\T^{2t}$}. 
\end{equation}
Inequalities~\eqref{density-upperbound} and~\eqref{density-lowerbound} allow to define the entropy production in time~$t$ by relation~\eqref{EPon0t} and to derive the  estimate~\eqref{entropybound} for its  time-average. 

\smallskip
The above elementary argument  reduces  the proof of Theorem~\ref{t-entropy} to the verification  of~\eqref{positivity-density-particle}. Theorem~\ref{T:2.8} gives a sufficient condition for the positivity of the density for the image of a probability measure~$\ell$ under a finite-dimensional  smooth map. Roughly speaking, it says that if a point~$\hat p$ has a pre-image in the ``interior'' of the support of~$\ell$, then the density is strictly positive at~$\hat p$. Hence, the proof  further reduces to a problem of exact controllability for the Navier--Stokes system coupled to the Lagrangian particle. We shall show in Section~\ref{ss-EP} that this can be established by modifying the above scheme used in the proof of Theorem~\ref{t-ldp}, provided that the noise contains a large parameter in front of finitely many Fourier modes in the space variables. 

Let us also mention that the above argument cannot be applied to the full system since the transition functions corresponding to different initial points $\varUpsilon=(u,p)\in\XXXX^s$ are not equivalent. It is the integration with respect to $u\in \XXXX^s$ that removes this singularity and allows one to prove the equivalence of the (projections of)  transition probabilities.  Moreover, we conjecture  that the laws of the forward and backward stationary processes of the full system~\eqref{NS}, \eqref{ode-fluidparticle} are not equivalent. Indeed, for the (linear) Stokes system perturbed by a spatially regular white noise, after integrating out the $p$-variable, one gets a Gaussian process for which there exist necessary and sufficient conditions (in terms of the noise) for the equivalence of forward and backward laws; cf.\ Theorem~7.2.1 in~\cite{DZ1996}. In this case, it is not difficult to construct a noise for which the two laws are singular.

\section{Large deviations via controllability}
\label{s-ldp-control}

\subsection{Formulation of the result}
\label{ss-result}
Let~$\HH$ be a separable Hilbert space, let~$\YYYY$ be a compact Riemannian manifold, let $\HHHH=\HH\times\YYYY$ be the product space with natural projections~$\Pi_\HH$ and~$\Pi_\YYYY$ to its components,  and let~$\EEEE$ be a separable Banach space.  We fix a continuous mapping $S:\HHHH\times \EEEE\to\HHHH$ and consider the random dynamical system~\eqref{RDS} in which $\{\eta_k\}$ is a sequence of i.i.d.\ random variables in~$\EEEE$. We shall denote by $\KK\subset\EEEE$ the support of the law of~$\eta_k$ and assume that  there is a compact subset $\aA\subset\HH$ such that $\XXXX:=\aA\times\YYYY$ is invariant for~\eqref{RDS} ($S(\XXXX\times \KK)\subset\XXXX$). We impose the following three hypotheses on the mapping~$S$. 

\begin{itemize}
	\item[\hypertarget{(R)}{\bf(R)}] 
	\sl There is a Banach space~$\VV$ compactly embedded into~$\HH$ such that the image of~$S$ is contained in $\VVVV:=\VV\times\YYYY$, the mapping $S:\HHHH\times\EEEE\to\VVVV$ is twice continuously differentiable, and its derivatives are bounded on bounded subsets. Moreover, there is $\bUpsilon\in\XXXX$ such that $S(\overline{\varUpsilon},0)=\overline{\varUpsilon}$.
	\item[\hypertarget{(AC)}{\bf(AC)}] 
	For any $\e>0$, there is an integer $n\ge1$ such that, for any initial point $\varUpsilon\in\XXXX$ and any target $\widehat\varUpsilon\in\XXXX$, one can find controls $\zeta_1,\dots,\zeta_n\in\KK$ satisfying the inequality
\begin{equation}\label{AC}
d_\HHHH\bigl(S_n(\varUpsilon; \zeta_1, \ldots, \zeta_n),\widehat \varUpsilon\,\bigr)\le \e, 
\end{equation}
where  $S_n(\varUpsilon; \eta_1, \ldots, \eta_n)$ stands for the vector~$\varUpsilon_n$ defined by relations~\eqref{RDS} with $\varUpsilon_0=\varUpsilon$.  
	\item[\hypertarget{(ACL)}{\bf(ACL)}] 
	For any $\varUpsilon\in\XXXX$ and $\eta\in\KK$, the derivative $(D_\eta S)(\varUpsilon,\eta):\EEEE\to\HH\times T_y\YYYY$, with $y=S(\varUpsilon,\eta)$, has a dense image.
\end{itemize}
In applications to randomly forced PDEs, the mapping~$S$ is the time-$1$ shift along the trajectories of the system. The first part of Hypothesis~\hyperlink{(R)}{(R)} is a  regularisation property of the flow, and the second part asserts  that the unperturbed dynamics has at least one fixed point. Hypothesis~\hyperlink{(AC)}{(AC)} is the standard property of global approximate controllability, with control functions in the support of the noise with  no restriction imposed on the time of control.  Hypothesis~\hyperlink{(ACL)}{(ACL)} is a similar property for the linearised equation, but it allows for a larger control space and requires the time of control to be fixed. These two properties are often satisfied if the support of the driving noise is sufficiently large. 

We shall assume, in addition, that the noise has a decomposable structure in the following sense.

\begin{itemize}
	\item[\hypertarget{(D)}{\bf(D)}]\sl 
	The support of~$\ell$ is compact, and there are two sequences of closed subspaces~$\{\FFFF_n\}$ and $\{\GGGG_n\}$ in~$\EEEE$ such that $\dim \FFFF_n<\infty$ and $\FFFF_n\subset \FFFF_{n+1}$ for any $n\ge1$, the union $\cup_n\FFFF_n$ is dense in~$\EEEE$, and the following properties hold.
	\subitem{$\bullet$}
	The space~$\EEEE$ is the direct sum of~$\FFFF_n$ and~$\GGGG_n$, and the norms of the corresponding projections~${\mathsf P}_n$ and~${\mathsf Q}_n$ are bounded uniformly in~$n\ge 1$. 
	\subitem{$\bullet$} 
	The measure~$\ell$ is the product of its projections ${\mathsf P}_{n*}\ell$ and~${\mathsf Q}_{n*}\ell$ for any $n\ge 1$. Moreover, ${\mathsf P}_{n*}\ell$ has $C^1$-smooth density with respect to the Lebesgue measure~on~$\FFFF_n$.
\end{itemize}
Let us note that this condition implies, in particular, that the sequence of projections~$\{{\mathsf P}_n\}$ converges to  the identity operator in~$\EEEE$ in the strong operator topology. In what follows, we deal with the restriction of~\eqref{RDS} to the invariant set~$\XXXX$. We introduce the empirical measures of trajectories by the formula~\eqref{empiricalmeasureMP}, in which  $\UUpsilon_n=(\varUpsilon_k,k\ge n)$ and $\varUpsilon_k=S_k(\varUpsilon;\eta_1,\dots,\eta_k)$. Thus, for each $\XXXX$-valued random variable~$\varUpsilon$, the sequence $\{\nnu_t^\varUpsilon\}$ consists of random probability measures on the product space $\XXX:=\XXXX^{\Z_+}$. 

To formulate the main result of this section, we first recall some definitions. The spaces~$\XXX$ and~$\PP(\XXX)$ are endowed with the Tikhonov and weak$^*$ topologies and the corresponding Borel $\sigma$-algebras. A mapping $\III:\PP(\XXX)\to[0,+\infty]$ is called a {\it good rate function\/} if it is convex and has compact level sets. The latter property reduces to the lower semicontinuity of~$\III$ since~$\PP(\XXX)$ is a compact space. We shall say that the sequence $\{\nnu_t^\varUpsilon\}$ satisfies the {\it uniform LDP\/} with the rate function~$\III$ if 
\begin{align} 
	-\III(\dot\Gamma)
	&\le\liminf_{t\to\infty}t^{-1}\log\inf_{\varUpsilon\in\XXXX}\IP\{\nnu_t^\varUpsilon\in\Gamma\}\notag\\
	&\le\limsup_{t\to\infty}t^{-1}\log\sup_{\varUpsilon\in\XXXX}\IP\{\nnu_t^\varUpsilon\in\Gamma\} 
	\le -\III(\overline\Gamma)\label{LDPforEM}
\end{align}
for any Borel subset $\Gamma\subset\PP(\XXX)$, where $\dot\Gamma$ and~$\overline\Gamma$ stand for the interior and closure of~$\Gamma$, and~$\III(A)$ is the infimum of~$\III$ over~$A$. In view of the Markov property, if $\{\nnu_t^\varUpsilon\}$ satisfies the {\it uniform LDP\/}, then inequality~\eqref{LDPforEM} remains valid if the infimum and supremum are taken over all $\XXXX$-valued random variables~$\varUpsilon$ independent of the sequence~$\{\eta_k\}$. 

A measure $\llambda\in\PP(\XXX)$ is said to be {\it shift-invariant\/} if it is invariant under the mapping $t\mapsto t+1$. The set of all shift-invariant measures is denoted by~$\PP_s(\XXX)$. By Kolmogorov's theorem, any shift-invariant measure can be extended in a unique manner to a shift-invariant measure on~$\XXXX^\Z$, and we use the same notation for the extended measure. Finally, given a shift-invariant measure $\llambda\in\PP(\XXX)$, we denote by~$\llambda_-$ its projection to $\XXX_-:=\XXXX^{\Z_-}$, and by $\{\lambda(\UUpsilon,\cdot),\UUpsilon\in\XXXX^{\Z_-}\}$ the projection to the first component of the regular conditional probability of~$\llambda$ with respect to its projection to~$\XXX_-$. 

\begin{theorem} \label{t-uniformLDP}
	Suppose that Hypotheses~\hyperlink{(R)}{\rm(R)}, \hyperlink{(AC)}{\rm(AC)}, \hyperlink{(ACL)}{\rm(ACL)}, and~\hyperlink{(D)}{\rm(D)} hold for the random dynamical system~\eqref{RDS}. Then the following holds. 
		
	\smallskip
	{\sc{Uniform LDP}}. The empirical measures~$\{\nnu_t^\varUpsilon\}_{t\ge1}$ satisfy the uniform LDP with a good rate function~$\III:\PP(\XXX)\to[0,+\infty]$. In particular, the LDP holds for the empirical measures of a stationary process. 

	\smallskip
	{\sc{Rate function}}. The rate function~$\III$ is affine and is given by the Donsker--Varadhan entropy formula: 
	\begin{equation} \label{DVentropy}
		\III(\llambda)=
		\left\{
		\begin{array}{cl}
		\displaystyle\int_{\XXX_-}\Ent\bigl(\lambda(\UUpsilon,\cdot)\,|\,P_1(\varUpsilon_0,\cdot)\bigr)\,\llambda_-(\dd\UUpsilon) & \quad \mbox{if $\llambda\in\PP_s(\XXX)$},\\
			+\infty & \quad\mbox{otherwise}, 
		\end{array}
		\right.
	\end{equation}
	where $\Ent(\mu\,|\,\nu)$ is the relative entropy of~$\mu$ with respect to~$\nu$, and $P_1(\varUpsilon,\cdot)$ is the transition function for the Markov process defined by~\eqref{RDS}.  
\end{theorem}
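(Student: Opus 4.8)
The plan is to deduce Theorem~\ref{t-uniformLDP} from the abstract LDP machinery of~\cite{JNPS-cpam2015}, which already provides the uniform LDP and the Donsker--Varadhan formula for Markovian random dynamical systems once one verifies a list of abstract conditions: a uniform irreducibility / coupling hypothesis, a uniform Feller-type (or ``multiplicative ergodicity'') property for the Feynman--Kac semigroup, and the required growth/compactness of the phase space. Our compact invariant set $\XXXX=\aA\times\YYYY$ makes the growth conditions automatic, so the real content is extracting those ergodic-type properties from Hypotheses~\hyperlink{(R)}{(R)}, \hyperlink{(AC)}{(AC)}, \hyperlink{(ACL)}{(ACL)}, and~\hyperlink{(D)}{(D)}. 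The two controllability hypotheses are precisely tailored to this: \hyperlink{(AC)}{(AC)} will give irreducibility on $\XXXX$, while \hyperlink{(ACL)}{(ACL)} together with the decomposability \hyperlink{(D)}{(D)} and the regularity \hyperlink{(R)}{(R)} will give a local Doeblin / squeezing property allowing a coupling construction, hence the uniform Feller property of the perturbed semigroup.

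\textbf{Step 1: Image-of-measure regularity from \hyperlink{(ACL)}{(ACL)}, \hyperlink{(R)}{(R)}, \hyperlink{(D)}{(D)}.} First I would show that the one-step transition kernel $P_1(\varUpsilon,\cdot)=S_*(\varUpsilon,\ell)$ has, along the finite-dimensional directions $\FFFF_n$, a component with a density that is \emph{bounded below near a common reference point}, locally uniformly in $\varUpsilon$. Concretely, decompose $\eta=\mathsf P_n\eta+\mathsf Q_n\eta$ using~\hyperlink{(D)}{(D)}; since $\mathsf P_{n*}\ell$ has a $C^1$ density on $\FFFF_n$ and the derivative $(D_\eta S)(\varUpsilon,\eta)$ has dense image by~\hyperlink{(ACL)}{(ACL)}, for $n$ large the restricted map $\mathsf P_n\eta\mapsto S(\varUpsilon,\mathsf P_n\eta+\mathsf Q_n\eta)$ is a submersion onto a neighbourhood in $\VV\times\YYYY$ (here twice-differentiability and boundedness of derivatives from~\hyperlink{(R)}{(R)} give uniform control of the inverse-function-theorem constants on the compact set $\XXXX$). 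A standard change-of-variables / co-area argument then yields that $P_1(\varUpsilon,\cdot)$ dominates a fixed positive multiple of a reference measure on a small ball around $S(\overline\varUpsilon,0)=\overline\varUpsilon$ — after possibly iterating a bounded number of steps and using continuity in $\varUpsilon$. This is the local minorization that feeds the coupling.

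\textbf{Step 2: Irreducibility and the coupling/mixing estimate.} Using~\hyperlink{(AC)}{(AC)}, from any $\varUpsilon\in\XXXX$ one can reach an $\e$-neighbourhood of $\overline\varUpsilon$ in a uniformly bounded number $n(\e)$ of steps along controls in $\KK=\supp\ell$; by a standard argument (positivity of the transition density restricted to a small neighbourhood, continuity, compactness of $\XXXX$) this upgrades to: there exist $m\ge1$ and $p>0$ with $P_m(\varUpsilon,\cdot)\ge p\,\zeta(\cdot)$ for all $\varUpsilon\in\XXXX$, where $\zeta$ is the minorizing measure from Step~1. This uniform Doeblin condition on the compact space $\XXXX$ immediately gives exponential mixing in total variation, uniformly in the initial point, and — more to the point — the uniform Feller / quasi-compactness input required by the abstract theorem of~\cite{JNPS-cpam2015}. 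One also records that $S(\overline\varUpsilon,0)=\overline\varUpsilon$ guarantees $\overline\varUpsilon$ is in the support of the stationary measure, so the reference point is legitimately reachable.

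\textbf{Step 3: Invoke the abstract LDP and identify the rate function.} With the uniform minorization of Step~2, the regularity~\hyperlink{(R)}{(R)}, and the decomposability~\hyperlink{(D)}{(D)} in hand, I would verify the precise hypothesis list of the large-deviations theorem for Markov RDS in~\cite{JNPS-cpam2015} (Kifer-type criterion plus the multiplicative ergodic theorem for perturbed transfer operators): existence of the limit $Q(V)=\lim_t t^{-1}\log\E\exp(\sum_{k<t}V(\UUpsilon_k))$ for $V\in C_b$, uniformly in the initial point, together with the needed continuity. That theorem then delivers the uniform LDP~\eqref{LDPforEM} on $\PP(\XXX)$ with a good, convex rate function $\III$, equal to the Legendre transform of $Q$; and since $\XXXX$ is compact the level sets are automatically compact. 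The affine property and the Donsker--Varadhan representation~\eqref{DVentropy} follow from the general Markov identification: $Q$ is the top of the spectrum of the Feynman--Kac operators, its conjugate is the relative-entropy functional $\llambda\mapsto\int_{\XXX_-}\Ent(\lambda(\UUpsilon,\cdot)\,|\,P_1(\varUpsilon_0,\cdot))\,\llambda_-(\dd\UUpsilon)$ restricted to shift-invariant measures, and one checks finiteness only on $\PP_s(\XXX)$; affinity is then the standard convexity-plus-concavity argument for this entropy functional (e.g.\ via its variational/sup-of-affine form $\III(\llambda)=\sup_{V}(\langle V,\llambda\rangle-Q(V))$ combined with the subadditivity that makes it additive on convex combinations).

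\textbf{Main obstacle.} The crux is Step~1 together with the uniformity in Step~2: turning the \emph{qualitative} density-of-image condition~\hyperlink{(ACL)}{(ACL)} into a \emph{quantitative, uniform-over-$\XXXX$} local lower bound on transition densities. Dense image is an open condition but not obviously a uniform one; one must exploit the compactness of $\XXXX$, the $C^2$ dependence in~\hyperlink{(R)}{(R)} with derivatives bounded on bounded sets, and the finite-dimensional reductions $\FFFF_n$ from~\hyperlink{(D)}{(D)} to get inverse-function-theorem constants that do not degenerate. The infinite-dimensional component $\mathsf Q_n\eta$ must be handled as an independent random parameter (product structure in~\hyperlink{(D)}{(D)}), averaged out after the minorization is obtained conditionally — this is where the product decomposition of $\ell$ is essential. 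Once this uniform minorization is secured, everything else is an application of the established abstract framework.
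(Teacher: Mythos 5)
Your Steps~1 and~2 pursue a Doeblin minorization of the transition kernel, and this is precisely the argument that \emph{cannot} work in this setting. Because the driving noise is spatially smooth, the one-step transition measures $P_1(\varUpsilon,\cdot)=S_*(\varUpsilon,\ell)$ issued from distinct initial points $\varUpsilon$ are in general mutually singular on the infinite-dimensional component; the paper says this explicitly (``the transition functions corresponding to different initial points $\varUpsilon=(u,p)\in\XXXX^s$ are not equivalent''). Hence there is no positive $p$ and reference measure $\zeta$ with $P_m(\varUpsilon,\cdot)\ge p\,\zeta(\cdot)$ uniformly in $\varUpsilon$, and no total-variation exponential mixing. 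The uniform irreducibility (UI) the abstract criterion really needs is far weaker---it asks only that $P_n(\varUpsilon,B(\widehat\varUpsilon,\e))\ge p$ for small balls, and this does follow from~\hyperlink{(AC)}{(AC)} and compactness. But the substantive hypothesis is the uniform Feller property (UF), and the paper obtains it by a \emph{coupling/contraction} argument, not by minorization: Hypotheses~\hyperlink{(R)}{(R)}, \hyperlink{(ACL)}{(ACL)}, \hyperlink{(D)}{(D)} are used to build a finite-dimensional feedback map $\varPhi$ satisfying the local stabilisation property~\hyperlink{(LS)}{(LS)}, which yields coupled trajectories squeezing together exponentially (Proposition~\ref{p-coupling}, Corollary~\ref{c-coupledtrajectories}); the Feynman--Kac semigroup is then shown to be uniformly Lipschitz via the estimate~\eqref{lipschitz}. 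So the role of~\hyperlink{(ACL)}{(ACL)} is to produce an approximate right inverse of $(D_\eta S)$ for the feedback control, not to produce a transition density.

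Beyond this, two further points in the paper's proof that your Step~3 glosses over: the reduction from the level-3 LDP on $\PP(\XXX)$ to LDP for empirical measures of finite words via the Dawson--G\"artner theorem, which is a necessary structural step before Kifer's criterion can be applied; and the Donsker--Varadhan identification~\eqref{DVentropy}, for which the existing literature assumes the strong Feller property (not satisfied here), so the paper gives a new argument (Propositions~\ref{p-UFP} and~\ref{p-RF}) based on a variational formula for the level-2 rate function and a martingale-convergence argument for the conditional probabilities. Invoking the identification as ``standard'' does not reflect what actually needs to be proved.
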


The above theorem is applicable to various parabolic-type PDEs with a smooth random force. In this context, the case when all the Fourier modes are forced was studied in~\cite{JNPS-cpam2015,nersesyan-2019} (see also~\cite{gourcy-2007b,WX-2018} for the case of an irregular noise). The scope of applicability of Theorem~\ref{t-uniformLDP} is much larger, allowing for treatment of PDEs with very degenerate noise, such as those studied in~\cite{shirikyan-asens2015,KNS-2018}. Furthermore, even though the Donsker--Varadhan formula~\eqref{DVentropy} is by now very well known (see~\cite[Section~5.4]{DS1989} or~\cite[Section~6.5]{DZ2000}), to the best of our knowledge, all available proofs deal with the case of {\it strong\/} Feller Markov processes. Our proof presented in Section~\ref{ss-RF} is valid for Markov processes with Feller property in a compact metric space, and its extension to the non-compact case does not encounter any difficulties.

The proof of Theorem~\ref{t-uniformLDP} is based on Kifer's criterion for LDP and a  result on the asymptotoic behaviour of generalised Markov semigroups. The scheme of the proof is presented in Section~\ref{ss-proofLDPscheme}, and the details are given in Sections~\ref{ss-asymptoticsFK}--~\ref{ss-RF}.

\begin{remark}
	It is tempting to use the explicit formula~\eqref{DVentropy} for the large deviations rate function to derive the level-$3$ fluctuation relation~\eqref{level3FR}. Namely, for an integer $k\in\Z$ and a measure $\llambda\in\PP(\XXXX^\Z)$, we denote by~$\Z_{k}$ the set of the integers not exceeding~$k$ and by~$\llambda_-^k$ the projection of~$\llambda$ to $\XXXX^{\Z_{k}}$,  so that $\Z_{0}=\Z_-$ and $\llambda_-^0=\llambda_-$. Using the explicit  formula for the relative entropy in terms of densities and the relation $\llambda_-^1(\dd\UUpsilon,\dd\varUpsilon_1)=\llambda_-(\dd\UUpsilon)\lambda(\UUpsilon,\dd\varUpsilon_1)$, for any $\llambda\in\PP_s(\XXX)$ we can write\footnote{It is easy to give a rigorous meaning to the formal expressions used in the calculations below. Since these calculations do not play a role in this work, we omit the details.}
	\begin{align*}
	\III(\llambda)&=\int_{\XXX_-}\biggl\{\int_\XXXX\log\frac{\lambda(\UUpsilon,\dd\varUpsilon_1)}{P_1(\varUpsilon_0,\dd\varUpsilon_1)}\lambda(\UUpsilon,\dd\varUpsilon_1)\biggr\}\llambda_-(\dd\UUpsilon)\\
	&=\int_{\XXX_-\times\XXXX}\log\frac{\llambda_-(\dd\UUpsilon)\lambda(\UUpsilon,\dd\varUpsilon_1)}{\llambda_-(\dd\UUpsilon)P_1(\varUpsilon_0,\dd\varUpsilon_1)}\llambda_-^1(\dd\UUpsilon,\dd\varUpsilon_1)
	=\Ent\bigl(\llambda_-^1\,|\,\llambda_-\otimes P_1\bigr),
	\end{align*}
where $\frac{\mu(\dd x)}{\nu(\dd x)}$ denotes the density of~$\mu$ with respect to~$\nu$, and $\llambda_-\otimes P_1$ stands for the measure acting on a function~$F$ by the formula 
$$
\langle F, \llambda_-\otimes P_1\rangle=\int_{\XXX_-}\biggl\{\int_{\XXXX}F(\UUpsilon,\varUpsilon_1)P_1(\varUpsilon_0,\dd\varUpsilon_1)\biggr\}\llambda_-(\dd\UUpsilon). 
$$
Now let $\theta:\XXXX^\Z\to\XXXX^\Z$ be the natural time reversal taking $(\varUpsilon_k,k\in\Z)$ to $(\varUpsilon_{-k},k\in\Z)$ and let $\ttheta:\PP(\XXXX^\Z)\to \PP(\XXXX^\Z)$ be the associated involution in the space of measures. Assuming that $P_1(\varUpsilon_0,\dd\varUpsilon_1)$ has a positive density $\rho(\varUpsilon_0,\varUpsilon_1)$ with respect to a reference measure, using the above formula for~$\III$, and carrying out some simple transformations, we get
\begin{align}
\III(\llambda\circ\ttheta)-\III(\llambda)
&=\Ent\bigl((\llambda\circ\ttheta)_-^1\,|\,(\llambda\circ\ttheta)_-\otimes P_1\bigr)-\Ent\bigl(\llambda_-^1\,|\,\llambda_-\otimes P_1\bigr)\notag\\
&= \int_\XXX\log \frac{\rho(\varUpsilon_0,\varUpsilon_1)}{\rho(\varUpsilon_1,\varUpsilon_0)}\,\llambda(\dd\UUpsilon). \label{level3FRformal}
\end{align}
Hence, denoting by $\sigma(\UUpsilon)$ the integrand in~\eqref{level3FRformal}, we obtain the level-$3$ fluctuation relation~\eqref{level3FR}, in which $\ep(\llambda)$ is the mean value of~$\sigma$ with respect to~$\llambda$.

Unfortunately, the above argument is purely formal  since the logarithmic ratio  in~\eqref{level3FRformal} may not be well defined, as is expected in the case of the Navier--Stokes system with a smooth noise. Thus, the validity of level-$3$ LDP is not sufficient for the fluctuation relation~\eqref{level3FR} to be true.  On the other hand, the above argument can be justified under some additional hypotheses on the map~$S$ and the driving noise~$\eta_k$; see~\cite{JNPS-cmp2015}.
\end{remark}

\subsection{General scheme of the proof of Theorem~\ref{t-uniformLDP}}
\label{ss-proofLDPscheme}

\subsubsection*{Reduction to LDP for finite segments}
The first step in the proof of Theorem~\ref{t-uniformLDP} consists of an application of the Dawson--G\"artner theorem, which allows one to reduce the required result to the LDP for the sequence 
\begin{equation} \label{m-empirical}
	\nnu_t^\varUpsilon(r)=\frac{
	1}{t}\sum_{k=0}^{t-1}\delta_{\UUpsilon_k^r},
\end{equation}
where $\UUpsilon_k^r=[\varUpsilon_k,\dots,\varUpsilon_{k+r-1}]$, and $\{\varUpsilon_k\}$ is the trajectory defined by~\eqref{RDS} with $\varUpsilon_0=\varUpsilon$. Thus, $\{\nnu_t^\varUpsilon(r)\}$ is a sequence of random probability measures on the $r$-fold product~$\XXXX(r)$ of the space~$\XXXX$. In view of Theorem~4.6.1 in~\cite{DZ2000}, if for all $r\ge1$ the sequence~$\{\nnu_t^\varUpsilon(r)\}$ satisfies a uniform LDP with a good rate function~$\III_r:\PP(\XXXX(r))\to[0,+\infty]$, then so does the sequence~$\{\nnu_t^\varUpsilon\}$, with the rate function
\begin{equation} \label{ratefunction-I}
\III(\llambda)=\sup_{r\ge1}\III^r\bigl(\Pi^r_*(\llambda)\bigr),
\end{equation}
where $\Pi^r:\XXX\to\XXXX(r)$ stands for the natural projection to the first~$r$ components. We shall prove the uniform LDP for~$\{\nnu_t^\varUpsilon(r)\}$ with an arbitrary~$r\ge1$, establish a variational formula for the corresponding rate function~$\III^r$, and use relation~\eqref{ratefunction-I} to obtain the Donsker--Varadhan entropy formula~\eqref{DVentropy}.

\subsubsection*{Application of Kifer's theorem}
To prove the uniform LDP for the sequence~$\{\nnu_t^\varUpsilon(r)\}_{t\ge1}$ for a fixed $r\ge1$, we shall apply Kifer's theorem~\cite{kifer-1990}, which is  recalled in Section~\ref{ss-kifer}. To this end, we define the set $\Theta=\{\theta=(t,\varUpsilon), t\in\N,\varUpsilon\in\XXXX\}$ and endow it with a partial order~$\prec$ defined by the following rule: 
$$
(t_1,\varUpsilon_1)\prec (t_2,\varUpsilon_2)\quad\mbox{if and only if}\quad t_1\le t_2. 
$$
The sequence $\{\nnu_t^\varUpsilon(r)\}$ will be regarded as a directed family indexed by $\theta\in\Theta$, and  in what follows we shall often write $\{\nnu_\theta\}$, dropping the fixed integer~$r$ from the notation. Let us suppose that, for any $V\in C(\XXXX(r))$, the limit 
\begin{equation} \label{limit-pressure}
	Q^r(V)=\lim_{\theta\in\Theta}t^{-1}\log \E \exp\bigl(t\langle V,\nnu_\theta\rangle\bigr)
\end{equation}
exists, and let~$\III^r:\MM(\XXXX(r))\to[0,+\infty]$ be  its Legendre transform; see relation~\eqref{legendre-abstract} for a definition. If, in addition to the existence of limit~\eqref{limit-pressure}, there exists  a dense subspace $\VV\subset C(\XXXX(r))$ such that, for any $V\in\VV$, the equation\footnote{The lower semicontinuity of~$\III^r$ and the inversion formula for the Legendre transform imply that Eq.~\eqref{equilibrium} has at least one solution.}
\begin{equation} \label{equilibrium}
	\langle V,\ssigma\rangle-\III^r(\ssigma)=Q^r(V)
\end{equation}
has a unique solution $\ssigma\in\PP(\XXXX(r))$, then the validity of the LDP follows immediately from Theorem~\ref{t-kifer}. We show in the next step how to reduce the above two properties (existence of limit~\eqref{limit-pressure} and uniqueness of a solution of Eq.~\eqref{equilibrium} for $V$ in a dense subspace) to a study of the large-time asymptotics of a Feynman--Kac semigroup. 

\subsubsection*{Reduction to a study of  Feynman--Kac semigroups}
Let us consider the following random dynamical system in~$\XXXX(r)$: 
\begin{equation} \label{RDS-segments}
	\UUpsilon_k(r)=\bfS\bigl(\UUpsilon_{k-1}(r),\eta_k\bigr), \quad k\ge1,
\end{equation}
where $\UUpsilon_k(r)=[\varUpsilon_k^1,\dots,\varUpsilon_k^r]$, $\{\eta_k\}$ is the sequence of i.i.d.\ random variables in~$\EEEE$ entering~\eqref{RDS}, and the mapping $\bfS^r:\XXXX(r)\times\EEEE\to\XXXX(r)$ is given by 
\begin{equation} \label{bold-S}
\bfS^r(\varUpsilon^1,\dots,\varUpsilon^r,\eta)=\bigl[\varUpsilon^2,\dots,\varUpsilon^r,S(\varUpsilon^r,\eta)\bigr].
\end{equation}
Equation~\eqref{RDS-segments} is supplemented with the initial condition
\begin{equation} \label{IC-segments}
	\UUpsilon_0(r)=\UUpsilon(r)\in\XXXX(r). 
\end{equation}
Given a function $V\in C(\XXXX(r))$, we consider the operator
\begin{equation} \label{FK-segments}
	\bigl(\PPPP_k^V(r)f\bigr)\bigl(\UUpsilon(r)\bigr)=\E \bigl(\exp\bigl\{V(\UUpsilon_1(r))+\cdots+V(\UUpsilon_k(r))\bigr\}f(\UUpsilon_k(r))\bigr),
\end{equation}
acting in the space~$C(\XXXX(r))$. The Markov property implies that the sequence $\{\PPPP_k^V(r)\}$ is a semigroup in~$C(\XXXX(r))$. A key observation is that, for $V\in C(\XXXX(r))$ and $\UUpsilon(r)\in\XXXX(r)$,
\begin{equation} \label{key-observation}
	Q^r(V)=\lim_{k\to\infty}k^{-1}\log \bigl(\PPPP_k^V(r){\bf1}\bigr)(\UUpsilon(r)), 
\end{equation}
provided that the limit on the right-hand side exists uniformly with respect to the initial point~$\UUpsilon(r)$ and does not depend on it. The latter property is a consequence of the following proposition, which is established in Section~\ref{ss-asymptoticsFK} with the help of Theorem~\ref{t-feynmankac}. 

\begin{proposition} \label{p-asymptoticsFK}
	Under the Hypotheses of Theorem~\ref{t-uniformLDP}, for any integer $r\ge1$ and any function $V\in L_b(\XXXX(r))$, there is a number $\lambda_V>0$,  a positive function $h_V\in C(\XXXX(r))$, and a measure $\mmu_V\in\PP(\XXXX(r))$ such that 
	\begin{gather} 
	\langle h_V,\mmu_V\rangle=1, \quad \PPPP_1^V(r)h_V=\lambda_V h_V, \quad
	\PPPP_1^V(r)^*\mmu_V=\lambda_V \mmu_V,
	\label{eigenvectors} \\
	\bigl\|\lambda_V^{-k}\PPPP_k^V(r)f-\langle f,\mmu_V\rangle h_V\bigr\|_{L^\infty(\XXXX(r))}\to 0 \quad\mbox{as $k\to\infty$},\label{convergenceFK}
	\end{gather}
	where $f\in C(\XXXX(r))$ is an arbitrary function.
\end{proposition}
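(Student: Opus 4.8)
The plan is to derive Proposition~\ref{p-asymptoticsFK} from the abstract Perron--Frobenius--type theorem for generalised (Feynman--Kac) Markov semigroups recalled in the Appendix as Theorem~\ref{t-feynmankac}, by checking its hypotheses for the semigroup $\{\PPPP_k^V(r)\}$ on $C(\XXXX(r))$. The verification is identical for every $r\ge1$: the segment map $\bfS^r$ of~\eqref{bold-S} inherits continuity from $S$, and the product law $\ell^{\otimes r}$ inherits the decomposable structure from $\ell$; so I would carry out the argument for $r=1$, writing $\PPPP_k^V$ and $\XXXX=\aA\times\YYYY$, and indicate at the end the trivial reduction for general $r$. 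First I would dispose of the soft hypotheses. The space $\XXXX$ is compact (by Lemma~\ref{l-attdomain} in the PDE setting, and by compactness of $\aA$ and $\YYYY$ in the abstract setting). By~\hyperlink{(R)}{(R)} the map $S$ is continuous, so $P_1(\varUpsilon,\cdot)=S_*(\varUpsilon,\ell)$ is Feller, $\PPPP_k^V$ maps $C(\XXXX)$ into itself, and the Markov property gives the semigroup identity; since $V\in L_b(\XXXX)$ and $\XXXX$ is compact, the weight $\exp V$ is bounded above and below, which supplies the growth bounds required in Theorem~\ref{t-feynmankac}, and the uniform time-continuity of $\eta\mapsto S(\varUpsilon,\eta)$ again follows from~\hyperlink{(R)}{(R)}. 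The decomposability of the noise needed by Theorem~\ref{t-feynmankac} is exactly Hypothesis~\hyperlink{(D)}{(D)}.

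Next I would establish \emph{uniform irreducibility}: there is $n\ge1$ such that, for every neighbourhood of the fixed point $\bUpsilon$ of~\hyperlink{(R)}{(R)}, one has $P_n(\varUpsilon,\cdot)\ge\delta$ on that neighbourhood for some $\delta>0$ and all $\varUpsilon\in\XXXX$. This is where Hypothesis~\hyperlink{(AC)}{(AC)} enters: given $\e>0$ it produces $n=n(\e)$ and controls $\zeta_1,\dots,\zeta_n\in\KK$ steering an arbitrary $\varUpsilon$ into the ball of radius~$\e$ around~$\bUpsilon$; since $\KK$ is the support of $\ell$ and, by~\hyperlink{(D)}{(D)}, the projection ${\mathsf P}_{N*}\ell$ has a density that is positive near the origin, every $\EEEE$-neighbourhood of each $\zeta_i$ has positive $\ell$-measure, and a compactness/continuity argument converts this into the desired uniform lower bound. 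This property also forces the eigenfunction furnished by Theorem~\ref{t-feynmankac} to be strictly positive.

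The main obstacle is the remaining hypothesis, a \emph{uniform Feller} (asymptotic equicontinuity) property for the normalised semigroup $(\PPPP_k^V\mathbf1)^{-1}\PPPP_k^V f$: up to an error that vanishes in the large-$k$ limit, $\PPPP_k^V f$ cannot oscillate much between nearby initial points. I would establish it by the coupling construction of~\cite{JNPS-cpam2015}, adapted to the present hypotheses. Given two initial points $\varUpsilon,\varUpsilon'$, one splits the driving noise at each step along the decomposition $\EEEE=\FFFF_N\oplus\GGGG_N$ of~\hyperlink{(D)}{(D)}: feed the high-frequency component ${\mathsf Q}_N\eta_k$ identically into both copies, and use the $C^1$ density of ${\mathsf P}_{N*}\ell$ to build a maximal coupling of the finite-dimensional low-frequency components together with a corrective control of the trajectory discrepancy. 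Hypothesis~\hyperlink{(ACL)}{(ACL)} --- the density of the image of $D_\eta S$ --- is precisely what makes such a corrective control available in the finite-dimensional fibre, while the regularising/compact-embedding part of~\hyperlink{(R)}{(R)} (the embedding $\VV\hookrightarrow\HH$ is compact) yields a Foias--Prodi--type contraction of the remaining modes, so that after synchronising finitely many modes the two trajectories approach each other at a controlled rate. Propagating this through the Lipschitz weights $\exp\{V(\varUpsilon_1)+\cdots+V(\varUpsilon_k)\}$ and using the uniform bounds on $V$ yields the uniform Feller estimate; this step, which fuses controllability, the smoothness of the noise, and squeezing, is where essentially all the work goes, and where the regularity assumption $V\in L_b$ is used in an essential way.

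Finally, with compactness, the Feller/continuity and growth bounds, uniform irreducibility, and the uniform Feller property in hand, Theorem~\ref{t-feynmankac} applies and yields $\lambda_V>0$, a positive $h_V\in C(\XXXX(r))$, and $\mmu_V\in\PP(\XXXX(r))$ satisfying the eigenrelations $\PPPP_1^V(r)h_V=\lambda_V h_V$ and $\PPPP_1^V(r)^*\mmu_V=\lambda_V\mmu_V$, the normalisation $\langle h_V,\mmu_V\rangle=1$, and the uniform convergence~\eqref{convergenceFK} for every $f\in C(\XXXX(r))$. The case of general $r\ge1$ follows verbatim upon replacing $(\XXXX,S,\ell)$ by $(\XXXX(r),\bfS^r,\ell^{\otimes r})$, all the hypotheses being stable under this operation.
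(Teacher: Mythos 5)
There is a genuine gap that invalidates the plan as written. You propose to check the hypotheses of Theorem~\ref{t-feynmankac} directly on the product space $\XXXX(r)$ (treating it as ``the same argument with $(\XXXX,S,\ell)$ replaced by $(\XXXX(r),\bfS^r,\ell^{\otimes r})$''), but uniform irreducibility~\hyperlink{(UI)}{(UI)} actually \emph{fails} on $\XXXX(r)$ for $r\ge 2$. The reason is structural: after $k\ge r$ steps, a trajectory of~\eqref{RDS-segments} lands in the set of $r$-tuples whose consecutive components are admissibly related by $S$, and this set is nowhere dense in the full Cartesian product $\XXXX(r)=\XXXX^r$. Thus for a generic target $\hUUpsilon\in\XXXX(r)$ there is an $\e>0$ with $P_n(\UUpsilon,B(\hUUpsilon,\e))=0$ for all $\UUpsilon$ and all $n\ge r$, and no choice of $n$ and $p$ can satisfy~\eqref{ntransition}. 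The paper's proof accordingly starts with a dedicated reduction step: it constructs a smaller compact invariant set $\aA(r)\subset\XXXX(r)$ (the attainable set from $\bUUpsilon=[\bUpsilon,\dots,\bUpsilon]$), verifies \hyperlink{(UI)}{(UI)} and \hyperlink{(UF)}{(UF)} \emph{there}, applies Theorem~\ref{t-feynmankac} with $X=\aA(r)$, and then transfers the conclusion back to $\XXXX(r)$ through a uniform-in-$k$ comparison estimate of the form $|\log(\PPPP_k^V(r)f)(\UUpsilon)-\log(\PPPP_k^V(r)f)(\tUUpsilon)|\le 2r\|V\|_\infty\|f\|_\infty$ for a suitably chosen $\tUUpsilon\in\aA(r)$ with the same $r$-th component as $\UUpsilon$. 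Your proposal contains neither this restriction nor the transfer estimate, and without them the argument breaks at the very first invocation of Theorem~\ref{t-feynmankac}.

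A secondary issue is that your sketch of uniform irreducibility only shows how to steer into a neighbourhood of the fixed point $\bUpsilon$, whereas \hyperlink{(UI)}{(UI)} requires a uniform lower bound for reaching an \emph{arbitrary} target $\hat x$. Even on the correct state space $\aA(r)$, the paper needs a two-stage control: \hyperlink{(AC)}{(AC)} (followed by some steps with zero control, to build a full admissible segment near $\bUUpsilon$) to approach $\bUUpsilon$, and then the definition of $\aA(r)$ as the attainable set to steer from a neighbourhood of $\bUUpsilon$ to a neighbourhood of the prescribed target $\hUUpsilon\in\aA(r)$. On the other hand, your treatment of the uniform Feller property via a coupling is consistent in spirit with what the paper does (through Proposition~\ref{p-coupling} and Corollary~\ref{c-coupledtrajectories}, which in turn rest on the local stabilisation property \hyperlink{(LS)}{(LS)} established from \hyperlink{(R)}{(R)}, \hyperlink{(ACL)}{(ACL)}, and \hyperlink{(D)}{(D)}); the ``Foias--Prodi'' language is somewhat off (the squeezing comes from the corrective finite-dimensional control, not from a direct mode-splitting contraction), but that part of your outline could be made to work.
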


Convergence~\eqref{convergenceFK}, combined with~\eqref{key-observation} and the inequality 
$$
t^{-1}\bigl|\log\E\exp\bigl(t\langle W,\nnu_\theta\rangle\bigr)-\log\E\exp\bigl(t\langle V,\nnu_\theta\rangle\bigr)\bigr|
\le \|W-V\|_{\infty},
$$
implies that limit~\eqref{limit-pressure} exists for any $V\in C(\XXXX(r))$. Let us briefly outline the well-known argument proving that Proposition~\ref{p-asymptoticsFK} also implies the uniqueness of a solution $\ssigma\in\PP(\XXXX(r))$ for Eq.~\eqref{equilibrium} with an arbitrary~$V$ in the space~$L_b(\XXXX(r))$, which is dense in~$C(\XXXX(r))$; cf.\ \cite[Section~4]{kifer-1990} and~\cite[Section~4]{JNPS-cpam2015}. 

\smallskip
Let us fix any $V\in C(\XXXX(r))$. For any $W\in C(\XXXX(r))$, we consider a semigroup $\QQQQ_k^W(r):C(\XXXX(r))\to C(\XXXX(r))$ with the generator given by 
$$
\QQQQ_1^W(r)f
=\lambda_V^{-1}h_V^{-1}\PPPP_1^V(e^Wh_Vf)
=\lambda_V^{-1}h_V^{-1}\PPPP_1^{V+W}(h_Vf). 
$$
In the case $W=0$, we shall write $\QQQQ_k(r)$. A straightforward calculation shows that $\QQQQ_k$ is Markovian (that is, $\QQQQ_k{\bf1}={\bf1}$) and 
$$
\QQQQ_k^W(r)f
=\lambda_V^{-k}h_V^{-1}\PPPP_k^{V+W}(h_Vf). 
$$
It follows from Proposition~\ref{p-asymptoticsFK} that, for any $W\in L_b(\XXXX(r))$, we have 
$$
Q_V^r(W):=\lim_{k\to\infty}k^{-1}\log \QQQQ_k^W(r){\bf1}=\log\lambda_{V+W}-\log\lambda_V=Q^r(V+W)-Q^r(V).
$$
By the Lipschitz continuity of~$Q^r$ and~$Q_V^r$, the left-most and right-most terms coincide for any $W\in C(\XXXX(r))$. Denoting by~$I_V^r:\PP(\XXXX(r))\to[0,+\infty]$ the Legendre transform of~$Q_V^r$, we see that
\begin{equation} \label{relationforRF}
	I_V^r(\ssigma)=I^r(\ssigma)+Q^r(V)-\langle V,\ssigma\rangle\quad\mbox{for any $\ssigma\in\PP(\XXXX(r))$}. 
\end{equation}
Thus, a measure $\ssigma\in\PP(\XXXX(r))$ is a solution for~\eqref{equilibrium} if and only if $I_V^r(\ssigma)=0$.  Now note that, by Proposition~\ref{p-asymptoticsFK}, the dual semigroup~$\QQQQ_1^V(r)^*$ has a unique stationary measure, which is given by $\ssigma_V=h_V\mmu_V$. Hence, the required uniqueness of solution of~\eqref{equilibrium} will be established if we prove that any~$\ssigma\in\PP(\XXXX(r))$ satisfying $I_V^r(\ssigma)=0$ is a stationary measure for~$\QQQQ_1^V(r)^*$. 

To this end, we repeat the argument used in the proof of Lemma~2.5 in~\cite{DV-1975-I}. Namely, as will be established in Proposition~\ref{p-UFP}, we have 
\begin{equation} \label{DV-r}
	I_V^r(\ssigma)=\sup_{g>0}\int_{\XXXX(r)}\log\frac{g}{\QQQQ_1^V(r)g}\,\dd\ssigma, 
\end{equation}
where the supremum is taken over all positive continuous functions $g:\XXXX(r)\to\R$. If $I_V^r(\ssigma)=0$, then the supremum on the right-hand side of~\eqref{DV-r} is attained at the function $g\equiv1$. It follows that, for any $d\in C(\XXXX(r))$, the function 
$$
F(\e)=\int_{\XXXX(r)}\log\frac{1+\e d}{\QQQQ_1^V(r)(1+\e d)}\,\dd\ssigma
$$
is well defined for $|\e|\ll 1$ and has a local minimum at $\e=0$. Calculating its derivative at zero, we obtain $\langle d,\ssigma\rangle-\langle \QQQQ_1^V(r)d,\ssigma\rangle=0$. Recalling that $d\in C(\XXXX(r))$ was arbitrary, we see that~$\ssigma$ is a stationary measure for~$\QQQQ_1^V(r)^*$.

\smallskip
We have thus established the first part of Theorem~\ref{t-uniformLDP}, and we turn  to the explicit expression for the rate function. Relation~\eqref{DVentropy} is proved in~\cite{DV-1983} in the case when the process is strong Feller. We present here a different argument applicable to our setting. To emphasise its universal character, we do it in a more general setting, under minimal hypotheses. 

\subsubsection*{Donsker--Varadhan entropy formula}
The first step is the derivation of a variational formula for the level-$2$ rate function; cf.~\cite[Section~2]{DV-1975-I}.  Let $\frX$ be a compact metric space and let $P_1(u,\Gamma)$ be a Feller transition function. Given $V\in C(\frX)$, we denote by $\{\PPPP_k^V\}$ a semigroup in~$C(\frX)$ whose generator is given by 
\begin{equation} \label{FK-abstract}
	(\PPPP_1^Vf)(x)=\int_\frX e^{V(y)}f(y)P_1(x,\dd y), \quad f\in C(\frX). 
\end{equation}
In the case $V\equiv0$, we shall write $\PPPP_k$.

\begin{proposition} \label{p-UFP}
Suppose that, for any $V\in C(\frX)$, the limit 
$$
Q(V)=\lim_{k\to\infty}\frac1k\log(\PPPP_k^V{\mathbf1})(x)
$$ 
exists uniformly in $x\in\frX$ and does not depend on~$x$. Then $Q$ is a $1$-Lipschitz convex function such that 
\begin{equation} \label{QVC}
Q(V+C)=Q(V)+C\quad\mbox{for any $V\in C(\frX)$ and $C\in\R$},	
\end{equation}
and its Legendre transform  $I:\MM(\frX)\to[0,+\infty]$ has the form 
\begin{equation} \label{DV-ratefunction}
I(\lambda)=
\left\{
\begin{array}{cl}
\displaystyle
	\sup_{g>0}\int_\frX\log\frac{g}{\PPPP_1g}\,\dd\lambda & \quad\mbox{for  $\lambda\in\PP(\frX)$},\\[4pt]
	+\infty &\quad\mbox{otherwise},
\end{array}
\right.
\end{equation}
where the supremum is taken over all positive functions $g\in C(\frX)$. 
\end{proposition}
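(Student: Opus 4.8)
The plan is to follow the classical Donsker--Varadhan approach, adapting it to the present setting where only the Feller property (not the strong Feller property) is assumed. First I would record the elementary properties of $Q$. The relation $(\PPPP_k^{V+C}{\mathbf 1})(x) = e^{kC}(\PPPP_k^V{\mathbf 1})(x)$ gives \eqref{QVC} immediately upon taking logarithms, dividing by $k$, and passing to the limit. Convexity of $Q$ follows from H\"older's inequality applied to the Feynman--Kac functional: for $V = \alpha V_1 + (1-\alpha)V_2$ with $\alpha\in[0,1]$, one has $\PPPP_k^V {\mathbf 1} \le (\PPPP_k^{V_1}{\mathbf 1})^\alpha (\PPPP_k^{V_2}{\mathbf 1})^{1-\alpha}$ pointwise, and taking $k^{-1}\log$ and the limit yields $Q(V)\le\alpha Q(V_1)+(1-\alpha)Q(V_2)$. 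The $1$-Lipschitz property is the pointwise bound $|\langle V-W,\nu\rangle|\le\|V-W\|_\infty$ lifted through the semigroup: $e^{-k\|V-W\|_\infty}\PPPP_k^W{\mathbf 1}\le \PPPP_k^V{\mathbf 1}\le e^{k\|V-W\|_\infty}\PPPP_k^W{\mathbf 1}$, hence $|Q(V)-Q(W)|\le\|V-W\|_\infty$. Since $Q$ is a finite convex $1$-Lipschitz function on $C(\frX)$, its Legendre transform $I(\lambda)=\sup_{V\in C(\frX)}(\langle V,\lambda\rangle - Q(V))$ is a good rate function; moreover \eqref{QVC} forces $I(\lambda)=+\infty$ unless $\langle {\mathbf 1},\lambda\rangle = 1$, and testing against constants of both signs together with $V\mapsto -V$ monotonicity arguments shows $I(\lambda)=+\infty$ unless $\lambda$ is a nonnegative measure, i.e. $\lambda\in\PP(\frX)$.

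The substantive part is the identification \eqref{DV-ratefunction} of $I$ on $\PP(\frX)$ with $\sup_{g>0}\int_\frX \log\frac{g}{\PPPP_1 g}\,\dd\lambda$. Denote the right-hand side by $\widetilde I(\lambda)$. For the inequality $\widetilde I \le I$, fix a positive $g\in C(\frX)$ and set $V = \log\frac{g}{\PPPP_1 g}$, which is continuous since $\PPPP_1 g$ is continuous and bounded below (by the Feller property and compactness). Then $\PPPP_1^V g = \PPPP_1(e^V g) = \PPPP_1\!\big(\frac{g^2}{\PPPP_1 g}\big)$; a telescoping/supermartingale estimate — more precisely, using $e^{V(y)}g(y) = \frac{g(y)^2}{(\PPPP_1 g)(y)} \le \cdots$ iterated $k$ times together with $\min g \le g \le \max g$ — yields $\PPPP_k^V {\mathbf 1} \le C_g$ uniformly in $k$ for a constant $C_g$ depending only on $\min g$, $\max g$. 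Hence $Q(V)\le 0$, so $\langle V,\lambda\rangle - Q(V) \ge \langle V,\lambda\rangle = \int_\frX\log\frac{g}{\PPPP_1 g}\,\dd\lambda$, and taking the supremum over $g$ gives $\widetilde I(\lambda)\le I(\lambda)$. (I should be slightly careful that the map $g\mapsto\log(g/\PPPP_1 g)$ need not range over a dense subset of $C(\frX)$, which is exactly why this only gives one inequality directly.)

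For the reverse inequality $I\le\widetilde I$, the idea is to show $Q(V) \le \sup_{\lambda\in\PP(\frX)}(\langle V,\lambda\rangle - \widetilde I(\lambda))$ for every $V\in C(\frX)$, which by Legendre duality (both $Q$ and $\lambda\mapsto\sup_V(\langle V,\lambda\rangle - (\cdot))$ being closed convex) is equivalent to $I\le\widetilde I$. Fix $V$; by the eigenvalue structure available in the compact Feller setting (or, concretely, applying Proposition~\ref{p-asymptoticsFK} in the special case $r=1$ to the transition function $P_1$, or a Krein--Rutman argument for the positive operator $\PPPP_1^V$ on $C(\frX)$) one obtains a positive eigenfunction $h_V\in C(\frX)$ with $\PPPP_1^V h_V = e^{Q(V)} h_V$. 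Setting $g = h_V$ gives $\PPPP_1 h_V = \PPPP_1^V(e^{-V}h_V)$, and after a short computation $\log\frac{h_V}{\PPPP_1 h_V} = V - Q(V) - \log\frac{\PPPP_1(e^{-V}h_V)}{e^{-Q(V)}h_V}\cdot(\text{correction})$; the clean version is that for the stationary measure $\ssigma_V$ of the Markovianized semigroup $\QQQQ_1 f = e^{-Q(V)}h_V^{-1}\PPPP_1^V(h_V f)$ one has $\widetilde I(\ssigma_V) \le \int \log\frac{h_V}{\PPPP_1 h_V}\,\dd\ssigma_V = \langle V,\ssigma_V\rangle - Q(V)$ (the last equality by the eigenrelation and $\QQQQ_1$-invariance of $\ssigma_V$), whence $Q(V) \le \langle V,\ssigma_V\rangle - \widetilde I(\ssigma_V) \le \sup_\lambda(\langle V,\lambda\rangle - \widetilde I(\lambda))$. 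Combining the two inequalities gives $I=\widetilde I$ on $\PP(\frX)$.

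The main obstacle I anticipate is the construction of the positive continuous eigenfunction $h_V$ of $\PPPP_1^V$ without the strong Feller property: the standard Donsker--Varadhan proof leans on strong Feller (or irreducibility) to guarantee a spectral gap and a strictly positive eigenfunction. Here I would circumvent this by invoking Proposition~\ref{p-asymptoticsFK} — which is proved in the paper precisely under the hypotheses at hand and delivers $\lambda_V$, $h_V>0$, and the convergence \eqref{convergenceFK} — in the degenerate case $r=1$, or, if one wants Proposition~\ref{p-UFP} to stand independently, by a Krein--Rutman/Schauder fixed-point argument on the convex cone of probability densities using only compactness of $\frX$ and the uniform existence of the limit $Q(V)$, which already encodes enough quasi-compactness. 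The remaining steps (the H\"older convexity bound, the Lipschitz bound, the supermartingale estimate $\PPPP_k^V{\mathbf 1}\le C_g$, and the Legendre-duality bookkeeping) are routine and I would not belabor them.
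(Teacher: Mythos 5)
Your proof of the preliminary properties of $Q$ (convexity via H\"older, $1$-Lipschitz, relation~\eqref{QVC}, and $I = +\infty$ off $\PP(\frX)$) and your proof of the first inequality $\widetilde I \le I$ (in your notation) both track the paper's Step~1; the slicker way to get $Q(V) = 0$ for $V = \log(g/\PPPP_1 g)$ is to note that $\PPPP_1^V(\PPPP_1 g) = \PPPP_1 g$, so $\PPPP_1 g$ is a positive fixed vector of $\PPPP_1^V$, and monotonicity together with $1 \le \PPPP_1 g \le \|g\|_\infty$ (for $g \ge 1$) sandwiches $\PPPP_k^V\mathbf{1}$ between $\|g\|_\infty^{-1}$ and $\|g\|_\infty$. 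For the reverse inequality, however, your argument has two genuine defects. First, the Legendre-duality reduction is backwards: since $I = Q^*$ and both $Q$ and the extension $\widetilde I_{\mathrm{ext}}$ of $\widetilde I$ by $+\infty$ off $\PP(\frX)$ are closed convex, $I \le \widetilde I_{\mathrm{ext}}$ is equivalent to $Q \ge \widetilde I_{\mathrm{ext}}^{\,*}$, i.e.\ $Q(V) \ge \sup_\lambda\bigl(\langle V,\lambda\rangle - \widetilde I(\lambda)\bigr)$, whereas you set out to prove $Q(V) \le \sup_\lambda(\,\cdot\,)$ --- which is already a trivial consequence of $\widetilde I \le I$ and proves nothing new. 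Relatedly, $\widetilde I(\ssigma_V) \le \int\log(h_V/\PPPP_1 h_V)\,\dd\ssigma_V$ also has the wrong sign: $\widetilde I$ is a \emph{supremum} over $g>0$, so testing $g=h_V$ gives $\ge$. Second, and more seriously, the positive continuous eigenfunction $h_V$ is not available under the hypotheses of the proposition, which assume only that the uniform limit $Q(V)$ exists. Proposition~\ref{p-asymptoticsFK} is proved under the full hypotheses \hyperlink{(R)}{(R)}, \hyperlink{(AC)}{(AC)}, \hyperlink{(ACL)}{(ACL)}, \hyperlink{(D)}{(D)} of Theorem~\ref{t-uniformLDP}, and a Krein--Rutman or Schauder argument needs (quasi-)compactness of $\PPPP_1^V$ that is neither assumed nor implied by the mere existence of $Q$. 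This is precisely the dependence on strong-Feller-type structure that the paper announces it is trying to remove, so the proposition must be proved without it.

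The paper's Step~2 does avoid eigenfunctions entirely: using~\eqref{QVC} one may assume $Q(V)=0$, and then the resolvent-type function $g_\e = e^V\sum_{k\ge0}e^{-\e k}\PPPP_k^V\mathbf{1}$ is well defined (the series converges uniformly since $\PPPP_k^V\mathbf{1}=e^{o(k)}$ uniformly), satisfies the identity $\PPPP_1 g_\e = e^\e\bigl(e^{-V}g_\e - 1\bigr)$, and hence obeys $\log(g_\e/\PPPP_1 g_\e) \ge V - \e$, which upon integration against $\lambda$ gives $\widetilde I(\lambda) \ge \langle V,\lambda\rangle - \e \ge I(\lambda) - 2\e$. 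For completeness: if a positive eigenfunction $h_V$ with $\PPPP_1^V h_V = e^{Q(V)}h_V$ \emph{were} available, the clean version of your idea is to take $g = e^V h_V$ (not $g=h_V$): then $\PPPP_1 g = \PPPP_1^V h_V = e^{Q(V)}h_V$, so $\log(g/\PPPP_1 g) = V - Q(V)$ \emph{identically}, and $\widetilde I(\lambda) \ge \langle V,\lambda\rangle - Q(V)$ holds for \emph{every} $\lambda$ and $V$ in one line, with no need for $\ssigma_V$ or a duality argument. But the paper's $g_\e$ is the elementary substitute that works under the stated, minimal hypotheses.
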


We now denote by~$\frX(r)$ the $r$-fold product of the space~$\frX$ and, given a function  $V\in C(\frX(r))$, consider a semigroup $\PPPP_k^V(r)$ on~$C(\frX(r))$ with the generator\,\footnote{In the language Markov processes, this means that $\PPPP_k^V(r)$ is the Feynman--Kac semigroup associated with the evolution of words of length~$r$.}
\begin{equation} \label{FKr-abstract}
	\bigl(\PPPP_1^V(r)f\bigr)(\xxx^r)=\int_\frX e^{V(x_2,\dots,x_r,y)}f(x_2,\dots,x_r,y)P_1(x_r,\dd y),
\end{equation}
where $\xxx^r=[x_1,\dots,x_r]\in\frX(r)$. In the case $V\equiv0$, we shall write $\PPPP_k(r)$. Finally, let us denote $\bfrX=\frX^\N$ and $\bfrX_-=\frX^{\Z_-}$. 

\begin{proposition} \label{p-RF}
	Suppose that, for any integer $r\ge1$ and any $V\in C(\frX(r))$, there is a uniform limit 
	$$
	\QQQ^r(V)=\lim_{k\to\infty}\frac{1}{k}\log \bigl(\PPPP_k^V(r){\bf1}\bigr)(\xxx^r), 
	$$
	independent of $\xxx^r\in\frX(r)$. Let $\III^r:\PPPP(\frX(r))\to [0,+\infty]$ be the Legendre transform of~$\QQQ^r$ and let $\III:\PP(\boldsymbol{\frX})\to[0,+\infty)$ be defined by~\eqref{ratefunction-I}. Then, for any shift-invariant measure $\llambda\in\PP(\bfrX)$, we have
	\begin{equation} \label{entropyformula}
		\III(\llambda)=\int_{\bfrX_-}\Ent\bigl(\lambda(\xxx,\cdot)\,|\,P_1(x_0,\cdot)\bigr)\,\llambda_-(\dd\xxx),
	\end{equation}
	where we use the same conventions as in~\eqref{DVentropy}.  
\end{proposition}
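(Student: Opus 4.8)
The plan is to express each Legendre transform $\III^r$ via the level-$2$ variational formula of Proposition~\ref{p-UFP}, to evaluate it at the $r$-marginal of a shift-invariant measure by a disintegration argument, and then to pass to the limit $r\to\infty$ in the defining relation~\eqref{ratefunction-I} using convexity of the relative entropy together with a martingale argument. Concretely, I would first apply Proposition~\ref{p-UFP} with $\frX$ replaced by the compact metric space $\frX(r)$ and $P_1$ replaced by the transition function $\bar P_r(\xxx^r,\cdot)$ of the chain of words of length $r$, that is, the image of $P_1(x_r,\cdot)$ under the map $y\mapsto(x_2,\dots,x_r,y)$; this kernel is Feller since $P_1$ is, and its Feynman--Kac semigroup is precisely the family $\PPPP_k^V(r)$ of~\eqref{FKr-abstract}. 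The hypothesis of the proposition to be proved is exactly the hypothesis of Proposition~\ref{p-UFP} for this chain, so that proposition gives, for every $r\ge1$,
\[
\III^r(\ssigma)=\sup_{g>0}\int_{\frX(r)}\log\frac{g}{\PPPP_1(r)g}\,\dd\ssigma\qquad\text{for }\ssigma\in\PP(\frX(r)),
\]
with $\III^r=+\infty$ off $\PP(\frX(r))$, the supremum being over strictly positive $g\in C(\frX(r))$.

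Next I would compute $\III^r(\llambda_r)$, where $\llambda_r:=\Pi^r_*\llambda$ and $\llambda$ is extended to a shift-invariant measure on $\frX^\Z$. Using stationarity to move the $\log g$ term one step forward in time, the expression under the supremum may be rewritten as an integral, against a finite-dimensional marginal of $\llambda$, of $\log g(\cdots)-\log\int_{\frX} g(\cdots,y)\,P_1(x_0,\dd y)$; disintegrating that marginal along the regular conditional law of the newly revealed coordinate $x_1$ given the relevant finite window of the past, and applying fibrewise the variational characterization $\Ent(\mu\,|\,\nu)=\sup_{\psi}(\langle\psi,\mu\rangle-\log\langle e^{\psi},\nu\rangle)$ of the relative entropy, one finds that $\III^r(\llambda_r)$ equals the $\llambda$-expectation of $\Ent(\,\cdot\,|\,P_1(x_0,\cdot))$ of the conditional law of $x_1$ given a finite segment of the past, the length of that segment increasing to infinity with $r$. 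The inequality ``$\le$'' here is immediate from the variational characterization; ``$\ge$'' follows by inserting into the supremum a strictly positive continuous approximation of the corresponding fibrewise Radon--Nikodym density, the fibres on which absolute continuity --- hence finiteness of the entropy --- fails being handled by letting a truncation level tend to infinity.

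Finally I would pass to the limit in $\III(\llambda)=\sup_{r\ge1}\III^r(\llambda_r)$. Since $P_1(x_0,\cdot)$ is measurable with respect to each of the conditioning $\sigma$-algebras $\GG_r$ appearing above, joint convexity of the relative entropy gives $\Ent(\E[\mu_\infty\mid\GG_r]\,|\,P_1(x_0,\cdot))\le\E[\Ent(\mu_\infty\,|\,P_1(x_0,\cdot))\mid\GG_r]$, where $\mu_\infty=\lambda(\xxx,\cdot)$ is the conditional law of $x_1$ given the whole past; integrating shows that $\{\III^r(\llambda_r)\}$ is nondecreasing and bounded above by $\int_{\bfrX_-}\Ent(\lambda(\xxx,\cdot)\,|\,P_1(x_0,\cdot))\,\llambda_-(\dd\xxx)$. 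For the reverse inequality one invokes Doob's martingale convergence theorem ($\E[\mu_\infty\mid\GG_r]\to\mu_\infty$ $\llambda$-almost surely), the lower semicontinuity of $\mu\mapsto\Ent(\mu\,|\,\nu)$ and Fatou's lemma. This yields~\eqref{entropyformula}.

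I expect the main obstacle to be the lower bound in the second of these steps: one must show that strictly positive continuous test functions on $\frX(r)$ suffice to saturate, after integration, the fibrewise relative entropy, with adequate uniformity and measurability (regular conditional probabilities exist since $\frX$ is Polish, but one needs a jointly measurable choice of the fibrewise density, a careful mollification, and the correct blow-up to $+\infty$ when absolute continuity fails). The remaining ingredients --- the reduction to the word chains, the manipulations with shifts and conditional expectations, and the convexity and martingale facts of the last step --- are routine.
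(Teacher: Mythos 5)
Your plan is essentially the paper's own argument. Both proofs begin by applying Proposition~\ref{p-UFP} to the word chain $\bar P_r(\xxx^r,\dd\yyy^r)=\delta_{[x_2,\dots,x_r]}(\dd y_1,\dots,\dd y_{r-1})P_1(x_r,\dd y_r)$ on $\frX(r)$, then use shift-invariance and a disintegration of $\llambda$ to rewrite $\III^r(\llambda_r)$ via conditional laws of the next coordinate, invoke the Donsker--Varadhan variational characterization of relative entropy fibrewise, and finally pass $r\to\infty$ by Doob's martingale convergence theorem, lower semicontinuity of $\Ent(\,\cdot\,|\,\nu)$, and Fatou's lemma. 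You also correctly identify the real technical obstacle: producing an admissible (jointly measurable, suitably bounded) approximation of the fibrewise log-density that nearly saturates the supremum; the paper handles this with the two-sided truncation $V_N=\{(\log\frac{\dd\llambda^{[\![-r,1]\!]}(\xxx_r,\cdot)}{\dd P_1(x_0,\cdot)})\wedge N\}\vee(-N)+N$ and the piecewise estimate~\eqref{RE-approximation}, which is in the same spirit as your proposed mollification plus truncation.

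The one place where you diverge from the paper is the upper bound. You propose to first establish, for each fixed $r$, the identity $\III^r(\llambda_r)=\int\Ent\bigl(\E[\lambda(\xxx,\cdot)\mid\GG_r]\,|\,P_1(x_0,\cdot)\bigr)\,\dd\llambda_-$ and then compare the finite-window conditional entropy to the full-past one via joint convexity of relative entropy (Jensen for $\E[\,\cdot\mid\GG_r]$). The paper avoids the per-$r$ identity entirely: in its Step~2 it rewrites $\int F_V\,\dd\llambda^{r-1}$ via the tower property so that the integrand involves the conditional law given the \emph{full} past from the outset, and then bounds $\widetilde F_V\le\Ent(\llambda(\zzz,\xxx^{r-1};\cdot)\,|\,P_1(x_{r-1},\cdot))$ directly by the variational characterization. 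Both routes are valid; the paper's version is marginally cleaner because it never needs the (nontrivial) equality for fixed $r$, only the inequality. Your approach buys a slightly stronger intermediate statement but at the cost of a second approximation argument; once the lower-bound machinery is in place the extra effort is small, so this is a difference of emphasis rather than substance.
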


Propositions~\ref{p-UFP} and~\ref{p-RF} are established in Sections~\ref{ss-UFP} and~\ref{ss-RF}, respectively. Going back to the proof of Theorem~\ref{t-uniformLDP}, we note that Proposition~\ref{p-RF} implies~\eqref{DVentropy} for $\llambda\in\PP_s(\XXX)$. The fact that $\III(\llambda)$ is infinite when $\llambda$ is not shift-invariant follows from the observation that $\nnu_t^\varUpsilon$ is {\it exponentially equivalent\/}\footnote{See Section~4.2.10 in~\cite{DZ2000} for a definition.} to a sequence of random probability measures concentrated on shift-invariant measures on~$\XXX$; see~\cite[Section~1]{DV-1983}. Namely, together with~$\nnu_t^\varUpsilon$, let us consider the sequence 
$$
\tnnu_t^\varUpsilon=t^{-1}\sum_{n=0}^{t-1}\delta_{\tUUpsilon_n(t)},
$$
where $\{\tUUpsilon_0(t)\}$ is a $t$-periodic sequence whose first~$t$ components coincide with those of~$\UUpsilon_0$, and $\tUUpsilon_n(t)$ is obtained from~$\tUUpsilon_0(t)$ by deleting the first~$t$ components. It is straightforward to check that~$\tnnu_t^\varUpsilon$ and~$\nnu_t^\varUpsilon$ are exponentially equivalent and that
$$
\IP\bigl\{\tnnu_t^\varUpsilon\in \PP_s(\XXX)\bigr\}=1\quad\mbox{for any $t\ge1$, $\varUpsilon\in\XXXX$}.
$$
Since exponentially equivalent sequences satisfy the same LDP, we conclude that the rate function~$\III$ is  infinite on $\MM(\XXX)\setminus\PP(\XXX)$. Finally, the proof of the affine property of~$\III$ given in~\cite[Theorem~3.5]{DV-1983} uses only relation~\eqref{DVentropy} and therefore remains valid in our setting. This completes the proof of Theorem~\ref{t-uniformLDP}. 

\subsection{Proof of Proposition~\ref{p-asymptoticsFK}}
\label{ss-asymptoticsFK}

We first  outline the main idea of the proof, which is based on an application of Theorem~\ref{t-feynmankac}. According to that result, to prove the required claims, we need to check the uniform Feller and uniform irreducibility properties~\hyperlink{(UF)}{(UF)} and~\hyperlink{(UI)}{(UI)}. The first of them will be established with the help of a coupling technique; see Proposition~\ref{p-coupling}. On the other hand, the uniform irreducibility is not valid in~$\XXXX(r)$, and we have to restrict ourselves to the domain of attainability~$\aA(r)$, for which the validity of~\hyperlink{(UI)}{(UI)} follows easily from the approximate controllability~\hyperlink{(AC)}{(AC)}. Thus, we can apply Theorem~\ref{t-feynmankac} with $X=\aA(r)$. Finally, to establish convergence~\eqref{convergenceFK}, we shall prove that, for any $\UUpsilon\in\XXXX(r)$, there is $\tUUpsilon\in\aA(r)$ such that 
\begin{equation} \label{approximation}
\bigl|\log(\PPPP_k^V(r)f)(\UUpsilon)-\log(\PPPP_k^V(r)f)(\tUUpsilon)\bigr|\le C\|f\|_{\infty} \quad\mbox{for all $k\ge1$},
\end{equation}
where $f\in C(\XXXX(r))$ is an arbitrary function, and the constant $C>0$ does not depend on~$\UUpsilon$, $\tUUpsilon$, and~$k$. The details are split  into three steps. 

\smallskip
{\it Step 1: Reduction to the domain of attainability\/}. 
Let us recall that $\KK\subset\EEEE$ stands for the support of the law~$\ell$. Setting $\bUUpsilon=[\bUpsilon,\dots,\bUpsilon]$, we  define a sequence $\{\aA_k(r)\}_{k\ge0}$ of compact subsets of~$\XXXX(r)$ by the following rule:
$$
\aA_0(r)=\{\bUUpsilon\}, \quad 
\aA_k(r)=\overline{\bfS^r(\aA_{k-1}(r),\KK)}\quad\mbox{for $k\ge1$}, 
$$
where $\bfS^r$ is defined by~\eqref{bold-S}, and $\overline{B}$ stands for the closure of~$B\subset\XXXX(r)$. Since $S(\bUpsilon,0)=\bUpsilon$, the sequence $\{\aA_k(r)\}_{k\ge0}$ is increasing. We  denote by~$\aA(r)\subset\XXXX(r)$ the closure of the union~$\cup_{k\ge1}\aA_k(r)$. A simple compactness argument yields that, for any $\delta>0$, there is an integer $m\ge1$ such that~$\aA(r)$ is a subset of the $\delta$-neighbourhood of~$\aA_m(r)$.

Denoting by ${\mathsf P}:\XXXX(r)\to\XXXX$ the projection taking $[\varUpsilon^1,\dots,\varUpsilon^r]$ to~$\varUpsilon^r$, let us show that ${\mathsf P}(\aA(r))=\XXXX$. Indeed, since~$\aA(r)$ is compact and~$\mathsf P$ is continuous, we see that the projection ${\mathsf P}(\aA(r))$ is closed, and so it suffices to prove that it is dense in~$\XXXX$. Fix $\varUpsilon\in\XXXX$ and $\e>0$. By~\hyperlink{(AC)}{(AC)}, there is an integer $k\ge r$ and vectors $\eta_1,\dots,\eta_k\in\KK$ such that 
\begin{equation} \label{r-approximation}
d_\HHHH\bigl(S_k(\bUpsilon;\eta_1,\dots,\eta_k),\varUpsilon\bigr)<\e. 	
\end{equation}
Let us denote by $\bfS_k^r(\UUpsilon;\eta_1,\dots,\eta_k)$ the trajectory of~\eqref{RDS-segments} issued from~$\UUpsilon$. Inequality~\eqref{r-approximation} and the definition of~$\aA_k(r)$ imply that the vector ${\mathsf P}\bfS_k^r(\UUpsilon;\eta_1,\dots,\eta_k)$ belongs to the $\e$-neighbourhood of~$\varUpsilon$. This proves the required density.

\smallskip
We now prove~\eqref{approximation}. Without loss of generality we may assume that  $f\in C(\XXXX(r))$ is non-negative. Let $\UUpsilon\in\XXXX(r)$. Since ${\mathsf P}(\aA(r))=\XXXX$, we can find $\tUUpsilon\in\aA(r)$ such that ${\mathsf P}(\UUpsilon)={\mathsf P}(\tUUpsilon)$. Since $\bfS_k^r(\UUpsilon;\eta_1,\dots,\eta_k)$ depends only on the $r^\text{\rm th}$ component of~$\UUpsilon$ for $k\ge r$, we have
\begin{equation} \label{bfS-independence}
	\bfS_k^r(\UUpsilon;\eta_1,\dots,\eta_k)
	=\bfS_k^r(\tUUpsilon;\eta_1,\dots,\eta_k)
	\quad\mbox{for $k\ge r$}. 
\end{equation}
Denoting by~$\UUpsilon_k$ and~$\tUUpsilon_k$ the left- and right-hand terms in~\eqref{bfS-independence}, we see that $\UUpsilon_k=\tUUpsilon_k$ for $k\ge r$. It follows from~\eqref{FK-segments} that 
\begin{align*}
	\bigl(\PPPP_k^V(r)f\bigr)(\UUpsilon)
	&=\E \,\bigl(\exp\bigl\{V(\UUpsilon_1)+\cdots+V(\UUpsilon_k)\bigr\}f(\UUpsilon_k)\bigr)\\
	&\le \exp(2r\|V\|_\infty)\,
	\E \,\bigl(\exp\bigl\{V(\tUUpsilon_1)+\cdots+V(\tUUpsilon_k)\bigr\}f(\UUpsilon_k)\bigr)\\
	&=\exp(2r\|V\|_\infty)\bigl(\PPPP_k^V(r)f\bigr)(\tUUpsilon). 
\end{align*}
By symmetry, we can exchange the roles of~$\UUpsilon$ and~$\tUUpsilon$, and the resulting inequalities imply~\eqref{approximation} with $C=2r\|V\|_\infty$. Thus, we need to construct $h_V\in C(\XXXX(r))$, $\mmu_V\in\PP(\XXXX(r))$, and~$\lambda_V>0$ satisfying relations~\eqref{eigenvectors} and to establish~\eqref{convergenceFK} with the $L^\infty$-norm on~$\aA(r)$. To this end, we shall prove that the Hypotheses of Theorem~\ref{t-feynmankac} are satisfied with $X=\aA(r)$ and $\CC=L_b(\aA(r))$. 

\smallskip
{\it Step 2: Uniform irreducibility\/}. Our goal is to find an integer $n\ge1$ and a number $p>0$ such that
\begin{equation} \label{UI-transition}
	\IP_\UUpsilon\bigl\{\UUpsilon^n\in B_{\XXXX(r)}(\hUUpsilon,\e)\bigr\}\ge p
	\quad\mbox{for any $\UUpsilon,\hUUpsilon\in \aA(r)$},
\end{equation}
where the subscript~$\UUpsilon$ on the left-hand side means that we consider the trajectory of~\eqref{RDS-segments} issued from~$\UUpsilon$. Simple arguments based on the concepts of the support of a measure and of compactness show that~\eqref{UI-transition} follows if for any $\e>0$ we can find an integer $n\ge1$ such that, for arbitrary $\UUpsilon,\hUUpsilon\in \aA(r)$ and some suitable $\eta_1,\dots,\eta_n\in\KK$, 
\begin{equation} \label{AC-timen}
	d_r\bigl(\UUpsilon_n,\hUUpsilon\bigr)<\e, 
\end{equation}
where $d_r$ stands for the distance in~$\XXXX(r)$ defined as the maximum of the distances between the components, and $\UUpsilon_k=\bfS_k^r(\UUpsilon,\eta_1,\dots,\eta_k)$ is the trajectory of~\eqref{RDS-segments} issued from~$\UUpsilon$; see Section~3.3.2 in~\cite{KS-book} and Section~4 in~\cite{JNPS-cpam2015}. 

The construction of the controls~$\eta_1,\dots,\eta_n$ is carried out in two steps: we first steer the trajectory to a point close to~$\bUUpsilon$ and then use the definition of~$\aA(r)$ to steer it further to the neighbourhood of~$\hUUpsilon$. More precisely, as it was mentioned in Step~1, we can find $m\ge1$ such that $\aA(r)$ is included in the $\e/2$-neighbourhood of~$\aA_m(r)$. Hence, there is $\hUUpsilon_1\in\aA_m(r)$ such that $d_r(\hUUpsilon,\hUUpsilon_1)<\e/2$. Furthermore, by the definition of~$\aA_m(r)$, we can find vectors $\eta_1,\dots,\eta_{m}\in\KK$ such that $\bfS_m^r(\bUUpsilon;\eta_1,\dots,\eta_{m})=\hUUpsilon_1$. By continuity, there is a number $\delta>0$ such that, for any $\UUpsilon'\in\XXXX(r)$ satisfying  $d_r(\UUpsilon',\bUUpsilon)\le \delta$, we have $d_r(\bfS_m^(\UUpsilon';\eta_1,\dots,\eta_{m}),\hUUpsilon_1)<\e/2$, so that 
\begin{equation} \label{approximate-control1}
	d_r\bigl(\bfS_m^r(\UUpsilon';\eta_1,\dots,\eta_{m}),\hUUpsilon\bigr)<\e.
\end{equation}
By~\hyperlink{(AC)}{(AC)}, there is an integer $l\ge1$ and controls $\zeta_{1},\dots,\zeta_l\in\KK$ such that $d_r(S_l(\Pi(\UUpsilon);\zeta_1,\dots,\zeta_l),\bUpsilon)<\delta$. This observation and the relation $S(\bUpsilon,0)=\bUpsilon$ yield that 
$$
d_r\bigl(\bfS_{l+r-1}^r(\UUpsilon;\zeta_1,\dots,\zeta_l,\underbrace{0,\dots,0}_{\text{$r-1$ times}}\!),\hUUpsilon\bigr)<\delta.
$$
Combining this with~\eqref{approximate-control1}, we derive 
$$
d_r\bigl(\bfS_{l+m+r-1}^r(\UUpsilon;\zeta_1,\dots,\zeta_l,0,\dots,0, \eta_1,\dots,\eta_m),\hUUpsilon\bigr)<\e.
$$
This proves the required inequality~\eqref{AC-timen} with the integer $n=m+l+r-1$ not depending on~$\UUpsilon$ and~$\hUUpsilon$. 

\smallskip
{\it Step 3: Uniform Feller property\/}. 
We shall show that, for any $\varUpsilon,\varUpsilon'\in\XXXX(r)$, $k\ge r$, and non-negative functions $V,f\in L_b(\XXXX(r))$, 
\begin{equation} \label{lipschitz}
	\bigl|\bigl(\PPPP_k^V(r)f\bigr)(\UUpsilon)-\bigl(\PPPP_k^V(r)f\bigr)(\UUpsilon')\bigr|
	\le C\|f\|_{L}\bigl\|\PPPP_k^V(r){\bf1}\bigr\|_{\infty}d_r(\UUpsilon,\UUpsilon'),
\end{equation}
where $C>0$ is a number not depending on $k$ and~$f$, and both $L^\infty$ and $L_b$ norms on the right-hand side are taken over~$\XXXX(r)$. This will obviously imply the validity of~\hyperlink{(UF)}{(UF)} with $\CC=\{f\in L_b(\XXXX(r)):f\ge1\}$. 

Fix the  initial points $\UUpsilon,\UUpsilon'\in\XXXX(r)$ and denote by~$\{\UUpsilon_{\!k}\}$ and~$\{\UUpsilon_{\!k}'\}$ the trajectories of~\eqref{RDS-segments} issued from them. Let~$\{\tUpsilon_k\}$ and~$\{\tUpsilon_k'\}$ be the trajectories constructed in Corollary~\ref{c-coupledtrajectories} for the initial points~$\mathsf P(\UUpsilon)=\varUpsilon_r$ and~$\mathsf P(\UUpsilon')=\varUpsilon_r'$, respectively. Note that they depend on the choice of the parameter $q\in(0,1)$ that will be specified below. We set 
$$
\tUUpsilon_{\!k}=[\tUpsilon_{k-r+1},\dots,\tUpsilon_{k}], \quad \tUUpsilon_{\!k}'=[\tUpsilon_{k-r+1}',\dots,\tUpsilon_{k}'], 
$$ 
where $\tUpsilon_{j-r}=\varUpsilon_{j}$  and $\tUpsilon_{j-r}'=\varUpsilon_{j}'$ for $2\le j\le r$. We set $D(j)=d_r(\tUUpsilon_j,\tUUpsilon_j')$, and introduce the events
\begin{align*}
	G_l(q)&=\bigl\{D(j)\le q^{j-r+1}d_r\bigl(\UUpsilon,\UUpsilon'\bigr)\mbox{ for }0\le j<l, D(l)>q^{l-r+1}d_r\bigl(\UUpsilon,\UUpsilon'\bigr) \bigr\}, \\
	G_l'(q)&=\bigl\{D(j)\le q^{j-r+1}d_r\bigl(\UUpsilon,\UUpsilon'\bigr)\mbox{ for }0\le j\le l\bigr\},
\end{align*}
where $l\ge1$. It follows from~\eqref{squeezing} that
\begin{equation} \label{squeezing-segments}
	\IP\bigl(G_l(q)\bigr)\le C_1q^l d_r(\UUpsilon,\UUpsilon')\quad\mbox{for all $l\ge0$},
\end{equation}
where $C_1>0$ does not depend on~$l$. Since the laws of the trajectories~$\{\UUpsilon_{\!k}\}$ and~$\{\UUpsilon_{\!k}'\}$ coincide with those of~$\{\tUUpsilon_{\!k}\}$ and~$\{\tUUpsilon_{\!k}'\}$ respectively, we have
$$
\bigl(\PPPP_k^V(r)f\bigr)(\UUpsilon)
=\E\,\bigl(\exp\{V(\tUUpsilon_1)+\cdots+V(\tUUpsilon_k)\}f(\tUUpsilon_k)\bigr)=\E\,\bigl(\Xi_k(\UUpsilon)f(\tUUpsilon_k)\bigr),
$$
where $\Xi_k(\UUpsilon)=\exp\{V(\tUUpsilon_1)+\cdots+V(\tUUpsilon_k)\}$, and a similar representation holds for $(\PPPP_k^V(r)f)(\UUpsilon')$. Setting 
\begin{align*}
I_k^l(\UUpsilon,\UUpsilon')&=\E\bigl\{\I_{G_l(q)}\bigl(\Xi_k(\UUpsilon)f(\tUUpsilon_k)
-\Xi_k(\UUpsilon')f(\tUUpsilon_k')\bigr)\bigr\},\\
J_k(\UUpsilon,\UUpsilon')&=\E\bigl\{\I_{G_l'(q)}\bigl(\Xi_k(\UUpsilon)f(\tUUpsilon_k)
-\Xi_k(\UUpsilon')f(\tUUpsilon_k')\bigr)\bigr\},
\end{align*}
where $\I_G$ stands for the indicator function of~$G$, we can write
\begin{align} 
	\Delta_k(\UUpsilon,\UUpsilon'):&=\bigl(\PPPP_k^V(r)f\bigr)(\UUpsilon)-\bigl(\PPPP_k^V(r)f\bigr)(\UUpsilon')\notag\\
	&=\sum_{l=1}^kI_k^l(\UUpsilon,\UUpsilon')+J_k(\UUpsilon,\UUpsilon'). \label{PkVfU}
\end{align}
The Markov property and inequality~\eqref{squeezing-segments} imply that 
\begin{align}
	I_k^l(\UUpsilon,\UUpsilon')
	&\le \E\bigl\{\I_{G_l(q)}\Xi_k(\UUpsilon)f(\tUUpsilon_k)\bigr\}=\E\bigl\{\I_{G_l(q)}\,\E\bigl(\Xi_k(\UUpsilon)f(\tUUpsilon_k)\,|\,\FF_l\bigr)\bigr\}\notag\\
	&\le\|f\|_{\infty}\exp\bigl(l\,\|V\|_{\infty}\bigr)\,\E\bigl\{\I_{G_l(q)}\,\bigl(\PPPP_{k-l}^V(r){\bf1}\bigr)(\tUUpsilon_l)\bigr\}\notag\\
	&\le \|f\|_{\infty}\exp\bigl(l\,\|V\|_{\infty}\bigr)\,\bigl\|\PPPP_k^V(r){\bf1}\bigr\|_{\infty}\,\IP\bigl(G_l(q)\bigr)\notag\\
	&\le C_1\|f\|_{\infty}\exp\bigl(l\,\|V\|_{\infty}-l\log q^{-1}\bigr)\,\bigl\|\PPPP_k^V(r){\bf1}\bigr\|_{\infty}	d_r(\UUpsilon,\UUpsilon'). \label{Ikl}
\end{align}
To estimate $J_k=J_k(\UUpsilon,\UUpsilon')$, we write
\begin{align}
	J_k
	&=\E\bigl\{\I_{G_k'(q)}\,\Xi_k(\UUpsilon)(f(\UUpsilon_k')-f(\UUpsilon_k))\bigr\}
	+\E\bigl\{\I_{G_k'(q)}(\Xi_k(\UUpsilon)-\Xi_k(\UUpsilon'))f(\UUpsilon_k)\bigr\}
	\notag\\
	&=:J_k^1(\UUpsilon,\UUpsilon')+J_k^2(\UUpsilon,\UUpsilon').
	\label{JkU}
\end{align}
Using the Lipschitz continuity of~$f$, we derive
\begin{align}
	J_k^1(\UUpsilon,\UUpsilon')
	&\le C_2\,q^k\|f\|_L\bigl\|\PPPP_k^V(r){\bf1}\bigr\|_{\infty}d_r(\UUpsilon,\UUpsilon').
	\label{Jk1U}
\end{align}
Furthermore, the Lipschitz continuity of~$V$ implies that, for $q\le1/2$, 
\begin{align*}
\bigl|\Xi_k(\UUpsilon)-\Xi_k(\UUpsilon')\bigr|
&= \Xi_k(\UUpsilon)\biggl\{\exp\biggl(\sum_{j=1}^k\bigl|V(\tUUpsilon_j)-V(\tUUpsilon_j'\bigr|\biggr)-1\biggr\}	\\
&\le \Xi_k(\UUpsilon)\Bigl\{\exp\bigl(2q\,\|V\|_L\, d_r(\UUpsilon,\UUpsilon')\bigr)-1\Bigr\}\\
&\le C_3(V)\, d_r(\UUpsilon,\UUpsilon')\,\Xi_k(\UUpsilon),
\end{align*}
on the set $G_k'(q)$. It follows that
\begin{align}
	J_k^2(\UUpsilon,\UUpsilon')
	&\le C_3(V)\,\|f\|_{\infty}\bigl\|\PPPP_k^V(r){\bf1}\bigr\|_{\infty}d_r(\UUpsilon,\UUpsilon'). 
	\label{Jk2U}
\end{align}
Combining this with~\eqref{PkVfU}--\eqref{Jk2U}, we derive 
\begin{multline*}
\bigl|\Delta_k(\UUpsilon,\UUpsilon')\bigr|
\le C_4(V)\|f\|_{L}\bigl\|\PPPP_k^V(r){\bf1}\bigr\|_{\infty}d_r(\UUpsilon,\UUpsilon')\sum_{l=0}^k\exp\bigl(l\,\|V\|_{\infty}-l\log q^{-1}\bigr).	
\end{multline*}
Taking $q<\exp(-\|V\|_{\infty})$, we arrive at~\eqref{lipschitz}.

\subsection{Proof of Proposition~\ref{p-UFP}}
\label{ss-UFP}
The fact the $Q$ is a $1$-Lipschitz convex function satisfying~\eqref{QVC} is well known, as is the relation $I(\lambda)=+\infty$ for $\lambda\in\MM(\frX)\setminus\PP(\frX)$. We thus confine ourselves to the proof of~\eqref{DV-ratefunction} for $\lambda\in\PP(\frX)$. 

\medskip
{\it Step~1}. Let us denote by~$J(\lambda)$ the supremum on the right-hand side of~\eqref{DV-ratefunction}. We first prove that 
\begin{equation} \label{8.71}
I(\lambda)\ge J(\lambda)\quad\mbox{for any $\lambda\in\PP(\frX)$}.
\end{equation}
To this end, fix $\lambda\in\PP(\frX)$ and~$\e>0$. Let $g\in C(\frX)$ be such that $g\ge1$ and 
\begin{equation} \label{8.72}
J(\lambda)< \int_\frX\log\frac{g}{\PPPP_1g}\,\dd\lambda+\e. 
\end{equation}
Set $V=\log\frac{g}{\PPPP_1g}$. A simple calculation based on the semigroup property and the inequality $1\le \PPPP_1g\le \|g\|_\infty$ shows that
$$
\|g\|_\infty^{-1}\le \PPPP_k^V{\mathbf1}\le\|g\|_\infty,
$$
and so  $Q(V)=0$. Inequality~\eqref{8.72} now implies that
$$
J(\lambda)<\langle V,\lambda\rangle+\e\le I(\lambda)+\e.
$$
Since~$\e>0$ is arbitrary, we arrive at~\eqref{8.71}. 

\smallskip
{\it Step~2}. To establish the opposite inequality in~\eqref{8.71}, we again  fix~$\e>0$. Let $V\in C(\frX)$ be such that
\begin{equation} \label{optimalV}
	I(\lambda)<\langle V,\lambda\rangle +\e,  \quad Q(V)=0.
\end{equation}
The existence of such a function follows from the definition of~$I$ and the relation~\eqref{QVC}. We now set 
$$
g_\e=e^V\sum_{k=0}^\infty e^{-\e k}\,\PPPP_k^V{\mathbf1}. 
$$
The second relation in~\eqref{optimalV} implies that the series converges uniformly in~$u\in\frX$ and defines a continuous function on~$\frX$. It is straightforward to check that
$$
\PPPP_1g_\e=\sum_{k=0}^\infty 
e^{-\e k}\,\PPPP_{k+1}^V{\mathbf1}=e^{\e}(e^{-V}g_\e-1).
$$
It follows that 
\begin{equation} \label{8.73}
\log\frac{g_\e}{\PPPP_1g_\e}\ge V-\e-\log\bigl(1-e^{V}g_\e^{-1}\bigr)\ge V-\e. 
\end{equation}
Integrating~\eqref{8.73} with respect to~$\lambda$ and using ~\eqref{optimalV}, we derive 
$$
\int_\frX\log\frac{g_\e}{\PPPP_1g_\e}\dd\lambda
\ge \langle V,\lambda\rangle-\e\ge I(\lambda)-2\e.
$$ 
Since~$\e>0$ is arbitrary, we arrive at the required inequality. 

\subsection{Proof of Proposition~\ref{p-RF}}
\label{ss-RF}
{\it Step~1: A formula for~$\III$}. 
We first note that $\PPPP_k^V(r)$ falls into the framework of Proposition~\ref{p-UFP} if we define the transition function by
$$
P_1^r(\xxx^r,\dd\yyy^r)=\delta_{[x_2,\dots,x_r]}(\dd y_1,\dots,\dd y_{r-1})P_1(x_r,\dd y_r). 
$$
Therefore, replacing $g$ by~$e^V$ in~\eqref{DV-ratefunction} and using approximation of a bounded measurable function by continuous functions, we can write
$$
\III^r(\lambda)=\sup_{V\ge0}\bigl\langle V-\log\bigl(\PPPP_1(r)e^V\bigr),\lambda\bigr\rangle,\quad \lambda\in\PP(\frX(r)),
$$
where the supremum is taken over all non-negative bounded measurable functions $V:\frX(r)\to\R$. Combining this with~\eqref{ratefunction-I}, we derive
 \begin{equation} \label{8.65}
\III(\llambda)=\sup_{r\ge1}\,\sup_{V\ge0}
\int_{\frX(r)} \Bigl(V(\xxx^r)-\log \int_{\frX}e^{V(x_2,\dots,x_{r},y)}
P_1(x_r,\dd y)\Bigr)\llambda^r(\dd \xxx^r),
\end{equation}
where $\llambda^r$ stands for the image of~$\llambda$ under the projection~$\Pi^r$ to the first~$r$ components. Since~$\llambda$ is shift-invariant, we can replace $[x_2,\dots,x_{r},y]$ by $[x_1,\dots,x_{r-1},y]$ in the integral over~${\frX}$. Let us denote by~$\llambda^r(\xxx^{r-1};\,\cdot\,)$ the regular conditional probability of~$\llambda^r$ given the first $r-1$ coordinates and let 
$$
F_V(\xxx^{r-1})=\int_{\frX}V(\xxx^{r-1},y)\llambda^r(\xxx^{r-1};\dd y)
-\log \int_{\frX}e^{V(\xxx^{r-1},y)}P_1(x_{r-1},\dd y).
$$
We can  rewrite~\eqref{8.65} as
\begin{equation} \label{8.66}
\III(\llambda)=\sup_{r\ge1}\,\sup_{V\ge0}
\int_{\frX(r-1)} F_V(\xxx^{r-1})\,\llambda^{r-1}(\dd \xxx^{r-1}).
\end{equation}
Denoting by~$\JJJ(\llambda)$ the expression on the right-hand side of~\eqref{entropyformula}, we now prove that~$\III$ is bounded from above and from below by~$\JJJ$.  

\smallskip
{\it Step~2: Upper bound\/}. We recall the convention that any shift-invariant measure $\llambda\in\PP(\XXX)$ can be extended (in a unique manner) to~$\frX^\Z$. For any integer $r\ge1$, we write $\Z_r=\Z\cap(-\infty,r]$ and, given a measure $\llambda\in\PP_s(\bfrX)$, denote by $\llambda(\zzz,\xxx^{r-1};\,\cdot\,)$ the regular conditional probability of the projection of~$\llambda$ to~$\frX^{\Z_r}$ given $[\zzz,\xxx^{r-1}]\in\frX^{\Z_{r-1}}$. Let 
$$
{\widetilde F}_V(\zzz,\xxx^{r-1})
=\int_{\frX} V(\xxx^{r-1},y)\llambda(\zzz,\xxx^{r-1};\dd y)
-\log \int_{\frX}e^{V(\xxx^{r-1},y)}P_1(x_{r-1},\dd y).
$$
It is straightforward to check that
\begin{equation} \label{8.67}
\int_{{\frX(r-1)}} F_V(\xxx^{r-1})\,\llambda^{r-1}(\dd \xxx^{r-1})
=\int_{\bfrX_-}\int_{{\frX(r-1)}}{\widetilde F}_V(\zzz,\xxx^{r-1})
\llambda(\dd \zzz,\dd \xxx^{r-1}). 
\end{equation}
The definition of the relative entropy implies that
$$
{\widetilde F}_V(\zzz,\xxx^{r-1})
\le \Ent\bigl(\llambda(\zzz,\xxx^{r-1};\,\cdot\,)\,|\,P_1(x_{r-1},\cdot)\bigr)
\quad\mbox{for any $V\in C_b(\frX(r))$}.
$$
Substituting this and~\eqref{8.67} into~\eqref{8.66}, we obtain
$$
\III(\llambda)\le \sup_{r\ge1}\int_{\bfrX_-}\int_{{\frX(r-1)}}
\Ent\bigl(\llambda(\zzz,\xxx^{r-1};\cdot)\,|\,P_1(x_{r-1},\cdot)\bigr)
\llambda(\dd \zzz,\dd \xxx^{r-1}).
$$
In view of the stationarity of~$\llambda$, the expression under the supremum on the right-hand side of this inequality does not depend on~$r$ and coincides with~$\JJJ(\llambda)$. 

\smallskip
{\it Step~3: Lower bound}. 
For any integer $r\ge0$, we define the space~$\bfrX_r=\frX^{[\![-r,0]\!]}$ and denote by $\xxx_r=[x_{-r},\dots,x_0]$ its points. To prove that $\III\ge\JJJ$, we first rewrite~\eqref{8.66} in the form
\begin{equation} \label{8.69}
\III(\llambda)=\sup_{r\ge0}\sup_{V\ge0}
\int_{\bfrX_r}F_V(\xxx_r)\llambda_r(\dd \xxx_r).
\end{equation}
Here, bounded measurable functions~$V$ depend on $r+2$ variables, and, with a slight abuse of notation, we write
\begin{equation} \label{8.070}
F_V(\xxx_r)=\int_{\frX}V(\xxx_r,y)\llambda^{[\![-r,1]\!]}(\xxx_r;\dd y)
-\log \int_{\frX}e^{V(\xxx_r,y)}P_1(x_0,\dd y),
\end{equation}
where $\llambda^{[\![-r,1]\!]}(\xxx_r;\,\cdot\,)$ denotes the regular conditional probability for the projection of~$\llambda$ to~$\frX^{[\![-r,1]\!]}$ given $\xxx_r\in\bfrX_r$. In what follows, it will be convenient to consider $\llambda^{[\![-r,1]\!]}(\xxx_r;\,\cdot\,)$ as a function of the entire trajectory $\xxx\in\bfrX_-$ depending only on~$\xxx_r$, and accordingly we shall write $\llambda^{[\![-r,1]\!]}(\xxx;\,\cdot\,)$. Let $\Pi_r:\bfrX_-\to \bfrX_r$ be the projection taking~$\xxx$ to~$\xxx_r$ and let $\{\FF_r\}_{r\ge0}$ be the filtration on~$\bfrX_-$ generated by the projections $\Pi_s$, $0\le s\le r$. We claim that, for any bounded measurable function $f:\frX\to\R$, the sequence $\{\langle f,\llambda^{[\![-r,1]\!]}(\xxx,\cdot)\rangle\}_{r\ge0}$ considered on the probability space $(\bfrX_-,\llambda_-)$ is a martingale with respect to the filtration~$\{\FF_r\}$. To see this, let us consider the probability space $(\frX^\Z,\llambda)$ and the bounded random variable $\xi([x_j]_{j\in\Z})=f(x_1)$ on it. By the definition of the regular conditional probability, we have 
$$
\langle f,\llambda^{[\![-r,1]\!]}(\xxx,\cdot)\rangle=\E^{\llambda_-}(\xi\,|\,\FF_{r}),
$$ 
where $\E^{\llambda_-}(\cdot\,|\,\cdot)$ denotes the conditional expectation with respect to~$\llambda_-$. This relation immediately implies the required martingale property. 

Applying Doob's martingale convergence theorem, we see that, for any bounded measurable function $f:\frX\to\R$, the sequence $\langle f,\llambda^{[\![-r,1]\!]}(\xxx,\cdot)\rangle$ converges for $\llambda_-$-almost every $\xxx\in\bfrX_-$. By Theorem~A.5.2 in~\cite{KS-book}, there is a random probability measure $\mu(\xxx,\,\cdot\,)$ such that 
\begin{equation} \label{8.072}
\llambda^{[\![-r,1]\!]}(\xxx,\,\cdot\,)\rightharpoonup\mu(\xxx,\,\cdot\,)
\quad\mbox{for $\llambda_-$-a.e.~$\xxx\in \bfrX_-$}. 
\end{equation}
It is straightforward to check that $\mu(\xxx,\,\cdot\,)$ is the projection to the first component of the regular conditional probability with respect to~$\llambda$ given $\xxx\in\bfrX_-$. By uniqueness, it must coincide with~$\lambda(\xxx,\,\cdot\,)$ for $\llambda_-$-almost every $\xxx\in\bfrX_-$. 

\smallskip
We now recall that 
\begin{equation} \label{RE-formula}
	E_r(\xxx_r):=\Ent\bigl(\llambda^{[\![-r,1]\!]}(\xxx_r,\,\cdot\,)\,|\,P_1(x_0,\,\cdot\,)\bigr)=\sup_{V\ge0}F_V(\xxx_r),
\end{equation}
where the supremum is taken over all non-negative bounded measurable functions $V:\bfrX_r\times\frX\to\R$. Moreover, the supremum in~\eqref{RE-formula} is saturated by the sequence of functions 
$$
V_N(\xxx_r,y)=\biggl\{\biggl(\log\frac{\dd\llambda^{[\![-r,1]\!]}(\xxx_r,\cdot)}{\dd P_1(x_0,\,\cdot\,)}\biggr)\wedge N\biggr\}\vee (-N)+N. 
$$
Therefore, for any $\e>0$ and $\xxx_r\in\bfrX_r$, we can find an integer $N=N_\e(\xxx_r)\ge1$ such that  
\begin{equation} \label{RE-approximation}
F_{V_{N}}(\xxx_r)\ge
\left\{
\begin{array}{cl}
	E_r(\xxx_r)-\e\quad&\mbox{if $E_r(\xxx_r)<\infty$},\\[3pt]
	\e^{-1}\quad&\mbox{if $E_r(\xxx_r)=\infty$}.
\end{array} 
\right.
\end{equation}
If $E_r(\xxx_r)=+\infty$ for some $r\ge0$ on a set of positive $\llambda_r$-measure, then both~$\III(\llambda)$ and~$\JJJ(\llambda)$ are equal to~$+\infty$. In the opposite case, combining~\eqref{8.69}, \eqref{RE-formula}, and~\eqref{RE-approximation}, we see that 
\begin{equation} \label{8.073}
\III(\llambda)\ge \sup_{r\ge0}\int_{\bfrX_-}
\Ent\bigl(\llambda^{[\![-r,1]\!]}(\xxx,\cdot)\,|P_1(x_0,\,\cdot\,)\bigr)\llambda_-(\dd\xxx)-\e. 
\end{equation}
Since the relative entropy is  lower-semicontinuous and non-negative, using~\eqref{8.072} with $\mu(\xxx,\cdot)=\lambda(\xxx,\cdot)$ and  applying Fatou's lemma, we obtain
\begin{multline*}
\liminf_{r\to\infty}\int_{\bfrX_-}
\Ent\bigl(\llambda^{[\![-r,1]\!]}(\xxx,\cdot)\,|P_1(x_0,\,\cdot\,)\bigr)\,\llambda_-(\dd\xxx)\\
\ge \int_{\bfrX_-}
\Ent\bigl(\lambda(\xxx,\,\cdot)\,|\,P_1(x_0,\,\cdot\,)\bigr)\llambda_-(\dd\xxx). 
\end{multline*}
Combining this with~\eqref{8.073} and recalling that~$\e>0$ was arbitrary, we derive the required inequality $\III(\llambda)\ge\JJJ(\llambda)$. This completes the proof of Proposition~\ref{p-RF}.

\section{Image of measures under non-degenerate maps}
\label{s-measureimage}
In this section, we discuss some general results on the absolute continuity of the image of measures under finite-dimensional smooth maps. This type of properties are well known, and a comprehensive presentation can be found in~\cite{bogachev2010}. Here we only need a sufficient condition for the existence, regularity, positivity and Lipschitz dependence on the parameter of the density for the image measure. For the reader's convenience, we give a proof of the results we need (Theorems~\ref{T:2.7} and~\ref{T:2.8}) in Appendix~\ref{s-proofs-measures}.

\smallskip
Let~$\EEEE$ be a separable Hilbert space, let $\XXXX$ be a compact subset in a separable Hilbert\footnote{The reader not willing to deal with infinite-dimensional manifolds may assume that~$\HHHH$ has the same structure as in Section~\ref{s-ldp-control}.} manifold~$\HHHH$ with a metric~$d$ (see Section~II.1 in~\cite{lang1985}), and let~$\YYYY$ be a  Riemannian manifold without boundary. We consider a function $F:\XXXX\times\EEEE\to\YYYY$ satisfying the following  hypothesis for an integer $k\ge0$. 

\begin{itemize}
	\item [\hypertarget{(F)}{\bf(F)}]
	\sl For any  $\varUpsilon\in\XXXX$, the mapping $\eta\mapsto F(\varUpsilon,\eta)$ is~$(k+1)$-times continuously differentiable, and the derivative $\p_\eta^{k+1}F$ is a Lipschitz-continuous function  of~$(\varUpsilon,\eta)$ on bounded subsets of~$\XXXX\times\EEEE$.
\end{itemize}

Our aim is to study the image of signed measures on~$\EEEE$ under maps with the above property. Namely, given $\varUpsilon\in \XXXX$ and $\ell\in\MM(\EEEE)$, we denote by $\lambda_\varUpsilon=F_*(\varUpsilon,\ell)$ the image of~$\ell$ under the map $\eta\mapsto F(\varUpsilon,\eta)$. We impose  the following hypothesis on~$\ell$. 

\begin{itemize}
	\item[\hypertarget{(P)}{\bf(P)}] 
	\sl The support of~$\ell$ is a compact subset of~$\EEEE$, and there is an orthonormal basis~$\{\varphi_j\}$ in~$\EEEE$ such that~$\ell$ can be written as the product of its one-dimensional projections~$\ell_j$ onto the vector spaces~$\EEEE_j$ spanned by~$\varphi_j$. Moreover, for any $j\ge1$, the measure~$\ell_j$ has a density $\rho_j\in C^k(\EEEE_j)$. 
\end{itemize}
 
\begin{theorem}\label{T:2.7}
Let us assume that a function~$F$ and a measure $\ell\in \PP(\EEEE)$ satisfy Hypotheses~\hyperlink{(F)}{\rm(F)} and~\hyperlink{(P)}{\rm(P)} with some integer $k\ge0$, and for any~$\varUpsilon\in \XXXX$ and $\eta\in\supp\ell$  we have 
\begin{equation} \label{2.7}
	\Image\bigl(\p_\eta F(\varUpsilon,\eta)\bigr)=T_y\YYYY, 
\end{equation}
where $T_y\YYYY$ stands for the tangent space of~$\YYYY$ at~$y=F(\varUpsilon,\eta)$. Then, for any $\varUpsilon\in\XXXX$, the measure $\lambda_\varUpsilon$ is absolutely continuous with respect to the volume measure on~$\YYYY$, and the corresponding density $\rho(\varUpsilon,y)$, defined for $\varUpsilon\in\XXXX$ and $y\in\YYYY$, is continuous in~$(\varUpsilon,y)$ and $C^k$-smooth in~$y$. Moreover, there is $C>0$ such that 
\begin{equation} \label{2.8b}
\|\rho(\varUpsilon_1,\cdot)-\rho(\varUpsilon_2,\cdot)\|_{C^k(\YYYY)} 
\le C\,d_\XXXX(\varUpsilon_1,\varUpsilon_2)
\quad \mbox{for all $\varUpsilon_1,\varUpsilon_2\in\XXXX$}.
\end{equation}	
\end{theorem}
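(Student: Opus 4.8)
The plan is to reduce the regularity claim and the estimate~\eqref{2.8b} to a classical fact about pushing forward a smooth, compactly supported density by a finite-dimensional submersion, and then to track the dependence on the parameter~$\varUpsilon$ quantitatively.

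\emph{Localisation and finite-dimensional reduction.} Since $\YYYY$ is a manifold without boundary and $F(\XXXX\times\supp\ell)$ is compact, a locally finite atlas and a subordinate partition of unity reduce the problem to the case where $\YYYY$ is an open subset of $\R^n$, $n=\dim\YYYY$; the density on $\YYYY$ is then recovered by multiplying the finitely many local densities by the (smooth) Jacobians of the chart changes and summing. For each $(\varUpsilon,\eta)\in\XXXX\times\supp\ell$, surjectivity of $\p_\eta F(\varUpsilon,\eta)$ onto the $n$-dimensional tangent space, together with the fact that $\{\varphi_j\}$ spans $\EEEE$, forces $\p_\eta F(\varUpsilon,\eta)$ to be already onto when restricted to $\EEEE^m:=\lspan(\varphi_1,\dots,\varphi_m)$ for some finite $m$. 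As surjectivity of a linear map into $\R^n$ is an open condition and $\p_\eta F$ is continuous by~\hyperlink{(F)}{\rm(F)}, compactness of $\XXXX\times\supp\ell$ yields a single $m$ valid throughout. Fix such an $m$, split $\eta=\eta'+\eta''$ with $\eta'\in\EEEE^m$, $\eta''\in(\EEEE^m)^\perp$, and use~\hyperlink{(P)}{\rm(P)} to factor $\ell=\ell^m\otimes\ell^\perp$, where $\ell^m=\ell_1\otimes\cdots\otimes\ell_m$ has compactly supported density $p:=\rho_1\otimes\cdots\otimes\rho_m\in C^k(\EEEE^m)$ and $\ell^\perp$ has compact support. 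Since $\supp\ell=\supp\ell^m\times\supp\ell^\perp$, for every $\eta'\in\supp\ell^m$ and $\eta''\in\supp\ell^\perp$ the map $G_{\varUpsilon,\eta''}\colon\eta'\mapsto F(\varUpsilon,\eta'+\eta'')$, defined near $\supp\ell^m$ in $\EEEE^m\cong\R^m$, is $C^{k+1}$ with surjective derivative on $\supp\ell^m$.

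\emph{Core lemma.} The key deterministic fact is: if $G\colon\Omega\subset\R^m\to\R^n$ is $C^{k+1}$ and $DG$ has rank $n$ on $\supp p$ for a density $p\in C^k$ with compact support in $\Omega$, then $G_*(p\,\dd x)$ has a density in $C^k(\R^n)$. I would prove it by the standard straightening argument: near any $x_0\in\supp p$, after reordering coordinates so that the first $n$ columns of $DG(x_0)$ are independent, $\Psi(x',x'')=(G(x',x''),x'')$ is a $C^{k+1}$ diffeomorphism of a neighbourhood (inverse function theorem), so with $\Phi=\Psi^{-1}$ one has $G\circ\Phi(v,w)=v$ and the corresponding piece of the pushforward has density $v\mapsto\int p(\Phi(v,w))\,|\!\det D\Phi(v,w)|\,\dd w$; the integrand is $C^k$ in $(v,w)$ and compactly supported in $w$, so differentiating under the integral sign up to order $k$ gives a $C^k$ local density, and a finite partition of unity on $\supp p$ patches these together. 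The point that matters for~\eqref{2.8b} is that the whole construction can be carried out uniformly over a compact set of parameters on which $G$ varies, with a Lipschitz modulus in the $C^{k+1}$-topology of $G$: this follows from the quantitative inverse function theorem with parameters, using that $(\varUpsilon,\eta'')\mapsto G_{\varUpsilon,\eta''}$ is Lipschitz into $C^{k+1}$ on bounded sets thanks to the Lipschitz clause on $\p_\eta^{k+1}F$ in~\hyperlink{(F)}{\rm(F)}.

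\emph{Assembly, and the main difficulty.} Applying the lemma to $G_{\varUpsilon,\eta''}$ with density $p$ produces densities $\widetilde\rho(\varUpsilon,\eta'',\cdot)\in C^k(\R^n)$ of $(G_{\varUpsilon,\eta''})_*(\ell^m)$ that are jointly continuous in $(\varUpsilon,\eta'',y)$, $C^k$ in $y$, uniformly bounded in $C^k$, and satisfy $\|\widetilde\rho(\varUpsilon_1,\eta'',\cdot)-\widetilde\rho(\varUpsilon_2,\eta'',\cdot)\|_{C^k}\le C\,d_\XXXX(\varUpsilon_1,\varUpsilon_2)$ uniformly in $\eta''$. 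Since $\lambda_\varUpsilon=\int_{\supp\ell^\perp}(G_{\varUpsilon,\eta''})_*(\ell^m)\,\ell^\perp(\dd\eta'')$ by Fubini, $\lambda_\varUpsilon$ has density $\rho(\varUpsilon,y)=\int\widetilde\rho(\varUpsilon,\eta'',y)\,\ell^\perp(\dd\eta'')$, and the uniform bounds permit passing $y$-derivatives and $\varUpsilon$-limits under the integral, giving continuity of $\rho$ in $(\varUpsilon,y)$, $C^k$-regularity in $y$, and~\eqref{2.8b}; transferring back through the charts on $\YYYY$ finishes the proof. The only genuinely delicate step is the one in the previous paragraph: existence of a $C^k$ density is classical (see~\cite{bogachev2010}), but obtaining~\eqref{2.8b} requires performing the straightening with neighbourhoods, cut-offs and inverse maps that depend controllably — in fact Lipschitz-continuously in the $C^k$-topology — on $\varUpsilon$, which is precisely where compactness of $\XXXX\times\supp\ell$ and the Lipschitz hypothesis in~\hyperlink{(F)}{\rm(F)} are used.
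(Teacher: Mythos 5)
Your proposal is correct and arrives at the same conclusion, but it organises the reduction to finite dimensions differently from the paper, and the comparison is worth noting. The paper first localises in $\eta$ over the compact $\supp\ell$ with a smooth partition of unity (Step~2), and then, on each piece, exploits \eqref{2.7} to select exactly $d=\dim\YYYY$ basis vectors $g_1,\dots,g_d\in\{\varphi_j\}$ (depending on the chosen point $(\widehat\varUpsilon,\hat\eta)$) whose images under $\p_\eta F$ are linearly independent; it then runs the implicit function theorem in exactly $d$ variables, so the change of variables is a single diffeomorphism and the Jacobian factor $D(\varUpsilon,\eta)$ appears explicitly in formula~\eqref{local-density}. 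You instead observe that, because $\cup_m\EEEE^m$ is dense, surjectivity into the finite-dimensional tangent space already holds after restriction to some $\EEEE^m$, and since this is an open condition and the $\EEEE^m$ increase, compactness of $\XXXX\times\supp\ell$ gives a single global $m$. You then split $\ell=\ell^m\otimes\ell^\perp$, and for each fixed $\eta''$ invoke a classical pushforward-by-submersion lemma for a $C^{k+1}$ map $\R^m\to\R^n$, where the localisation and straightening happen inside the finite-dimensional lemma rather than at the level of $\supp\ell$.

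Both routes are sound. The paper's choice of working with exactly $d$ control variables makes the pushforward a genuine local diffeomorphism, so the density is written in one integral without needing to integrate out surplus finite-dimensional coordinates; the price is a second partition of unity in $\eta$ on top of the one in $\varUpsilon$. Your choice of a single global $m$ removes the $\eta$-localisation at the level of $\ell$ but pushes it into the finite-dimensional lemma, where you must integrate out $m-d$ coordinates via the coarea/straightening step. The nontrivial content — obtaining the Lipschitz estimate~\eqref{2.8b} — is, as you correctly identify, the same in both: one must verify that the charts, inverse maps, and Jacobians produced by the (implicit or inverse) function theorem depend Lipschitz-continuously on the parameter $\varUpsilon$ in a sufficiently strong topology, which is exactly where compactness and the Lipschitz clause on $\p_\eta^{k+1}F$ in~\hyperlink{(F)}{\rm(F)} enter. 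You flag this step but keep it at the level of a plausibility argument; the paper is terser still on this point. For a fully rigorous write-up, either route would need the quantitative (parameter-dependent) inverse/implicit function theorem spelled out, with the lower bound on the relevant Jacobian (analogous to~\eqref{lower-bound-determinant} in the paper) made explicit.
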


\begin{theorem}\label{T:2.8} 
Suppose that the hypotheses of Theorem~\ref{T:2.7} are satisfied with some integer $k\ge0$, and let points  $\widehat \varUpsilon\in \XXXX$,  $\hat \eta\in \EEEE$,  $\hat y\in\YYYY$ be such that $F(\widehat\varUpsilon,\hat\eta)=\hat y$ and 
\begin{gather} \label{2.14}
\rho_j(\hat \eta_j)>0\quad\text{for all $j\ge1$}, 	
\end{gather}
where $\hat\eta_j=\lag \hat\eta,\varphi_j\rag_\EEEE$. Then $\rho(\widehat\varUpsilon,\hat y)>0$.
\end{theorem}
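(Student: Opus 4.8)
The plan is to deduce the claim from Theorem~\ref{T:2.7} together with a local lower bound for the image measure obtained by isolating a finite-dimensional transversal slice. Put $d=\dim\YYYY$. By Theorem~\ref{T:2.7}, the measure $\lambda_{\widehat\varUpsilon}=F_*(\widehat\varUpsilon,\ell)$ has a continuous density $\rho(\widehat\varUpsilon,\cdot)$ with respect to the volume measure on~$\YYYY$, so it suffices to exhibit a neighbourhood~$V$ of~$\hat y$ and a constant $c>0$ with $\lambda_{\widehat\varUpsilon}(B)\ge c\,\Vol(B)$ for every Borel set $B\subset V$: this forces $\rho(\widehat\varUpsilon,\cdot)\ge c$ a.e.\ on~$V$, and continuity then gives $\rho(\widehat\varUpsilon,\hat y)\ge c>0$.

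First I would select the transversal directions. Since $\rho_j$ is continuous and $\rho_j(\hat\eta_j)>0$, each $\hat\eta_j$ belongs to $\supp\ell_j$; using the product structure of~$\ell$ and the compactness of $\supp\ell$ from Hypothesis~\hyperlink{(P)}{\rm(P)}, one checks that $\hat\eta\in\supp\ell$, so that~\eqref{2.7} applies at the point $(\widehat\varUpsilon,\hat\eta)$ and the bounded operator $A:=\p_\eta F(\widehat\varUpsilon,\hat\eta):\EEEE\to T_{\hat y}\YYYY$ is onto. As $\{\varphi_j\}$ is dense in~$\EEEE$ and $A$ is continuous, $A(\lspan\{\varphi_j\})$ is a dense linear subspace of the finite-dimensional space $T_{\hat y}\YYYY$, hence equals it; therefore one can pick indices $j_1<\dots<j_d$ such that $A\varphi_{j_1},\dots,A\varphi_{j_d}$ form a basis of $T_{\hat y}\YYYY$. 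Set $E_0=\lspan\{\varphi_{j_1},\dots,\varphi_{j_d}\}$ and let $E_1$ be the closed span of the remaining~$\varphi_j$, so that $\EEEE=E_0\oplus E_1$ orthogonally. By Hypothesis~\hyperlink{(P)}{\rm(P)}, $\ell=\ell^{(0)}\otimes\ell^{(1)}$, where $\ell^{(0)}=\ell_{j_1}\otimes\dots\otimes\ell_{j_d}$ on~$E_0$ has the continuous density $v\mapsto\prod_{i=1}^d\rho_{j_i}(\lag v,\varphi_{j_i}\rag)$ with respect to Lebesgue measure, and $\ell^{(1)}$ is the product of the other factors on~$E_1$. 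Writing $\hat\eta=\hat v+\hat w$ with $\hat v\in E_0$, $\hat w\in E_1$, the above density is bounded below by some $c_0>0$ on a neighbourhood~$O_0$ of~$\hat v$, and $\hat w\in\supp\ell^{(1)}$ by the same argument as for~$\hat\eta$.

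Next I would apply a parametrised inverse function theorem. Fixing a chart of~$\YYYY$ near~$\hat y$, consider the map $\Psi(v,w)=\bigl(F(\widehat\varUpsilon,v+w),w\bigr)$ for $(v,w)$ near $(\hat v,\hat w)$ in $E_0\times E_1$. It is $C^1$ by Hypothesis~\hyperlink{(F)}{\rm(F)} (here $k+1\ge1$), and its differential at $(\hat v,\hat w)$ is block-triangular with diagonal blocks $A|_{E_0}$ --- an isomorphism onto $T_{\hat y}\YYYY$ by the choice of the~$\varphi_{j_i}$ --- and the identity of~$E_1$, hence invertible. By the inverse function theorem, $\Psi$ restricts to a diffeomorphism from an open set $\OO\ni(\hat v,\hat w)$ onto an open set $\OO'\ni(\hat y,\hat w)$, with inverse of the form $(y,w)\mapsto(v(y,w),w)$, where $v$ is~$C^1$ and, after shrinking $\OO,\OO'$ to relatively compact sets, the Jacobian factor $J(y,w):=|\det\p_y v(y,w)|$ is continuous and bounded below by some $m_0>0$. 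Then, for a Borel set $B$ contained in a small enough neighbourhood~$V$ of~$\hat y$, using monotonicity, Fubini's theorem and the change of variables formula for the diffeomorphism~$\Psi$,
\begin{align*}
\lambda_{\widehat\varUpsilon}(B)
&=(\ell^{(0)}\otimes\ell^{(1)})\bigl(\{(v,w):F(\widehat\varUpsilon,v+w)\in B\}\bigr)\\
&\ge(\ell^{(0)}\otimes\ell^{(1)})\bigl(\{(v,w)\in\OO:F(\widehat\varUpsilon,v+w)\in B\}\bigr)\\
&=\int_B\Bigl(\int_{\{w:(y,w)\in\OO'\}}\prod_{i=1}^d\rho_{j_i}\bigl(\lag v(y,w),\varphi_{j_i}\rag\bigr)\,J(y,w)\,\ell^{(1)}(\dd w)\Bigr)\Vol(\dd y).
\end{align*}
Shrinking~$V$ and choosing a neighbourhood $W_1\ni\hat w$ with $\overline{W_1}$ compact so that $V\times W_1\subset\OO'$ and $v(y,w)\in O_0$ for $(y,w)\in V\times W_1$, the inner integral is at least $c_0 m_0\,\ell^{(1)}(W_1)$, and $\ell^{(1)}(W_1)>0$ because $\hat w\in\supp\ell^{(1)}$. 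This yields the desired bound with $c=c_0 m_0\,\ell^{(1)}(W_1)$, and hence $\rho(\widehat\varUpsilon,\hat y)>0$.

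Given Theorem~\ref{T:2.7}, the argument is largely routine; the point requiring most care is the last display, namely separating the $d$ transversal directions $\varphi_{j_1},\dots,\varphi_{j_d}$ from the infinitely many others and checking that pushing forward the product measure $\ell^{(0)}\otimes\ell^{(1)}$ by the triangular diffeomorphism~$\Psi$ produces a measure absolutely continuous in~$y$ with the indicated density --- this is where one handles the infinite-dimensional transversal factor~$\ell^{(1)}$, whose marginal need not be absolutely continuous. The hypotheses enter at exactly two places: the finiteness of $\dim\YYYY$ (to pass from ``dense image'' to ``onto'', hence to produce the finitely many transversal modes), and the assumption $\rho_j(\hat\eta_j)>0$ for all~$j$ (used both to guarantee $\hat\eta\in\supp\ell$, so that~\eqref{2.7} is available at $(\widehat\varUpsilon,\hat\eta)$, and to give $\ell^{(1)}(W_1)>0$ together with $\rho^{(0)}\ge c_0>0$ near~$\hat v$).
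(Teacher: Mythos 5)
Your proof is correct and takes essentially the same approach as the paper: decompose $\EEEE$ into a $d$-dimensional transversal slice (chosen from the basis $\{\varphi_j\}$ so that the derivative restricted to it is onto $T_{\hat y}\YYYY$) and its orthogonal complement, invoke the inverse/implicit function theorem to perform a finite-dimensional change of variables in the slice, use Fubini together with the product structure of $\ell$ and the positivity of the slice density near $\hat\eta^1$, and finally use that $\hat\eta^2$ lies in $\supp\ell^2$ (both facts coming from~\eqref{2.14}) to conclude. The only cosmetic difference is that the paper shortens the argument by reusing the explicit local-density formula already derived in the proof of Theorem~\ref{T:2.7} (localised with a smooth bump function $\psi$) and evaluating it at $\hat y$, whereas you rederive the local change of variables and establish the slightly stronger lower bound $\lambda_{\widehat\varUpsilon}(B)\ge c\Vol(B)$ on a whole neighbourhood, then invoke continuity of the density from Theorem~\ref{T:2.7}.
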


\section{Application to the 2D Navier--Stokes system with a particle}
\label{s-NS}

\subsection{Large Deviation Principle}
\label{ss-ldpproof}
In this section we prove Theorem~\ref{t-ldp}. To this end, we shall make use of  Theorem~\ref{t-uniformLDP}. According to that result, it suffices to check the validity of Hypotheses~\hyperlink{(R)}{(R)}, \hyperlink{(AC)}{(AC)}, \hyperlink{(ACL)}{(ACL)}, and~\hyperlink{(D)}{(D)}.  Let $s\ge3$ be an integer, set  $\HH=V^s$, $\YYYY=\T^2$, $\EEEE=L^2(J,V^s)$, where $J=[0,1]$, and denote by~$\aA$ and~$\XXXX$ the sets~$\aA^s$ and~$\aA^s\times\T^2$, respectively. It is straightforward to see that any point $\varUpsilon^0=(0,p)\in\XXXX^s$ satisfies the relation $S(\varUpsilon^0,0)=\varUpsilon^0$. Moreover, in view of the regularising property of the Navier--Stokes system and infinite differentiability of its resolving operator with respect to the initial condition and the right-hand side (see Chapters~I and~VII in~\cite{BV1992}), the regularity property~\hyperlink{(R)}{(R)} is satisfied with~$\VV=V^{s+1}$. Furthermore, it is easy to see that~\hyperlink{(D)}{(D)} follows immediately from~\hyperlink{(N)}{(N)}. It remains to verify the validity of~\hyperlink{(AC)}{(AC)} and~\hyperlink{(ACL)}{(ACL)}. 

\subsubsection*{Approximate controllability~\hyperlink{(AC)}{(AC)}}
We need to prove that any initial state $\varUpsilon=(u_0,p)\in\XXXX^s$ can be steered to the small neighbourhood of any target state $\widehat\varUpsilon=(\hat u,\hat p)\in\XXXX^s$ with the help of controls belonging to the support~$\KK^s$ of the law $\ell=\DD(\eta_k)$. This will be done in three steps. 

\smallskip
{\it Step~1: Reduction to $\hat u=0$}.
Suppose we can prove that, for an arbitrary $r>0$ and $p_1\in\T^2$, any point $\varUpsilon=(u_0,p)\in\XXXX^s$ can be steered to the $r$-neighbourhood of $(0,p_1)$ at some time $n_1\ge1$ depending only on~$r$. Fix $\e>0$ and points $\varUpsilon, \widehat\varUpsilon\in\XXXX^s$. Recalling the definition of the sets~$\aA_k^s$ and using the fact that they form an increasing sequence, we can find an integer $n_2\ge1$, depending only on~$\e>0$, an initial point $\varUpsilon_1=(0,p_1)\in\XXXX^s$, and controls $\zeta_1^2,\dots,\zeta_{n_2}^2\in\KK^s$, such that 
$$
d_{\XXXX^s}\bigl(S_{n_2}(\varUpsilon_1;\zeta_1^2,\dots,\zeta_{n_2}^2),\widehat\varUpsilon\bigr)\le\e/2.
$$ 
The continuity of~$S$ implies the existence of a number $r>0$ such that 
\begin{equation} \label{dXSn2}
	d_{\XXXX^s}\bigl(S_{n_2}(\varUpsilon_1';\zeta_1^2,\dots,\zeta_{n_2}^2),\widehat\varUpsilon\bigr)\le\e,
\end{equation}
for $d_{\XXXX^s}(\varUpsilon_1',\varUpsilon_1)\le r$. By assumption, there are controls $\zeta_1^1,\dots,\zeta_{n_1}^1\in\KK^s$ such that 
$$
d_{\XXXX^s}\bigl(S_{n_1}(\varUpsilon;\zeta_1^1,\dots,\zeta_{n_1}^1),\varUpsilon_1\bigr)\le r. 
$$
Combining this with inequality~\eqref{dXSn2} in which $\varUpsilon_1'=S_{n_1}(\varUpsilon;\zeta_1^1,\dots,\zeta_{n_1}^1)$ and setting $n=n_1+n_2$, we see that the controls $(\zeta_1,\dots,\zeta_n)=(\zeta_1^1,\dots,\zeta_{n_1}^1,\zeta_1^2,\dots,\zeta_{n_2}^2)$ are such that~\eqref{AC} holds.  

\smallskip
{\it Step~2: Reduction to $u_0=0$}. Suppose  that, given arbitrary $\delta>0$ and $\hat p\in\T^2$, we can steer  any point $\varUpsilon_1=(0,p_1)\in\XXXX^s$ to the $\delta$-neighbourhood of~$(0,\hat p)$ at some time $n_2\ge1$ depending only on~$\delta$. Fix $\e>0$ and $\varUpsilon=(u_0,p)$, $\widehat\varUpsilon=(0,\hat p)$  in~$\XXXX^s$. By continuity of~$S$, we can find a number $r>0$ such that, for any point $\varUpsilon_1'=(u_1,p_1)\in\XXXX^s$ satisfying the inequality $d_{\XXXX^s}(\varUpsilon_1',\varUpsilon_1)\le r$ with $\varUpsilon_1=(0,p_1)$, there are controls $\zeta_1^2,\dots,\zeta_{n_2}^2\in\KK^s$ for which~\eqref{dXSn2} holds. In view of the dissipativity of the homogeneous Navier--Stokes system in the space~$V^s$, we can find an integer $n_1\ge1$ depending only on~$r$ such that 
$$
d_{\XXXX^s}\bigl(S_{n_1}(\varUpsilon;0,\dots,0),\varUpsilon_1\bigr)\le r,
$$
where $\varUpsilon_1=(0,p_1)$ with some $p_1\in\T^2$. Combining this with~\eqref{dXSn2}, we see that~\eqref{AC} holds for  the controls $(\zeta_1,\dots,\zeta_n)=(0,\dots,0,\zeta_1^2,\dots,\zeta_{n_2}^2)$, where $n=n_1+n_2$. 

\smallskip
{\it Step~3: Proof in the case $u_0=\hat u=0$}.
It suffices to prove that, for any points $p,\hat p\in\T^2$ whose distance from each other is less than a fixed number $\varkappa>0$, we can find $\zeta\in\KK^s$ satisfying $S(\varUpsilon,\zeta)=\widehat\varUpsilon$, where $\varUpsilon=(0,p)$ and $\widehat\varUpsilon=(0,\hat p)$. Indeed, suppose this property is established and fix any points $\varUpsilon,\widehat\varUpsilon\in\XXXX^s$ with zero $u$-component. We can find an integer $n\ge1$, depending only on~$\varkappa$, and points $\varUpsilon_k=(0,p_k)$, $k=1,\dots,n-1$, such that $|p_k-p_{k-1}|\le\varkappa$, where $p_0=p$ and $p_n=\hat p$. Applying the above-mentioned exact controllability result, we can find $\zeta_1,\dots,\zeta_n\in\KK^s$ such that $S(\varUpsilon_{k-1},\zeta_k)=\varUpsilon_k$ for $1\le k\le n$. This implies that $S_n(\varUpsilon;\zeta_1,\dots,\zeta_n)=\widehat\varUpsilon$. 

\smallskip
We now use the argument described in Section~\ref{ss-scheme} (see the scheme of the proof of Theorem~\ref{t-ldp}) to establish the exact controllability for~$p$ and~$\hat p$ that are close enough. Let us define an $\XXXX^s$-valued curve $\varUpsilon(t)=(u(t,x),y(t))$ by relations~\eqref{uy-control}, in which $U_1,U_2,\gamma,\varphi_1,\varphi_2$ are as in~\eqref{U1U2gamma} and~\eqref{gammadot}. The endpoints of~$\{\varUpsilon(t),t\in J\}$ coincide with~$\varUpsilon$ and~$\widehat\varUpsilon$, and Eqs.~\eqref{NS}, \eqref{ode-fluidparticle} hold with a right-hand side~$\eta$ given by~\eqref{eta-control}. We need to prove that~$\Pi g\in\KK^s$. 

To establish  this,  we write~$\Pi g$ in the form~\eqref{gtx} and note that 
\begin{equation} \label{Crnormalpha}
	\|\alpha_j\|_{C^r(J)}\le M_r|p-\hat p|,
\end{equation}
where the numbers~$M_r$ do not depend~$p$ and~$\hat p$. Expanding~$\alpha_j$ in the orthonormal basis~$\{\psi_l\}$,
$$
\alpha_j(t)=\sum_{l\ge1}\alpha_{jl}\psi_l(t), 
$$
we can rewrite~\eqref{gtx} in the form
\begin{equation} \label{gtx-expanded}
	(\Pi g)(t,x)=\sum_{|j|_1\le 2}\sum_{l\ge1}b_j\alpha_{jl}\psi_l(t)e_j(x),
\end{equation}
where $|j|_1=|j_1|+|j_2|$. In view of~\eqref{Crnormalpha} and Poincar\'e's inequality~\eqref{poincareineq}, we have 
\begin{equation} \label{alphaljestimate}
	|\alpha_{jl}|\le C_rM_r |p-\hat p|\,l^{-\theta r}\quad\mbox{for $|j|_1\le2$,  $l,r\ge1$.}
\end{equation}
On the other hand, it follows from~\hyperlink{(N)}{(N)} that if $\e>0$ is sufficiently small, then any function~$h$ of the form~\eqref{functioninKK} (where $\Lambda=\{|j|_1\le2\}$) belongs to the support~$\KK^s$, provided that the coefficients $h_{lj}$ satisfy the inequality
\begin{equation} \label{hjl}
	|h_{lj}|\le \e l^{-\beta}\quad \mbox{for $|j|_1\le2$, $l\ge1$}. 
\end{equation}
Choosing $r\ge1$ so large that $\theta r>\beta$ and assuming that $C_rM_r|p-\hat p|\le\e$, we derive from~\eqref{alphaljestimate} and~\eqref{hjl} that~$\Pi g\in\KK^s$. This completes the proof of~\hyperlink{(AC)}{(AC)}. 

\subsubsection*{Approximate controllability of the linearisation~\hyperlink{(ACL)}{(ACL)}}
We need to prove the density of the image for the linear operator 
$$
(D_\eta S)(\varUpsilon,\eta):L^2(J,V^s)\to V^s\times T_{y(1)}\T^2,
$$ 
where $y(1)=S^y(\varUpsilon,\eta)$. In what follows, we  identify the tangent space~$T_y\T^2$ with~$\R^2$. 

We first note that, for any $\zeta\in\EEEE$, the vector function $(D_\eta S)(\varUpsilon,\eta)\zeta=(v,z)$ is a solution of the equations 
\begin{align}
	\p_tv+Lv+Q(u)v&=\zeta, \label{NS-linearised}\\
	\dot z-v\bigl(t,y(t)\bigr)-(D_xu)\bigl(t,y(t)\bigr)z(t)&=0,
	\label{ode-linearised}
\end{align}
where $L=-\nu\Pi\Delta$, $Q(u)v=\Pi(\langle u,\nabla\rangle v+\langle v,\nabla\rangle u)$, and $(u,y)=\bfS(\varUpsilon,\eta)$. These equations are supplemented with the initial conditions
\begin{equation} \label{IClinearised}
	v(0)=0, \quad z(0)=0. 
\end{equation}
Since $s\ge3$, it is easy to check the regularity of~$u$ and~$v$ is sufficient to ensure the well-posedness of~\eqref{NS-linearised}, \eqref{ode-linearised}, \eqref{IClinearised}. For given $\hat v\in V^s$, $\hat q\in\R^2$, and $\e>0$, we need to find $\zeta\in L^2(J,V^s)$ such that 
\begin{equation} \label{eclose}
	\|v(1)-\hat v\|_s<\e, \quad |z(1)-\hat q|<\e. 
\end{equation}
There is no loss of generality in assuming that $\hat v\in C^\infty(\T^2)$. Let $\gamma:J\to\R^2$ be defined by the relation 
\begin{equation} \label{gammat}
	\gamma(t)=\alpha(t)\hat q+\beta(t)\bigl(\hat v(y(1))+(D_xu)(1,y(1))\,\hat q\bigr),
\end{equation}
where $\alpha,\beta\in C^\infty(J)$ are such that 
$$
\mbox{$\alpha(t)=\beta(t)=0$ for $t\le1/3$},
\quad\mbox{$\alpha(t)=1$ for $t\ge2/3$}, \quad \beta(1)=0, \quad \dot\beta(1)=1.
$$ 
Writing $(\varphi_1(t), \varphi_2(t)):=\dot\gamma(t)-(D_xu)(t,y(t))\gamma(t)$ and choosing a small parameter $\delta>0$, we set 
$$
v_\delta(t,x)=\theta_\delta(t)\bigl(\varphi_1(t)U_1(x-y(t))+\varphi_2(t)U_2(x-y(t))\bigr)+\bigl(1-\theta_\delta(t)\bigr)\hat v,
$$
where the functions~$U_i$ are defined in~\eqref{U1U2gamma}, and~$\theta_\delta\in C^\infty(\R)$ is such that $0\le \theta_\delta\le1$, $\theta_\delta(t)=1$ for $t\le 1-\delta$ and $\theta_\delta(t)=0$ for $t\ge1$. Finally, we denote by~$z(t)$ the solution of~\eqref{ode-linearised} with zero initial condition. Then the vector function $(v,z)$, defined on~$J$, belongs to $\XX_s\times C^1(J,\R^2)$, vanishes at $t=0$, and satisfies~\eqref{NS-linearised}, \eqref{ode-linearised} on~$J$ with  $\zeta\in C(J,V^s)$. Moreover, we have $v(1)=\hat v$. Therefore, to complete the proof, it remains to show that the second inequality in~\eqref{eclose} holds for an appropriate choice of~$\delta>0$. 

To establish  this, we first note that there is a number~$R_1$ not depending on~$\delta$ such that 
\begin{equation} \label{supvDu}
\sup_{(t,x)\in J\times\T^2}\Bigl(\bigl|v_\delta(t,x)\bigr|+\bigl|(D_xu)(t,x)\bigr|\Bigr)\le R_1. 
\end{equation}
Combining this with~\eqref{ode-linearised} and applying Gronwall's inequality, we can find $R_2$ such that 
\begin{equation} \label{supzdelta}
	\sup_{t\in J}|\dot z_\delta(t)|\le R_2. 
\end{equation}
The choice of~$v_\delta$ implies that the function $\gamma(t)$, vanishing at $t=0$, is a solution of~\eqref{ode-linearised} on the interval~$[0,1-\delta]$. By uniqueness of solutions for ODEs, we conclude that $z_\delta(1-\delta)=\gamma(1-\delta)$. Therefore, using~\eqref{supzdelta}, we derive
$$
|z_\delta(1)-\hat q|
\le |z_\delta(1)-z_\delta(1-\delta)| 
+ |z_\delta(1-\delta)-\hat q|
\le R_2\delta+|\gamma(1-\delta)-\hat q|. 
$$
Since both  right-most part  terms of these inequalities go to zero as $\delta\to0$, we conclude that $|z_\delta(1)-\hat q|<\e$ for $\delta\ll1$. This completes the proof of~\hyperlink{(ACL)}{(ACL)} and of Theorem~\ref{t-ldp}. 

\begin{remark} \label{r-exactcontrol}
	The approximate controllability property established above shows that the $y$-component of the linearised problem is exactly controllable. Namely, for any $\hat q\in \R^2$ there is $\zeta\in L^2(J,V^s)$ such that $(D_\eta S^y)(\varUpsilon, \eta)\zeta=\hat q$. This follows immediately from the approximate controllability and the fact that the image of the derivative is a linear subspace in the finite-dimensional vector space~$T_y\T^2$.  
\end{remark}

\subsection{Regularity of laws and convergence}
\label{ss-regularitylaw}
In this section we prove Theorem~\ref{t-particle}. The proof is divided into three steps. 

\smallskip
{\it Step 1: Reduction\/}. Let us denote by~$\XXXX^\infty$ the intersection of the sets~$\XXXX^s$ with $s\ge1$. Consider the following two statements: 
\begin{itemize}
	\item [\hypertarget{(i)}{\bf(i)}]
For any $\varUpsilon\in\XXXX^\infty$, the measure~$\llambda_t^\varUpsilon$ has a density~$\rho_t^\varUpsilon\in C^\infty(\T^{2t})$.
	\item [\hypertarget{(ii)}{\bf(ii)}]
For any $k\ge1$ there is an integer $s\ge3$ such that the mapping $\varUpsilon\mapsto\rho_t^\varUpsilon$ acting from~$\XXXX^\infty$ to~$C^k(\T^{2t})$ is Lipschitz continuous with respect to the norm of~$\XXXX^s$. 
\end{itemize}
Assuming that these  two statements hold, the regularity part of Theorem~\ref{t-particle} is deduced as follows. The regularising property of the Navier--Stokes system implies that, for any $\varUpsilon\in\XXXX^3$, the function $S(\varUpsilon,\eta)$ belongs to~$\XXXX^\infty$ with probability~$1$. Recall the definition of the map~$F^t$ in Section~\ref{ss-scheme} (see the scheme of the proof of Theorem~\ref{t-particle}), and note that~$\llambda_t^\varUpsilon$ coincides with the image of the $t$-fold product of~$\ell$ under the map~$F^t(\varUpsilon,\cdot)$. The independence of the random variables $\eta_1,\dots,\eta_t$ implies that
\begin{equation}
	\llambda_{[\![2,t]\!]}^\varUpsilon
	=\int_{\XXXX^\infty}F_*^{t-1}(\upsilon,\underbrace{\ell\otimes\cdots\otimes\ell}_{\mbox{\footnotesize $t-1$ times}})\,\MMMM_1^\varUpsilon(\dd\upsilon)
	=\int_{\XXXX^\infty}\llambda_{t-1}^\upsilon\,\MMMM_1^\varUpsilon(\dd\upsilon).
\end{equation}
Combining this with the property~\hyperlink{(i)}{(i)}, we see that~$\llambda_{[\![2,t]\!]}^\varUpsilon$ has a density given by 
\begin{equation} \label{rhoupsilon2t}
	\rho_{[\![2,t]\!]}^\varUpsilon(y_1,\dots,y_{t-1})=\int_{\XXXX^\infty}\rho_{t-1}^\upsilon(y_1,\dots,y_{t-1})\,\MMMM_1^\varUpsilon(\dd\upsilon). 
\end{equation}
Furthermore, for any integer $s\ge3$, the map $S(\varUpsilon,\eta)$ is Lipschitz continuous on~$\XXXX^3\times \EEEE$ with range in~$\XXXX^s$, where $\EEEE=L^2(J,V^s)$. It follows that the same holds for the map $\varUpsilon\mapsto\MMMM_1^\varUpsilon$ acting from~$\XXXX^3$ to the space $\PP(\XXXX^s)$ endowed with the dual-Lipschitz metric. Denoting by~$C_s$ the corresponding Lipschitz constant and using the property~\hyperlink{(ii)}{(ii)} and relation~\eqref{rhoupsilon2t}, for any $\varUpsilon_1,\varUpsilon_2\in\XXXX^3$ we can write
\begin{align*}
	\bigl\|\rho_{[\![2,t]\!]}^{\varUpsilon_1}-\rho_{[\![2,t]\!]}^{\varUpsilon_2}\bigr\|_{C^k}
	&=\sup_{\alpha,y}\, \biggl|\int_{\XXXX^s}
	(\p^\alpha\rho_{t-1}^{\upsilon})(y)\bigl(\MMMM_1^{\varUpsilon_1}-\MMMM_1^{\varUpsilon_2}\bigr)
	(\dd\upsilon)\biggr| \\
	&\le C_s\bigl\|\rho_{t-1}^\cdot \bigr\|_{L_b(\XXXX^s,C^k)}d_{\XXXX^3}(\varUpsilon_1,\varUpsilon_2),
\end{align*}
where the supremum is taken over all $y\in\T^{2(t-1)}$ and $\alpha\in\Z_+^2$ with $|\alpha|\le k$. We have thus established the Lipschitz continuity of the function $\varUpsilon\mapsto\rho_{[\![2,t]\!]}^\varUpsilon$ from~$\XXXX^3$ to $C^k$ for any $k\ge1$. 

To prove that~$\llambda_t$ possesses a regular density, we note that the stationary measure~$\MMMM$ is concentrated on~$\XXXX^\infty$. It follows that (cf.~\eqref{rhoupsilon2t})
\begin{equation*} 
	\llambda_t=\int_{\XXXX^\infty}\llambda_t^\upsilon\,\MMMM(\dd\upsilon). 
\end{equation*}
In view of~\hyperlink{(i)}{(i)}, this implies the existence of a density given by
\begin{equation} \label{stationarymeasure-y}
	\rho_t(y_1,\dots,y_t)=\int_{\XXXX^\infty}\rho_t^\upsilon(y_1,\dots,y_t)\,\MMMM(\dd\upsilon). 
\end{equation}
The infinite smoothness of~$\rho_t$ follows now from the property~\hyperlink{(ii)}{(ii)}.

\smallskip
{\it Step 2: Proof of~\hyperlink{(i)}{\rm(i)} and~\hyperlink{(ii)}{\rm(ii)}\/}. Fix an integer $k\ge1$, set $s=k+3$, and recall that~$\llambda_t^\varUpsilon$ is the image of the $t$-fold product $\ell^t=\ell\otimes\cdots\otimes\ell$ under the map
$$
F^t:\XXXX^s\times \underbrace{L^2(J,V^s)\times\cdots\times L^2(J,V^s)}_{\mbox{\footnotesize $t$ times}}\to\T^{2t}, \quad
(\varUpsilon,\eta)\mapsto (y_1,\dots,y_t). 
$$
We claim that the hypotheses of Theorem~\ref{T:2.7} are satisfied for~$F^t$. If this  holds, then the statements~\hyperlink{(i)}{(i)} and~\hyperlink{(ii)}{(ii)}  follow immediately from the conclusions of Theorem~\ref{T:2.7}. 

In our setting, the manifold~$\YYYY$ is the $t$-fold product~$\T^{2t}$ of the two-dimensional torus, $\XXXX=\XXXX^s$, and~$\EEEE$ is the $t$-fold product of~$L^2(J,V^s)$. Hence, to prove the existence of a $C^k$-smooth density and its Lipschitz continuity in~$\varUpsilon\in\XXXX^s$, we need to check that the law~$\ell^t$ satisfies~\hyperlink{(P)}{(P)}, the map~$F^t$ is $(k+1)$-times continuously differentiable, and that the derivative of~$F^t$ is surjective.

It is straightforward to see that if Hypothesis~\hyperlink{(N)}{(N)} holds for~$\ell$, then so does~\hyperlink{(P)}{(P)} with the orthonormal basis $\{\varphi_j\}_{j\ge1}$ obtained from $\{\psi_l e_j,l\ge1,j\in\Z_*^2\}$ by normalisation. It follows that the product measure~$\ell^t$ also satisfies~\hyperlink{(P)}{(P)}. To prove that~$F^t$ is $C^{k+1}$, we recall that the resolving operator for the Navier--Stokes system is an  infinitely smooth map from $V^r\times L^2(J,V^r)$ to~$\XX_r(J)$ for any $r\ge 2$. Moreover, standard results in the theory of ODEs imply that the solution of~\eqref{ode-fluidparticle} is $C^{r-2}$ function of the initial condition and the vector field $u\in \XX_r(J)$, and  the $C^{k+1}$ regularity of $F^1(\varUpsilon,\eta)=S^y(\varUpsilon,\eta)$ follows. The $C^{k+1}$ regularity of~$F^t$ follows now by recurrence. 

Finally, we prove the  surjectivity of the derivative $(D_{\eeta^t}F^t)(\varUpsilon,\eeta^t)$, where $\eeta^t=(\eta_1,\dots,\eta_t)\in\EEEE$. We argue by induction. For $t=1$, this is the claim of Remark~\ref{r-exactcontrol}. Assume that the statement holds 
for $t$. Then 
$$
F^{t+1}(\varUpsilon,\eeta^{t+1})=\bigl(F^{t}(\varUpsilon,\eeta^{t}),S^y(\varUpsilon_t,\eta_{t+1}\bigr)\bigr), \quad \varUpsilon_t=S_t(\varUpsilon,\eeta^t). 
$$
It follows that $(D_{\eeta^{t+1}}F^{t+1})(\varUpsilon,\eeta^{t+1})$ can be written as 
$$
\bigl((D_{\eeta^t}F^t)(\varUpsilon,\eeta^t), (D_\varUpsilon S^y)(\varUpsilon_t,\eta_{t+1})\circ(D_{\eeta^t}S_t)(\varUpsilon,\eeta^t)+(D_\eta S^y)(\varUpsilon_t,\eta_{t+1})\bigr).
$$
Using the induction hypothesis and Remark~\ref{r-exactcontrol}, we conclude that this mapping is surjective.

\smallskip
{\it Step~3: Convergence\/}.  It remains to prove the convergence part  of Theorem~\ref{t-particle}. 
Note that, by Theorem~\ref{t-expomixing},  
\begin{equation} \label{expoconvM}
	\sup_{\varUpsilon\in\XXXX^3}\bigl\|\MMMM_n^\varUpsilon-\MMMM\bigr\|_L^*\le C\,e^{-\gamma n}\quad\mbox{for $n\ge1$},
\end{equation} 
where $C$ and~$\gamma$ are positive numbers, and the dual-Lipschitz norm can be taken over~$\XXXX^s$ for any integer $s\ge3$. It now follows from  relations~\eqref{stationarymeasure-y} and~\eqref{rhon1nt} that 
\begin{align*}
\|\rho_{[\![n+1,n+t]\!]}^\varUpsilon-\rho_t\|_{C^k(\T^{2t})}
&\le \sup_{\alpha,y}\, \biggl|\int_{\XXXX^s}
	(\p^\alpha\rho_{t}^{\upsilon})(y)\bigl(\MMMM_n^{\varUpsilon}-\MMMM\bigr)
	(\dd\upsilon)\biggr|\\
	&\le \bigl\|\rho_t^\cdot\bigr\|_{L_b(\XXXX^s,C^k)} \bigl\|\MMMM_n^\varUpsilon-\MMMM\bigr\|_L^*\\
	&\le Ce^{-\gamma n}\bigl\|\rho_t^\cdot\bigr\|_{L_b(\XXXX^s,C^k)}. 
\end{align*}
This completes the proof of  Theorem~\ref{t-particle}. 

\subsection{Strict positivity of densities}
\label{ss-EP}
This section is devoted to the proof of Theorem~\ref{t-entropy}, which was reduced to deriving inequality~\eqref{positivity-density-particle}. To this end, we shall apply Theorem~\ref{T:2.8}. As it was established in the proof of Theorem~\ref{t-particle}, if we take $s=3$, then the map $S^y(u_0,\eta)$ acting from $\EEEE:=L^2(J,V^3)$ to~$\T^2$ satisfies the hypotheses of Theorem~\ref{T:2.7} with $k=1$, so that the measure $S_*^y(\varUpsilon,\ell)=P_1(\varUpsilon,\cdot)$ has a $C^1$-smooth density~$\rho_1^\varUpsilon(y)$ with respect to the Lebesgue measure on~$\T^2$. Let us denote by~$\{\varphi_{lj}(t,x)\}$ the orthonormal basis in~$\EEEE$ formed of the normalised orthogonal functions $\psi_l(t)e_j(x)$, where $l\ge1$ and $j\in\Z_*^2$. To establish the strict positivity of~$\rho_1^\varUpsilon$, it suffices to prove that, for any $\varUpsilon=(u_0,p_0)\in\XXXX^s$ and $\hat p\in\T^2$, there is $\hat\eta\in\KK$ such that 
\begin{align}
	S^y(\varUpsilon,\hat\eta)&=\hat p,\label{exact-control}\\
	\rho_{lj}\bigl(\langle \hat \eta,\varphi_{lj}\rangle_\EEEE\bigr)&>0
	\quad\mbox{for all $l\ge1$, $j\in\Z_*^2$}.
	\label{rho-lj-positivity}
\end{align}
This is done in three steps. We first describe explicitly a subset~$\KK_0$ in the support~$\KK$ of the measure~$\ell$ such that~\eqref{rho-lj-positivity} holds for any $\hat\eta\in\KK_0$. We next combine a result from the control theory of the Navier--Stokes equation (see Theorem~\ref{t-AS}) with the argument  used in the verification of~\hyperlink{(AC)}{(AC)} to construct a time-regular finite-dimensional control for which~\eqref{exact-control} holds. And, finally, we prove that the resulting control belongs to~$\KK_0$. 

\smallskip
{\it Step~1: Description of~$\KK_0$}. 
Recall that the noise in~\eqref{NS} is now replaced by~$\eta^a$, so that its restriction to the interval~$J_k=[k-1,k]$ has the form (if we replace~$t$ by $t+k-1$)
\begin{equation}
	\eta_k^a=a\sum_{l=1}^\infty\sum_{j\in\Z_*^2}
	b_jd_jc_l\xi_{jl}^k\varphi_{lj}(t,x),
\end{equation}
where $d_j=\|e_j\|_s^{-1}$. Let us denote by~$\ell^a$ the law of~$\eta_k^a$ and by~$\rho_{lj}^a$ the density of its projection to the one-dimensional space spanned by~$\varphi_{lj}$. Since the density of~$\xi_{lj}^k$ is positive on the interval $(-\delta,\delta)$, we have 
\begin{equation} \label{rho-jl}
	\rho_{lj}^a(r)>0 \quad\mbox{for $|r|<\delta_{lj}:=ab_jd_jc_l\delta$}. 
\end{equation}
Thus, defining~$\KK_0$ as the set of functions $\psi\in\EEEE$ such that $|\langle\psi,\varphi_{lj}\rangle_\EEEE|<\delta_{lj}$, we see that any element of~$\KK_0$ satisfies~\eqref{rho-lj-positivity}. 

\smallskip
{\it Step~2: Construction of the control\/}. 
We first reduce the problem to a result on approximate controllability of the Lagrangian component. Namely, let us fix a number $\tau>0$ such  that, for any $\hat p\in\T^2$, the ball $B_{\T^2}(\hat p,\tau)$ is homeomorphic to the unit disc in~$\R^2$, and suppose that  we have proved the following property for any $\varUpsilon=(u_0,p_0)\in\XXXX^s$: 
\begin{itemize}
	\item [\hypertarget{(C)}{(C)}]
	{\sl There is a continuous map $\varPhi^\varUpsilon:B_{\T^2}(\hat p,\tau)\to\EEEE$ whose image is contained in~$\KK_0$ such that} 
	\begin{equation} \label{supSy}
	\sup_{p\in B_{\T^2}(\hat p,\tau)} \bigl|S^y(\varUpsilon,\varPhi^\varUpsilon(p))-p\bigr|\le\tau.
	\end{equation}
\end{itemize}
In this case, for any $\hat p\in\T^2$, the map 
$$
\varPhi_1:B_{\T^2}(\hat p,\tau)\to B_{\T^2}(\hat p,\tau), \quad 
\varPhi_1(p)=p-S^y(\varUpsilon,\varPhi^\varUpsilon(p))+\hat p,
$$
is well defined and continuous. By Brouwer's theorem, there is a fixed point $\bar p\in B_{\T^2}(\hat p,\tau)$ for~$\varPhi_1$, and it is easy to see that the function $\eta=\varPhi^\varUpsilon(\bar p)$ belongs to~$\KK_0$ and satisfies~\eqref{exact-control}.

\smallskip
The proof of Property~\hyperlink{(C)}{(C)}, with an arbitrary $\tau>0$, is based on the  Agrachev--Sarychev theorem (see Theorem~\ref{t-AS} in Section~\ref{ss-AS-control}) and a modification of the construction used in Section~\ref{ss-ldpproof} to establish~\hyperlink{(AC)}{(AC)}. Let us fix a small parameter $\varkappa>0$ that will be chosen below and define the space $\HH_1=\lspan\{e_j:|j|_1\le 2\}$. By the Agrachev--Sarychev theorem, there is a number $\delta_1>0$ and a continuous mapping $\varPsi_1:\XXXX^s\to C^{\infty}([0,1/2],\HH_1)$ such that 
\begin{gather}
	\supp \varPsi_1(\varUpsilon)\subset[\delta_1,1/2-\delta_1],\label{supportofpsi1}\\
	\sup_{\varUpsilon\in\XXXX^s}\|S_{1/2}^u\bigl(\varUpsilon,\varPsi_1(\varUpsilon)\bigr)\|_s\le \varkappa. \label{approximation-1}
\end{gather}
We now set $p_1=S_{1/2}^y(\varUpsilon,\varPsi_1(\varUpsilon))$ and note that $p_1=p_1(\varUpsilon)$ is a continuous $\T^2$-valued function of~$\varUpsilon\in\XXXX^s$. We define a map  $\gamma:[1/2,1]\to\T^2$ by 
\begin{equation} \label{gamma-2}
	\gamma(t)=\gamma(t;\varUpsilon,p)=\bigl(1-\alpha(t)\bigr)p_1(\varUpsilon)+\alpha(t)p,
\end{equation}
where $\alpha\in C^\infty(\R)$, $\alpha(t)=0$ for $t\le 2/3$ and $\alpha(t)=1$ for $t\geq 5/6$, so that $\gamma(1/2)=p_1(\varUpsilon)$ and $\gamma(1)=p$. Recalling that~$U_1$ and~$U_2$ were introduced in~\eqref{U1U2gamma} and defining $\varphi=(\varphi_1,\varphi_2)$ as the time-derivative of~$\gamma$, for $1/2\le t\le 1$ we set (cf.~\eqref{gammadot} and~\eqref{uy-control})
$$
\bar u(t,x)=\varphi_1(t)U_1\bigl(x-\gamma(t)\bigr)
+\varphi_2(t)U_2\bigl(x-\gamma(t)\bigr), \quad \bar y(t)=\gamma(t;\varUpsilon,p).
$$
It is straightforward to check that the pair $(\bar u(t,x),\bar y(t))$ defined on the time interval~$[1/2,1]$ is a solution of~\eqref{NS}, \eqref{ode-fluidparticle} with the right-hand side 
\begin{equation} \label{Psi-2}
\eta(t)=\varPsi_2(\varUpsilon,p)(t):=\Pi g(t), \quad g(t)=\p_t\bar u+\langle\bar u,\nabla\rangle\bar u-\nu\Delta \bar u.	
\end{equation}
Moreover, the construction implies that~$\varPsi_2$ is continuous in~$(\varUpsilon,p)\in\XXXX^s\times\T^2$ with range in $C^\infty([1/2,1])$, and that
\begin{equation}
	\bar u(t)=0\quad\mbox{for $1/2\le t\le 2/3$}, \qquad \bar y(1/2)=p_1(\varUpsilon), \qquad \bar y(1)=p. 
\end{equation}
By continuity, we can find $\varkappa>0$ such that, if $\varUpsilon_1=(u_1,p_1(\varUpsilon))\in\XXXX^s$ satisfies the inequality $\|u_1\|_s\le \varkappa$, then the solution $(u,y)$ of~\eqref{NS}, \eqref{ode-fluidparticle} issued from~$\varUpsilon_1$ at time $t=1/2$  satisfies the inequality $|y(1)-p|\le\tau$. Thus, defining $\varPhi^\varUpsilon:\XXXX^s\to\EEEE$ by the relation
$$
\varPhi^\varUpsilon(p)=
\left\{
\begin{array}{cl}
	\varPsi_1(\varUpsilon)(t) & \quad\mbox{for $0\le t\le 1/2$},\\[4pt]
	\varPsi_2(\varUpsilon,p)(t)& \quad\mbox{for $1/2\le t\le 1$},
\end{array}
\right.
$$
we see that~\eqref{supSy} holds. Moreover, the construction implies that the image of~$\varPhi^\varUpsilon(p)$ is an $\HH_1$-valued function of~$t$ that vanishes in the neighbourhood of $t=1/2$ and whose restrictions to the intervals $[0,1/2]$ and~$[1/2,1]$ are infinitely smooth, so that $\varPhi^\varUpsilon(p)\in C^\infty(J,V^s)$. Finally, the continuity properties of~$\varPsi_1$ and~$\varPsi_2$ imply that $\varPhi^\varUpsilon(p)$ is a continuous function of~$(\varUpsilon,p)\in\XXXX^s\times\T^2$ with range in $C^\infty(J,V^s)$. Thus, it remains to prove that its image is contained in~$\KK_0$.  

\smallskip
{\it Step~3: Description of the image\/}. 
Let us take an arbitrary $(\varUpsilon,p)\in\XXXX^s\times\T^2$ and consider the function $\psi=\varPhi^\varUpsilon(p)$. Writing it in the form
\begin{equation} \label{psi-sum}
\psi(t,x)=\sum_{l=1}^\infty\sum_{j\in\Z_*^2}\psi_{lj}\varphi_{lj}(t,x),	
\end{equation}
where $\psi_{lj}=\psi_{lj}(\varUpsilon,p):=\langle \psi,\varphi_{lj}\rangle_\EEEE$, and recalling the definition of~$\KK_0$, we see that it suffices to prove the inequalities
\begin{equation} \label{psi-lj-positive}
|\psi_{lj}|<\delta_{lj}\quad\mbox{for all $l\ge1$ and $j\in\Z_*^2$},
\end{equation}
where $\delta_{lj}$ is defined in~\eqref{rho-jl}. Note that the function $\psi$ takes values in~$\HH_1$, so that $\psi_{lj}=0$ for $|j|_1>2$. On the other hand, in view of Poincar\'e's inequality~\eqref{poincareineq}, for $|j|_2\le2$ we have 
$$
|\psi_{lj}|\le C_r l^{-\theta r}, \quad l,r\ge1,
$$
where the numbers~$C_r$ do not depend on~$(\varUpsilon,p)\in\XXXX^s\times\T^2$. Thus, the coefficient~$\psi_{lj}$ will satisfy~\eqref{psi-lj-positive} if 
\begin{equation} \label{compatibility}
	C_rl^{-\theta r}<ab_jd_jc_l\delta\quad
	\mbox{for $|j|_2\le 2$, $l\ge1$}.
\end{equation}
Since $r\ge1$ can be chosen arbitrarily large, recalling~\eqref{c-l}, we see that~\eqref{compatibility} is certainly satisfied if $a>0$ is sufficiently large. This completes the proof of Theorem~\ref{t-entropy}. 

\begin{remark} \label{r-parameter}
	The above proof gives that the conclusions of Theorem~\ref{t-entropy} remain valid if only the Fourier components~$e_j$ with $|j|_1\le 2$ are multiplied by a large parameter. 
\end{remark}

\begin{remark}
Since the control function~$\eta$ entering~\eqref{NS} has full range, we could have avoided use of the Agrachev--Sarychev theorem by connecting a given initial function~$u_0$ with zero and defining the corresponding control by relation~\eqref{Psi-2}. However, with this construction, the control~$\varPsi_1(\varUpsilon)$ on the interval $[0,1/2]$ would have much larger dimension, and Remark~\ref{r-parameter} would not be valid. 
\end{remark}

\section{Appendix}
\label{s-appendix}
In this section, we recall various results used in the main text. Section~\ref{ss-expomixingcoupling} deals with the property of exponential mixing and construction of coupling operators. In Section~\ref{ss-kifer}, we formulate Kifer's criterion for the validity of LDP. In Section~\ref{ss-feynmankac}, we discuss the large-time asymptotics of generalised Markov semigroups. Section~\ref{s-proofs-measures} is the devoted to the proof of the theorems on the image of probability measures under smooth maps. Finally, in Section~\ref{ss-AS-control}, we recall the Agrachev--Sarychev theorem.

\subsection{Exponential mixing and coupling operators}
\label{ss-expomixingcoupling}
In this section, we discuss the problem of uniqueness of stationary measure and exponential mixing for the RDS~\eqref{RDS} in the phase space~$\XXXX=\aA\times\YYYY$, where~$\aA$ is a compact subset of a separable Hilbert space and~$\YYYY$ is a compact manifold (without boundary). We denote by~$P_1(\varUpsilon,\Gamma)$ the transition function for~\eqref{RDS}, and by~$\PPPP_k$ and~$\PPPP_k^*$ the corresponding Markov operators acting in the spaces~$C(\XXXX)$ and~$\PP(\XXXX)$, respectively. The following result has been essentially established in~\cite{KNS-2018,shirikyan-jems2019}; however, since our setting here  is slightly different, we give a short proof.

\begin{theorem} \label{t-expomixing}
	Assume that Hypotheses~\hyperlink{(R)}{\rm(R)}, \hyperlink{(AC)}{\rm(AC)}, \hyperlink{(ACL)}{\rm(ACL)}, and~\hyperlink{(D)}{\rm(D)} are satisfied. Then the discrete-time Markov process associated with~\eqref{RDS} has a unique stationary measure $\mu\in\PP(\XXXX)$. Moreover, there are positive numbers~$\gamma$ and~$C$ such that 
	\begin{equation} \label{expomixing}
		\|\PPPP_k^*\nu-\mu\|_L^*\le Ce^{-\gamma k}\quad\mbox{for $\nu\in\PP(\XXXX)$, $k\ge0$.}
	\end{equation}
\end{theorem}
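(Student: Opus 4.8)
The plan is to run the standard coupling argument for randomly forced dissipative systems, in the form developed in \cite{KS-book,KNS-2018,shirikyan-jems2019}, taking care of the new feature that the phase space now contains the compact manifold factor $\YYYY$. First I would note that, since $\XXXX=\aA\times\YYYY$ is compact and $\PPPP_1$ has the Feller property by Hypothesis~\hyperlink{(R)}{(R)}, the Krylov--Bogolyubov argument yields at least one stationary measure $\mu\in\PP(\XXXX)$; the whole content of the theorem is therefore the contraction estimate~\eqref{expomixing}, which also gives uniqueness. The engine of the proof is a \emph{squeezing}/\emph{coupling} pair of estimates: there should exist a coupling construction (the one invoked in Corollary~\ref{c-coupledtrajectories} and encoded in the inequalities \eqref{squeezing} used in Section~\ref{ss-asymptoticsFK}) producing, for any two initial points $\varUpsilon,\varUpsilon'\in\XXXX$, processes $(\tUpsilon_k,\tUpsilon_k')$ with the same laws as the trajectories issued from $\varUpsilon,\varUpsilon'$ such that $d_\HHHH(\tUpsilon_k,\tUpsilon_k')$ decays geometrically on an event of probability bounded below in terms of $d_\HHHH(\varUpsilon,\varUpsilon')$. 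Granting that, a by-now routine argument (e.g.\ the scheme of Chapter~3 in \cite{KS-book}) upgrades the one-step squeezing into the geometric convergence \eqref{expomixing} in the dual-Lipschitz metric.

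The key steps, in order, would be: (1) \textbf{Local contraction via coupling.} Show that there are $q\in(0,1)$, $C_1>0$ and an event $\Omega_*$ with $\IP(\Omega_*)\ge 1-C_1 d_\HHHH(\varUpsilon,\varUpsilon')$ on which $d_\HHHH(\tUpsilon_k,\tUpsilon_k')\le q^k d_\HHHH(\varUpsilon,\varUpsilon')$ for all $k\ge0$. This is precisely the content that Section~\ref{ss-asymptoticsFK} already draws on (the estimate \eqref{squeezing}), so I would either cite \cite{KNS-2018,shirikyan-jems2019} directly or reproduce the short coupling construction: the regularising bound in~\hyperlink{(R)}{(R)} controls the drift of the difference, the decomposability~\hyperlink{(D)}{(D)} of the noise supplies a maximal coupling of the low-frequency components~${\mathsf P}_n\eta_k$ whose densities are $C^1$, and the high-frequency part is absorbed by the dissipation; the manifold factor $\YYYY$ causes no trouble because its diameter is finite and, on the squeezing event, the $\YYYY$-component difference is governed by the same contraction. (2) \textbf{Uniform recurrence to a small ball.} Using the global approximate controllability~\hyperlink{(AC)}{(AC)} together with compactness of $\XXXX$ and of $\KK$, show that for any $\varrho>0$ there are an integer $m\ge1$ and $p>0$ such that from \emph{any} pair of initial points the coupled trajectory has probability $\ge p$ of bringing $d_\HHHH(\tUpsilon_m,\tUpsilon_m')\le\varrho$ (steer both near the fixed point $\bUpsilon$ of~\hyperlink{(R)}{(R)}); this is the same mechanism as in Step~2 of Section~\ref{ss-asymptoticsFK}. (3) \textbf{Stopping-time iteration.} Alternate phases of recurrence (Step~2) and of contraction (Step~1): define a stopping time at which the two trajectories first come $\varrho$-close, restart the coupling there, and estimate $\E\, d_\HHHH(\tUpsilon_k,\tUpsilon_k')$ by a geometric series, getting $\E\, d_\HHHH(\tUpsilon_k,\tUpsilon_k')\le C e^{-\gamma k}$. (4) \textbf{From trajectory coupling to measures.} Since $\|\PPPP_k^*\delta_\varUpsilon-\PPPP_k^*\delta_{\varUpsilon'}\|_L^*\le \E\, \bigl(1\wedge d_\HHHH(\tUpsilon_k,\tUpsilon_k')\bigr)$, Step~3 gives geometric contraction of $\PPPP_k^*$ on Dirac masses, hence on all of $\PP(\XXXX)$ by convexity of the dual-Lipschitz norm; taking $\varUpsilon'$ distributed according to the (existing) stationary $\mu$ yields \eqref{expomixing} and, a fortiori, uniqueness of $\mu$.

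The main obstacle, and the only place where real work is needed, is Step~(1): constructing the coupling and proving the squeezing estimate in the \emph{mixed} phase space $\aA\times\YYYY$. In the pure-PDE setting of \cite{KNS-2018,shirikyan-jems2019} the state lives in a Hilbert space and the contraction of the difference $v_k=u_k-u_k'$ follows from parabolic smoothing plus the Foias--Prodi-type estimate; here the coupled quantity additionally contains the Lagrangian position $y_k\in\T^2$, whose evolution $\dot y = u(t,y)$ is only transported, not smoothed, so one must check that the difference $y_k-y_k'$ is slaved to $u_k-u_k'$ with a Gronwall-type bound (using that $u\in C(J,C^1(\T^2))$ by the choice $s\ge3$ in~\hyperlink{(R)}{(R)}) and that this does not destroy the geometric rate. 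I expect this to be essentially the verification already carried out in \cite{BBP-2018} for a related coupled system, adapted to the smooth kick noise via~\hyperlink{(D)}{(D)}; once the squeezing inequality~\eqref{squeezing} is in hand, Steps (2)--(4) are standard and I would present them briefly.
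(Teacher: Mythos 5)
Your overall architecture — reduce to a squeezing/coupling estimate, then iterate via recurrence and a stopping-time argument — is correct, and citing~\cite{shirikyan-jems2019} is indeed what the paper does. However, there is a genuine gap: \emph{all} of the mathematical content of the theorem, in the paper's proof, consists in verifying the local stabilisability hypothesis of Theorem~1.1 in~\cite{shirikyan-jems2019} (Property~\hyperlink{(LS)}{(LS)} in Section~\ref{ss-expomixingcoupling}), and this verification relies crucially on Hypothesis~\hyperlink{(ACL)}{(ACL)}, which you never invoke anywhere in your plan. The paper constructs, from the density of the range of $(D_\eta S)(\varUpsilon,\eta)$ in $\HH\times T_y\YYYY$, a finite-dimensional approximate right inverse $R_\e$ (using the projections~${\mathsf P}_n$ of~\hyperlink{(D)}{(D)}), and then defines the feedback control $\varPhi(\varUpsilon,\varUpsilon',\eta)=-R_\e(\varUpsilon,\eta)(D_\varUpsilon S)(\varUpsilon,\eta)(\varUpsilon'-\varUpsilon)$; a Taylor expansion then yields the squeezing inequality~\eqref{localsqueezing}. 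Without this step you cannot apply the cited theorem, and your proposal does not explain how to get there.

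Moreover, the mechanism you sketch for the squeezing step — parabolic smoothing plus a Foias--Prodi-type estimate, a maximal coupling of the low-frequency components of the noise, and a Gronwall/slaving argument showing $y_k-y_k'$ is controlled by $u_k-u_k'$ — is not the one at work here and is problematic in this setting. The noise may be very degenerate (only finitely many Fourier modes directly forced), so there is no a priori maximal coupling of ``enough'' low modes; and the $y$-component is not handled by any slaving estimate. The obstacle you flag as the main difficulty (the Lagrangian position is only transported, not smoothed) is in fact dissolved by~\hyperlink{(ACL)}{(ACL)}: because $(D_\eta S)(\varUpsilon,\eta)$ has dense range in $\HH\times T_y\YYYY$, the constructed control $\varPhi$ steers the $u$- and $y$-components simultaneously, and no separate argument for the torus factor is required. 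In short: your steps (2)--(4) are fine once one has the squeezing estimate, but your step (1) both omits the decisive hypothesis and proposes a mechanism that would not deliver the result in the degenerate-noise, coupled-particle setting.
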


\begin{proof} 
Set $\DDD_\delta=\{(\varUpsilon,\varUpsilon')\in \XXXX\times \XXXX:d_\XXXX(\varUpsilon,\varUpsilon')\le\delta\}$. In view of Theorem~1.1 in~\cite{shirikyan-jems2019}, it suffices to prove that the following local stabilisation property holds:
	\begin{itemize}
		\item [\hypertarget{(LS)}{\bf(LS)}]
{\sl For any $R>0$ and any compact set $\KKKK\subset \EEEE$ there is a finite-dimensional subspace $\EE\subset \EEEE$, positive numbers~$C$, $\delta$, and~$q<1$, and a continuous mapping 
$$
\varPhi:\DDD_\delta\times B_\EEEE(R)\to \EE, \quad (\varUpsilon,\varUpsilon',\eta)\mapsto \eta', 
$$
which is continuously differentiable in~$\eta$ and satisfies the following inequalities for any $(\varUpsilon,\varUpsilon')\in \DDD_\delta:$}
\begin{align}
\sup_{\eta\in B_\EEEE(R)}
\bigl(\|\varPhi(\varUpsilon,\varUpsilon',\eta)\|_\EEEE+\|D_\eta\varPhi(\varUpsilon,\varUpsilon',\eta)\|_{\LL(\EEEE)}\bigr)
&\le C\,d_\XXXX(\varUpsilon,\varUpsilon'),\label{1.5}\\
\sup_{\eta\in \KKKK}d_\XXXX\bigl(S(\varUpsilon,\eta),S(\varUpsilon',\eta+\varPhi(\varUpsilon,\varUpsilon',\eta))\bigr)
&\le q\,d_\XXXX(\varUpsilon,\varUpsilon').\label{localsqueezing}
\end{align}
	\end{itemize}
	
We shall show that Hypotheses~\hyperlink{(R)}{(R)} and~\hyperlink{(ACL)}{(ACL)} imply~\hyperlink{(LS)}{(LS)}. Let us recall that, given a separable Hilbert space~$H$ and a continuous  linear operator $A:\EEEE\to H$ with a dense range, we can construct an approximate inverse for~$A$ in the following way. Setting $G=AA^*$, it is easy to check that 
$$
G(G+\gamma I)^{-1}f\to f\quad\mbox{as $\gamma\to0^+$ for any $f\in H$},
$$
so that $A^*(G+\gamma I)^{-1}$ is an approximate right inverse of~$A$. Moreover, if~${\mathsf P}_n$ are finite-dimensional projections converging to the identity in~$\EEEE$ for the strong operator topology (see~\hyperlink{(D)}{(D)}), then considering the operators ${\mathsf P}_MA^*(G+\gamma I)^{-1}$, we obtain a family of continuous finite-dimensional operators $R_\e:H\to\EEEE$ such that $AR_\e\to I$ in the strong operator topology. 

We now apply  this procedure to the derivatives $A(\varUpsilon,\eta):=(D_\eta S)(\varUpsilon,\eta)$ acting continuously from $\EEEE$ to $\HH\times T_y\YYYY$, where $y=\Pi_\YYYY S(\varUpsilon,\eta)$. Set
$$
G(\varUpsilon,\eta)=A(\varUpsilon,\eta)\,A(\varUpsilon,\eta)^*, \quad R_{M,\gamma}(\varUpsilon,\eta)
={\mathsf P}_MA(\varUpsilon,\eta)^*(G(\varUpsilon,\eta)+\gamma I)^{-1}. 
$$ 
Repeating the compactness argument used in the proof of Proposition~2.3 in~\cite{KNS-2018}, it is not difficult to show that, for any $\e>0$, there are $M_\e\ge1$ and $\gamma_\e>0$ such that the operator $R_\e(\varUpsilon,\eta)=R_{M_\e,\gamma_\e}(\varUpsilon,\eta)$ (which  acts continuously from $\HHHH_y:=\HH\times T_y\YYYY$ to~$\EEEE$ and has a finite-dimensional range)  satisfies the following inequalities for any $\varUpsilon\in\XXXX$, $\eta\in\KK$, and $f\in \VV\times T_y\YYYY$:
\begin{align}
\bigl\|(D_\eta S)(\varUpsilon,\eta)	R_\e (\varUpsilon,\eta)f-f\bigr\|_{\HHHH_y} & \le \e\, \bigr\|f\bigr\|_{\VV\times T_y\YYYY}, \label{1.6}\\ 
\bigl\|R_\e(\varUpsilon,\eta)\bigr\|_{\LL(\HHHH_y,\EEEE)}+\bigl\|(D_\eta R_\e)(\varUpsilon,\eta)\bigr\|_{\LL(\HHHH_y\times \EEEE,\EEEE)}&\le C_1(\e),  \label{1.7}
\end{align}
where $C_1(\e)$ does not depend on~$(\varUpsilon,\eta)$, and the tangent space~$T_y\YYYY$ is endowed with the norm induced by  the Riemannian metric of~$\YYYY$. We now fix points $\varUpsilon=(u,y)\in\XXXX$ and~$\eta\in B_\EEEE(R)$, together with some local charts around~$y$ and $\Pi_\YYYY S(\varUpsilon,\eta)$, and use Taylor's formula to write
\begin{equation} \label{1.8}
S(\varUpsilon',\eta')-S(\varUpsilon,\eta)=(D_\varUpsilon S)(\varUpsilon,\eta)(\varUpsilon'-\varUpsilon)+(D_\eta S)(\varUpsilon,\eta)(\eta'-\eta)
+r(\varUpsilon,\varUpsilon',\eta,\eta').
\end{equation}
Here $\varUpsilon'$ and~$\eta'$ are sufficiently close to~$\varUpsilon$ and~$\eta$, respectively, so that~$\Pi_\YYYY\varUpsilon$ and~$\Pi_\YYYY S(\varUpsilon',\eta')$ belong to the above-mentioned local charts\footnote{This enables one to write the differences $\varUpsilon'-\varUpsilon$ and $S(\varUpsilon',\eta')-S(\varUpsilon,\eta)$ and to consider them as elements of $\HH\times\R^d$. In particular, we shall write $\|\varUpsilon'-\varUpsilon\|_\HHHH$ for the distance between~$\varUpsilon$ and~$\varUpsilon'$.}, and the remainder term~$r$ satisfies the inequality 
\begin{equation} \label{1.9}
\bigl\|r(\varUpsilon,\varUpsilon',\eta,\eta')\bigr\|_{\HHHH}\le C_2(R)\,\bigl(\bigl\|\varUpsilon-\varUpsilon'\bigr\|_{\HHHH}^2+\bigl\|\eta-\eta'\bigr\|_\EEEE^2\bigr),
\end{equation}
where $\varUpsilon\in\XXXX$ and $\eta\in B_\EEEE(R)$. Define  
\begin{equation} \label{1.10}
\varPhi(\varUpsilon,\varUpsilon',\eta)=-R_\e(\varUpsilon,\eta)(D_\varUpsilon S)(\varUpsilon,\eta)(\varUpsilon'-\varUpsilon),
\end{equation}
and note that
\begin{equation} \label{1.11}
\bigl\|(D_\varUpsilon S)(\varUpsilon,\eta)(\varUpsilon'-\varUpsilon)\bigr\|_{\VV\times T_y\YYYY} 
\le C_3(R)\bigl\|\varUpsilon'-\varUpsilon\bigr\|_{\HH\times T_y\YYYY}.
\end{equation}
Combining this inequality with~\eqref{1.7} and Hypothesis~\hyperlink{(R)}{(R)}, we see that~\eqref{1.5} holds. Furthermore, it follows from \eqref{1.6}--\eqref{1.11} that
\begin{align*}  
&\bigl\|S(\varUpsilon,\eta)-S(\varUpsilon',\eta+\varPhi(\varUpsilon,\varUpsilon',\eta))\bigr\|_\HHHH \\
&\qquad \le\e\,\bigl\|(D_\varUpsilon S)(\varUpsilon,\eta)(\varUpsilon'-\varUpsilon)\bigr\|_{\VV\times T_y\YYYY}+\bigl\|r(\varUpsilon,\varUpsilon',\eta,\eta+\varPhi(\varUpsilon,\varUpsilon',\eta))\bigl\|_\HHHH\nonumber\\
&\qquad \le \bigl(C_3(R)\e+ C_4(R)\,\|\varUpsilon'-\varUpsilon\|_\HHHH \bigl)\bigl\|\varUpsilon'-\varUpsilon\bigr\|_\HHHH 
\le q\,\bigl\|\varUpsilon'-\varUpsilon\bigr\|_\HHHH,
\end{align*} 
where $\varUpsilon,\varUpsilon'\in \DDD_\delta$ and $\eta\in B_\EEEE(R)$, the number $q>0$ is arbitrary, and the positive numbers~$\e$  and~$\delta$ are sufficiently  small. This gives inequality~\eqref{localsqueezing} and proves the local stabilisability. 
\end{proof}

\begin{remark} \label{r-expomixing} The above proof gives that Hypothesis~\hyperlink{(AC)}{(AC)} in Theorem~\ref{t-expomixing} can be replaced by the following weaker variant which requires approximate controllability to some fixed point $\widehat\varUpsilon\in\XXXX$: 
	\begin{itemize}
		\item[\hypertarget{(ACP)}{\bf(ACP)}]
		For any $\e>0$, there is an integer $n\ge1$ such that, for any initial point $\varUpsilon\in\XXXX$, one can find controls $\zeta_1,\dots,\zeta_n\in\KK$ satisfying inequality~\eqref{AC}. 
	\end{itemize}
\end{remark}

In the proof of the LDP, we also needed the existence of coupled trajectories that converge to each other exponentially fast. Their existence is established with the help of the following coupling construction. 

\begin{proposition} \label{p-coupling}
	Under the hypotheses of Theorem~\ref{t-expomixing}, for any $q\in (0,1)$, there is a number $C>0$, a probability space $(\Omega,\FF,\IP)$, and measurable mappings 
	$$
	\RR,\RR':\XXXX\times \XXXX\times\Omega\to \XXXX,
	$$  
	such that, for any $\varUpsilon,\varUpsilon'\in \XXXX$, the pair $(\RR(\varUpsilon,\varUpsilon',\cdot), \RR'(\varUpsilon,\varUpsilon',\cdot))$ is a coupling for $(P_1(\varUpsilon,\cdot),P_1(\varUpsilon',\cdot))$, and 
 \begin{equation}\label{1.18}	
\IP\bigl\{\|\RR(\varUpsilon,\varUpsilon',\cdot)- \RR'(\varUpsilon,\varUpsilon',\cdot)\|_\HHHH> q\,\|\varUpsilon-\varUpsilon'\|_\HHHH\bigr\}
\le C\,\|\varUpsilon-\varUpsilon'\|_\HHHH. 
\end{equation} 
\end{proposition}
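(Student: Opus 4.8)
The plan is to combine the local stabilisation property~\hyperlink{(LS)}{(LS)} --- already established inside the proof of Theorem~\ref{t-expomixing} --- with a maximal coupling of the noise increment, along the lines of~\cite{KNS-2018,shirikyan-jems2019}. First I would fix $q\in(0,1)$, choose $R>0$ with $\KK\subset B_\EEEE(R)$, and invoke~\hyperlink{(LS)}{(LS)} with $\KKKK=\KK$ and this value of $R$: this provides $\delta>0$, a constant $C_0>0$, a finite-dimensional subspace $\EE\subset\EEEE$, and a continuous map $\varPhi:\DDD_\delta\times B_\EEEE(R)\to\EE$ (with $\DDD_\delta$ as defined in the proof of Theorem~\ref{t-expomixing}) which is $C^1$ in $\eta$, obeys $\|\varPhi\|_\EEEE+\|D_\eta\varPhi\|\le C_0\,d_\XXXX(\varUpsilon,\varUpsilon')$, and satisfies the squeezing estimate~\eqref{localsqueezing}. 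Inspecting the construction $\varPhi=-R_\e\,(D_\varUpsilon S)(\varUpsilon,\eta)(\varUpsilon'-\varUpsilon)$ with $R_\e={\mathsf P}_NA(\varUpsilon,\eta)^*(G+\gamma I)^{-1}$, one sees that the image of $\varPhi$ is contained in $\FFFF_N$ for some $N\ge1$. For pairs with $d_\XXXX(\varUpsilon,\varUpsilon')>\delta$ the estimate~\eqref{1.18} holds trivially for the independent coupling as soon as $C\ge\delta^{-1}$ (recall that $d_\XXXX$ is bounded), so I would from now on treat only pairs with $d_\XXXX(\varUpsilon,\varUpsilon')\le\delta$, and glue the two constructions together by a measurable threshold in $(\varUpsilon,\varUpsilon')$.

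Next, using Hypothesis~\hyperlink{(D)}{(D)}, I would decompose $\EEEE=\FFFF_N\oplus\GGGG_N$ and $\ell=\ell^P\otimes\ell^Q$ with $\ell^P={\mathsf P}_{N*}\ell$, $\ell^Q={\mathsf Q}_{N*}\ell$, where $\ell^P$ has a $C^1$ density $p$ with compact support in $\FFFF_N$. For $\theta\in\supp\ell^Q$ fixed, consider the map $\Psi_\theta:\zeta\mapsto\zeta+\varPhi(\varUpsilon,\varUpsilon',\zeta+\theta)$, defined on a neighbourhood of $\supp\ell^P$ in $\FFFF_N$; since $\|D_\eta\varPhi\|\le C_0\delta$, shrinking $\delta$ makes $\Psi_\theta$ a $C^1$-diffeomorphism onto its image with $\|\Psi_\theta-\mathrm{id}\|_{C^1}\le C_0\,d_\XXXX(\varUpsilon,\varUpsilon')$. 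Let $m_\theta=(\Psi_\theta)_*\ell^P$; by the change of variables formula it has density $q_\theta(y)=p(\Psi_\theta^{-1}y)\,|\det D\Psi_\theta^{-1}(y)|$, and from $\|\Psi_\theta-\mathrm{id}\|_{C^1}\le C_0\,d_\XXXX(\varUpsilon,\varUpsilon')$ together with $p\in C^1$ and $\supp p$ compact one obtains the key estimate
\[
\tfrac12\int_{\FFFF_N}|q_\theta-p|\;\le\;C_1\,d_\XXXX(\varUpsilon,\varUpsilon')
\]
uniformly in $\theta$ and in the pair $(\varUpsilon,\varUpsilon')$.

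Then I would realise, on one fixed probability space $(\Omega,\FF,\IP)$ carrying $\theta$ with law $\ell^Q$ and countably many auxiliary uniform random variables, a jointly measurable family indexed by $(\varUpsilon,\varUpsilon')$: conditionally on $\theta$, a maximal coupling $(\hat\zeta,\zeta')$ of $(m_\theta,\ell^P)$, so that $\hat\zeta\sim m_\theta$, $\zeta'\sim\ell^P$ and $\IP\{\hat\zeta\ne\zeta'\mid\theta\}=\tfrac12\int|q_\theta-p|$; and then set $\zeta=\Psi_\theta^{-1}(\hat\zeta)$, $\eta=\zeta+\theta$, $\eta'=\zeta'+\theta$, and finally $\RR(\varUpsilon,\varUpsilon',\cdot)=S(\varUpsilon,\eta)$, $\RR'(\varUpsilon,\varUpsilon',\cdot)=S(\varUpsilon',\eta')$. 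Joint measurability in $(\varUpsilon,\varUpsilon')$ follows from the explicit overlap-plus-remainder description of the maximal coupling together with the continuous dependence of $q_\theta$, $\Psi_\theta$ and $\Psi_\theta^{-1}$ on all parameters. Since $\zeta$ and $\zeta'$ are, conditionally on $\theta$, distributed as $\ell^P$ --- hence independent of $\theta$ --- we get $\eta\sim\ell$ and $\eta'\sim\ell$, so $(\RR,\RR')$ is indeed a coupling of $(P_1(\varUpsilon,\cdot),P_1(\varUpsilon',\cdot))$.

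Finally, on the event $\{\hat\zeta=\zeta'\}$ one has $\eta'=\zeta'+\theta=\Psi_\theta(\zeta)+\theta=\eta+\varPhi(\varUpsilon,\varUpsilon',\eta)$, so that the squeezing~\eqref{localsqueezing}, applied with $\eta\in\supp\ell\subset\KK$, yields $d_\XXXX(\RR,\RR')\le q\,d_\XXXX(\varUpsilon,\varUpsilon')$ --- the quantity $\|\RR-\RR'\|_\HHHH$ in~\eqref{1.18} --- while $\IP\{\hat\zeta\ne\zeta'\}=\E[\tfrac12\int|q_\theta-p|]\le C_1\,d_\XXXX(\varUpsilon,\varUpsilon')$ by the key estimate; taking $C=\max(C_1,\delta^{-1})$ closes the argument. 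The step I expect to be the main obstacle is precisely this uniform total-variation bound: it requires controlling a change of variables by a merely $C^1$ diffeomorphism acting on a density of only $C^1$ regularity, and doing so uniformly over all $\theta$ and all nearby pairs $(\varUpsilon,\varUpsilon')$ --- it is here that the compactness of $\supp\ell^P$ (and of $\KK$) and the uniform derivative bound $\|D_\eta\varPhi\|\le C_0\,d_\XXXX(\varUpsilon,\varUpsilon')$ from~\hyperlink{(LS)}{(LS)} are essential. The accompanying measurability of the coupling in $(\varUpsilon,\varUpsilon')$, though it needs care, is routine once the maximal coupling is written in closed form.
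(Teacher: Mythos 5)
Your proposal is correct and matches the intended argument. The paper itself does not reproduce the construction of the coupling operators: it observes that Hypotheses~\hyperlink{(R)}{(R)}, \hyperlink{(LS)}{(LS)}, and~\hyperlink{(D)}{(D)} have been verified and then simply cites Step~3 of the proof of Theorem~1.1 in~\cite{shirikyan-jems2019}. Your scheme --- using~\hyperlink{(D)}{(D)} to split $\eta=\zeta+\theta$ along $\FFFF_N\oplus\GGGG_N$, pushing ${\mathsf P}_{N*}\ell$ forward by the near-identity map $\zeta\mapsto\zeta+\varPhi(\varUpsilon,\varUpsilon',\zeta+\theta)$ produced by~\hyperlink{(LS)}{(LS)}, controlling the resulting total-variation discrepancy by $O(d_\XXXX(\varUpsilon,\varUpsilon'))$ via the $C^1$ density and the uniform bound on $\|D_\eta\varPhi\|$, maximally coupling conditionally on~$\theta$, and finally gluing with the independent coupling outside~$\DDD_\delta$ --- is a faithful reconstruction of that reference's construction, and your checks of the marginals and of the squeezing event go through.
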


This result has been proved
in~\cite{shirikyan-jems2019} (see Step~3 of the proof of Theorem~1.1), provided that Hypotheses~\hyperlink{(R)}{(R)}, \hyperlink{(LS)}{(LS)}, and~\hyperlink{(D)}{(D)} are satisfied. As was established in the proof of Theorem~\ref{t-expomixing}, under Hypotheses~\hyperlink{(R)}{(R)} and~\hyperlink{(ACL)}{(ACL)} property~\hyperlink{(LS)}{(LS)} holds with any $q<1$, so that Proposition~\ref{p-coupling} also holds. 

We now describe the construction of coupled trajectories. Let $(\widetilde\Omega,\widetilde\FF,\widetilde\IP)$ be the  product of countably many copies of the probability space constructed in Proposition~\ref{p-coupling}. For $\varUpsilon,\varUpsilon'\in\XXXX$ and $\omega=(\omega^1,\omega^2,\dots)\in\widetilde\Omega$, we set 
\begin{align*}
	\tUpsilon_0&=\varUpsilon, & \quad \tUpsilon_0'&=\varUpsilon',\\
	\tUpsilon_k&=\RR(\tUpsilon_{k-1},\tUpsilon_{k-1}',\omega^k),& \quad 
	\tUpsilon_k'&=\RR'(\tUpsilon_{k-1},\tUpsilon_{k-1}',\omega^k),
\end{align*}
where $k\ge1$. For any $q\in(0,1)$ and $k\ge1$, define the event
$$
G_k(q,\varUpsilon,\varUpsilon'):=\bigl\{d_\XXXX(\varUpsilon_k,\varUpsilon_k')> q\,d_\XXXX(\varUpsilon_{k-1},\varUpsilon_{k-1}')\bigr\}. 
$$
The following result is a straightforward consequence of Proposition~\ref{p-coupling}, and its proof, based on an application of the Markov property, can be carried out by repeating the argument in~\cite[Section~3.2.2]{KS-book} (see also~\cite[Section~4.4]{shirikyan-asens2015}).

\begin{corollary} \label{c-coupledtrajectories}
	Under the hypotheses of Theorem~\ref{t-expomixing}, the trajectories $\{\tUpsilon_k\}$ and~$\{\tUpsilon_k'\}$ constructed above have the following properties. 
	\begin{description}
		\item [\sc Coupling.] The laws of the processes $\{\tUpsilon_k\}_{k\ge0}$ and~$\{\tUpsilon_k'\}_{k\ge0}$ regarded as random variables with range in~$\XXX=\XXXX^{\Z_+}$ coincide with those of the trajectories for~\eqref{RDS} issued from the initial points~$\varUpsilon$ and~$\varUpsilon'$, respectively. 
		\item [\sc Estimate.]
		There is a number~$C>0$ depending on~$q\in(0,1)$ such that, for any integer $k\ge1$ and any points $\varUpsilon,\varUpsilon'\in\XXXX$, we have 
		\begin{equation} \label{squeezing}
			\widetilde\IP\bigl(G_k(q,\varUpsilon,\varUpsilon')\bigr)
			\le C\,\E\,d_\XXXX(\varUpsilon_{k-1},\varUpsilon_{k-1}').
		\end{equation}
	\end{description}
\end{corollary}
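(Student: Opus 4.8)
The plan is to derive both assertions directly from Proposition~\ref{p-coupling} (which is available under the hypotheses of Theorem~\ref{t-expomixing}, as noted above) together with the product structure of $(\widetilde\Omega,\widetilde\FF,\widetilde\IP)$, following the scheme of~\cite[Section~3.2.2]{KS-book}. Throughout I would write $\GG_n=\sigma(\omega^1,\dots,\omega^n)$ for the natural filtration on~$\widetilde\Omega$ and use repeatedly that $\tUpsilon_0,\dots,\tUpsilon_n$ and $\tUpsilon_0',\dots,\tUpsilon_n'$ are $\GG_n$-measurable while $\omega^{n+1}$ is independent of~$\GG_n$.

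For the \emph{coupling} property, the first step is to record the key structural fact that, since $(\RR(a,a',\cdot),\RR'(a,a',\cdot))$ is a coupling of $(P_1(a,\cdot),P_1(a',\cdot))$, the law of $\RR(a,a',\cdot)$ under~$\IP$ equals $P_1(a,\cdot)$ \emph{regardless of}~$a'$, and likewise $\RR'(a,a',\cdot)_*\IP=P_1(a',\cdot)$. I would then prove by induction on~$n$, conditioning on~$\GG_{n-1}$ and integrating the $n$-th coordinate of~$\omega$ against this marginal, that for every bounded measurable $f$ on $\XXXX^{n+1}$ one has
\[
\E\,f(\tUpsilon_0,\dots,\tUpsilon_n)=\int\!\!\cdots\!\!\int f(\varUpsilon,x_1,\dots,x_n)\,P_1(\varUpsilon,\dd x_1)\cdots P_1(x_{n-1},\dd x_n).
\]
The right-hand side is precisely the finite-dimensional distribution of the Markov chain defined by~\eqref{RDS} started at~$\varUpsilon$, so by Kolmogorov's extension theorem the law of $\{\tUpsilon_k\}_{k\ge0}$ on~$\XXX$ coincides with that of the trajectory of~\eqref{RDS} issued from~$\varUpsilon$; running the identical argument with the second marginals gives the statement for $\{\tUpsilon_k'\}_{k\ge0}$.

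For the \emph{estimate}, I would fix $k\ge1$ and condition on~$\GG_{k-1}$: the pair $(\tUpsilon_{k-1},\tUpsilon_{k-1}')$ is then frozen, $(\tUpsilon_k,\tUpsilon_k')=(\RR,\RR')(\tUpsilon_{k-1},\tUpsilon_{k-1}',\omega^k)$, and $\omega^k$ is independent of~$\GG_{k-1}$. Applying inequality~\eqref{1.18} with frozen values $(a,a')=(\tUpsilon_{k-1},\tUpsilon_{k-1}')$ and invoking Fubini's theorem gives $\widetilde\IP(G_k(q,\varUpsilon,\varUpsilon')\mid\GG_{k-1})\le C\,d_\XXXX(\tUpsilon_{k-1},\tUpsilon_{k-1}')$, and taking expectations yields~\eqref{squeezing}. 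Here I would also note that on the compact set~$\XXXX$ the metric $d_\XXXX$ is comparable to the ambient norm used in~\eqref{1.18}, the constant being absorbed into~$C$.

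Neither step is genuinely hard — this corollary is, as the text indicates, a routine consequence of Proposition~\ref{p-coupling} via the Markov property. The only points deserving attention are the joint measurability of the maps $\RR,\RR'$ in all of their arguments, which is what legitimises replacing the pointwise bound~\eqref{1.18} by its conditional form through a disintegration/Fubini argument, and the bookkeeping needed to express the conditional laws in terms of the frozen pair $(\tUpsilon_{k-1},\tUpsilon_{k-1}')$; both are handled exactly as in~\cite[Section~3.2.2]{KS-book}.
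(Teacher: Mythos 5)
Your proof is correct and takes exactly the route the paper itself indicates: the paper states that the corollary is a "straightforward consequence of Proposition~\ref{p-coupling}, based on an application of the Markov property, following~\cite[Section~3.2.2]{KS-book}," and your conditioning-on-$\GG_{k-1}$/Fubini argument, together with the induction for the coupling marginals, is precisely that argument. The only small remark: the paper's footnote in the proof of Theorem~\ref{t-expomixing} declares $\|\varUpsilon'-\varUpsilon\|_\HHHH$ to \emph{be} the distance between $\varUpsilon$ and $\varUpsilon'$, so the metric-comparison caveat you append is not actually needed, though it is harmless.
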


\subsection{Kifer's criterion}
\label{ss-kifer}
Let~$X$ be a compact metric space, let $\Theta$ be a directed set, let $\{r_\theta\}_{\theta\in\Theta}$ be a non-decreasing net of positive numbers converging  to~$+\infty$, and let $\{\mu_\theta\}_{\theta\in\Theta}$ be a net of random probability measures on~$X$ with an underlying space $(\Omega,\FF,\IP)$. We assume that, for any $V\in C(X)$, the following limit exists: 
\begin{equation} \label{pressure-kifer}
	Q(V)=\lim_{\theta\in\Theta} r_\theta^{-1}\log\E\exp\bigl(r_\theta\langle V,\mu_\theta\rangle\bigr). 
\end{equation}
This is a 1-Lipschitz function on~$C(X)$ such that $Q(V+C)=Q(V)+C$ for any $V\in C(X)$ and $C\in\R$. Let $I:\MM(X)\to[0,+\infty]$ be the Legendre transform of~$Q$: 
\begin{equation} \label{legendre-abstract}
	I(\sigma)=\left\{
	\begin{array}{cl}
		\displaystyle\sup_{V\in C(X)}\bigl(\langle V,\mu\rangle-Q(V)\bigr) & \mbox{for $\sigma\in\PP(X)$},\\
		+\infty &\mbox{otherwise}. 
	\end{array}
	\right.
\end{equation}
It is easy to see that~$I$ is a good rate function (see Section~\ref{ss-result} for a definition). Recall that $\sigma\in \PP(X)$ is called an {\it equilibrium state\/} for $V\in C(X)$ if 
$$
Q(V)=\langle V,\sigma\rangle-I(\sigma). 
$$
The following theorem is due to Kifer~\cite{kifer-1990}. 

\begin{theorem} \label{t-kifer}
	In addition to the existence of limit~\eqref{pressure-kifer}, suppose  that there exists a dense vector space $\VV\subset C(X)$ such that for any $V\in \VV$ there is a unique equilibrium state $\sigma\in\PP(X)$. Then the LDP holds for~$\{\mu_\theta\}$ with the speed~$\{r_\theta\}$ and the rate function~$I$. 
\end{theorem}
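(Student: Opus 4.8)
Theorem~\ref{t-kifer} is classical; the plan is to follow the two-bound route, deriving the upper bound from the mere existence of the pressure~$Q$ and the lower bound from the uniqueness of equilibrium states. First, the preliminaries. Since~$X$ is compact, $\PP(X)$ is a compact metrizable space, so~$I$ is automatically a good rate function; moreover $Q$ is convex and $1$-Lipschitz, hence continuous, so by Fenchel--Moreau $Q=I^*$ on~$C(X)$, and $\sigma\in\PP(X)$ is an equilibrium state for~$V$ iff it minimizes $\tau\mapsto I(\tau)-\langle V,\tau\rangle$ over~$\PP(X)$, equivalently $\sigma\in\partial Q(V)$. This minimum is always attained (lower semicontinuity plus compactness); the hypothesis is that it is attained at a single point~$\sigma_V$ for every $V\in\VV$.

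\emph{Upper bound.} Chebyshev's inequality gives, for any Borel~$B$ and $V\in C(X)$, $\IP\{\mu_\theta\in B\}\le\exp(-r_\theta\inf_{\tau\in B}\langle V,\tau\rangle)\,\E\exp(r_\theta\langle V,\mu_\theta\rangle)$, whence
\begin{equation*}
\limsup_{\theta}r_\theta^{-1}\log\IP\{\mu_\theta\in B\}\le Q(V)-\inf_{\tau\in B}\langle V,\tau\rangle .
\end{equation*}
Given a closed~$F$ and $a<\inf_F I$, I would pick for each $\sigma\in F$ a function~$V_\sigma$ with $\langle V_\sigma,\sigma\rangle-Q(V_\sigma)>a$; by weak$^*$ continuity of $\tau\mapsto\langle V_\sigma,\tau\rangle$ this persists on an open neighbourhood of~$\sigma$, and a finite subcover of the compact set~$F$ yields $\limsup_\theta r_\theta^{-1}\log\IP\{\mu_\theta\in F\}\le-a$, i.e.\ the upper bound with rate~$I$. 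The argument used only the existence of the limit~$Q$, so it applies unchanged to exponentially tilted families.

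\emph{Lower bound at equilibrium states.} It suffices to show $\liminf_\theta r_\theta^{-1}\log\IP\{\mu_\theta\in G\}\ge-\inf_G I$ for every open~$G$. I would first treat the case where~$G$ is a neighbourhood of an equilibrium state~$\sigma_V$ with $V\in\VV$. Put $\widehat\IP_\theta=Z_\theta^{-1}e^{r_\theta\langle V,\mu_\theta\rangle}\IP$ with $Z_\theta=\E e^{r_\theta\langle V,\mu_\theta\rangle}$, so $r_\theta^{-1}\log Z_\theta\to Q(V)$; under~$\widehat\IP_\theta$ the family~$\{\mu_\theta\}$ has limiting pressure $W\mapsto Q(V+W)-Q(V)$, whose Legendre transform is $\widehat I_V=I-\langle V,\cdot\rangle+Q(V)$ --- a good rate function, nonnegative, vanishing exactly at~$\sigma_V$ by uniqueness. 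Applying the upper bound just proved to the tilted family and to the closed set $\PP(X)\setminus G$ gives $\limsup_\theta r_\theta^{-1}\log\widehat\IP_\theta\{\mu_\theta\notin G\}\le-\inf_{\PP(X)\setminus G}\widehat I_V<0$, so $\widehat\IP_\theta\{\mu_\theta\in G\}\to1$. Since $\IP\{\mu_\theta\in G'\}\ge Z_\theta e^{-r_\theta\sup_{G'}\langle V,\cdot\rangle}\,\widehat\IP_\theta\{\mu_\theta\in G'\}$ for $G'\subset G$, letting $G'\downarrow\sigma_V$ yields $\liminf_\theta r_\theta^{-1}\log\IP\{\mu_\theta\in G\}\ge Q(V)-\langle V,\sigma_V\rangle=-I(\sigma_V)$.

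\emph{Propagation, and the main obstacle.} The remaining, and most delicate, step is to show that every open~$G$ with $a:=\inf_G I<\infty$ contains an equilibrium state~$\sigma_V$ of some $V\in\VV$ with $I(\sigma_V)$ arbitrarily close to~$a$; the lower bound then follows by applying the previous step to a small neighbourhood $G'\subset G$ of~$\sigma_V$ and letting the approximation error tend to~$0$. I would fix $\sigma_0\in G$ with $I(\sigma_0)<a+\e$ and a basic weak$^*$ neighbourhood $N=\{\tau:|\langle W_i,\tau-\sigma_0\rangle|<\eta,\ 1\le i\le m\}\subset G$ whose defining functionals $W_1,\dots,W_m$ lie in~$\VV$ (possible since~$\VV$ is dense), and look for~$V$ of the form $\sum_i c_iW_i$ --- which belongs to~$\VV$ because~$\VV$ is a \emph{vector space} --- whose unique equilibrium state matches~$\sigma_0$ in the $W_i$-coordinates, so that $\sigma_V\in N\subset G$. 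This reduces to a finite-dimensional Fenchel-duality problem on~$\R^m$: with $\phi(c)=-\min_{\tau}(I(\tau)-\sum_ic_i\langle W_i,\tau\rangle)$ a minimax computation (Sion) gives $\phi^*(p)=\inf\{I(\tau):\langle W_i,\tau\rangle=p_i\ \forall i\}$, so $p_0:=(\langle W_i,\sigma_0\rangle)_i\in\mathrm{dom}\,\phi^*$; after replacing~$\sigma_0$ by a nearby point of $\mathrm{dom}\,I$ (changing $I(\sigma_0)$ arbitrarily little) one may assume $p_0\in\mathrm{ri}(\mathrm{dom}\,\phi^*)$, hence $\partial\phi^*(p_0)\ne\emptyset$, and any $c\in\partial\phi^*(p_0)$ works: uniqueness of the equilibrium state of $V=\sum_ic_iW_i$ forces $\partial\phi(c)$ to be the single point $(\langle W_i,\sigma_V\rangle)_i$, which must equal~$p_0$, giving $\sigma_V\in N$ and, by Young's equality, $I(\sigma_V)=\phi^*(p_0)\le I(\sigma_0)<a+\e$. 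Combining the upper and lower bounds gives the LDP with rate function~$I$. The crux of the whole proof is exactly this conversion of the uniqueness hypothesis into concentration --- used once to force the tilted measures to concentrate (via the tilted upper bound and the fact that a good rate function with a single zero is bounded below off that zero), and used again in the propagation step, where the \emph{subspace} structure of~$\VV$ is indispensable, which is precisely why density of~$\VV$ alone (automatic for the Gâteaux-smoothness points of~$Q$, by Mazur's theorem on the separable space $C(X)$) would not suffice.
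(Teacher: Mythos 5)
The paper does not actually prove Theorem~\ref{t-kifer}; it states it and cites Kifer's 1990 article, so there is no in-paper argument to compare against. Judged on its own merits, your proof is correct and reproduces, in structure, the original argument: the upper bound by Chebyshev plus a finite cover of the compact space~$\PP(X)$, a lower bound at equilibrium states of $V\in\VV$ by exponential tilting (the tilted rate function $\widehat I_V$ is good, nonnegative, with unique zero $\sigma_V$ by the uniqueness hypothesis, so the tilted upper bound forces $\widehat\IP_\theta\{\mu_\theta\in G'\}\to1$), and the propagation step supplying, inside any open $G$ with $\inf_GI<\infty$, equilibrium states of nearly minimal rate. The two places worth scrutiny are exactly the ones you handle: (i) the identity $\phi^*(p)=\inf\{I(\tau):\langle W_i,\tau\rangle=p_i\ \forall i\}$, which indeed follows from Sion's minimax theorem on $\R^m\times\PP(X)$ since $\PP(X)$ is compact and $I$ is lower semicontinuous and convex; and (ii) the move from $p_0\in\mathrm{dom}\,\phi^*$ to a nearby $p'\in\mathrm{ri}(\mathrm{dom}\,\phi^*)$ with $\phi^*(p')$ close to $\phi^*(p_0)$, which is justified because for $q\in\mathrm{ri}(\mathrm{dom}\,\phi^*)$ and $\lambda\uparrow1$ one has $p_\lambda=(1-\lambda)q+\lambda p_0\in\mathrm{ri}(\mathrm{dom}\,\phi^*)$ with $\phi^*(p_\lambda)\le(1-\lambda)\phi^*(q)+\lambda\phi^*(p_0)$. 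After that, $\partial\phi^*(p')\neq\emptyset$, and the Fenchel--Young computation together with uniqueness of the equilibrium state makes $\partial\phi(c)$ a singleton equal to $\{p'\}$, so $\sigma_V\in N\subset G$ with $I(\sigma_V)=\phi^*(p')$ under control. Your closing remark is also the right one: it is the vector-space structure of~$\VV$, and not merely its density, that keeps the finite-dimensional combination $\sum_ic_iW_i$ inside~$\VV$, and this is precisely where the hypothesis is indispensable.
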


\subsection{Asymptotics of Feynman--Kac semigroups}
\label{ss-feynmankac}
Let $X$ be a compact metric space, let $\MM_+(X)$ be the set of non-negative Borel measures on~$X$, and let $\{P(x,\cdot),x\in X\}\subset \MM_+(X)$ be a family such that $P(x,X)>0$ for any $x\in X$, and the mapping $x\mapsto P(x,\cdot)$ is continuous from~$X$ to the space~$\MM_+(X)$ endowed with the weak$^*$ topology. We denote by $P_k(x,\cdot)$ the $k$-fold iteration of $P(x,\cdot)$, and by
$$
\PPPP_k:C(X)\to C(X), \quad \PPPP_k^*:\MM(X)\to\MM(X) 
$$
the semigroups with the generators~$\PPPP$ and~$\PPPP^*$ defined by 
$$
(\PPPP f)(x)=\int_X P(x,\dd y)f(y), \quad (\PPPP^*\sigma)(\Gamma)=\int_X P(x,\Gamma)\sigma(\dd x),
$$
where $f\in C(X)$, $\sigma\in \MM(X)$, and $\Gamma\in\BB(X)$. Note that $P_k(x,\cdot)$ is the kernel of the operator~$\PPPP_k$. Recall that a family $\CC\subset C(X)$ is said to be {\it  determining\/} if, for any two measures $\mu,\nu\in\MM_+(X)$, the validity of the relation $\langle f,\mu\rangle=\langle f,\nu\rangle$ for all $f\in\CC$ implies that $\mu=\nu$. In addition to the above hypotheses, let us assume that the following two properties hold. 
	
\begin{description}
	\item[\hypertarget{(UI)}{(UI)} \sc Uniform irreducibility.] For any $\e>0$ there is an integer $n\ge1$ and a number $p>0$ such that 
	\begin{equation} \label{ntransition}
		P_n\bigl(x,B_X(\hat x,\e)\bigr)\ge p\quad\mbox{for all $x,\hat x\in X$}.
	\end{equation} 
	\item[\hypertarget{(UF)}{(UF)} \sc Uniform Feller property.] \sl There is a determining family $\CC\subset C(X)$ such that, for any $f\in \CC$, the sequence $\{\|\PPPP_k{\bf1}\|_\infty^{-1}\PPPP_kf,k\ge0\}$ is uniformly equicontinuous. 
\end{description}
The following theorem is established in~\cite[Section~2]{JNPS-cpam2015}. 

\begin{theorem} \label{t-feynmankac}
Under the above hypotheses, there is a number $\lambda>0$, a measure $\mu\in\PP(X)$, and a positive function $h\in C(X)$ such that $\langle h,\mu\rangle=1$ and 
\begin{align}
	\PPPP_1h&=\lambda h, \quad \PPPP_1^*\mu=\lambda\mu,\label{eigenvalues}\\
	\lambda^{-k}\PPPP_k f&\to \langle f,\mu\rangle h\quad 
	\mbox{in $C(X)$ as $k\to\infty$},
	\label{convergence-functions}\\
	\lambda^{-k}\PPPP_k^* \sigma&\to \langle h,\sigma\rangle \mu\quad \mbox{in $\MM_+(X)$ as $k\to\infty$},
	\label{convergence-measures}
\end{align}
where $f\in C(X)$ and $\sigma\in\MM_+(X)$ are arbitrary. 
\end{theorem}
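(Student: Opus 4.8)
The plan is to run a Perron--Frobenius (Krein--Rutman) scheme for the generalised Markov semigroup $\{\PPPP_k\}$: first produce the leading eigenvalue $\lambda>0$ and the eigenmeasure $\mu$ of the dual semigroup, then derive uniform two-sided bounds for the renormalised iterates $\lambda^{-k}\PPPP_k{\bf1}$, and finally obtain the eigenfunction $h$ and the convergences \eqref{convergence-functions}, \eqref{convergence-measures} from a uniform contraction estimate. For the eigenmeasure, observe that the weak$^*$ continuity of $x\mapsto P(x,\cdot)$ and the hypothesis $P(x,X)>0$ force $\inf_{x\in X}P(x,X)>0$ by compactness of $X$; hence $\sigma\mapsto\PPPP_1^*\sigma/\langle\PPPP_1{\bf1},\sigma\rangle$ is a well-defined weak$^*$-continuous self-map of the nonempty convex compact set $\PP(X)$, and the Schauder--Tychonoff fixed point theorem produces $\mu$ with $\PPPP_1^*\mu=\lambda\mu$, $\lambda:=\langle\PPPP_1{\bf1},\mu\rangle>0$. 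Then $\langle\PPPP_kf,\mu\rangle=\lambda^k\langle f,\mu\rangle$ for all $k$ and $f\in C(X)$. Since replacing $P(x,\cdot)$ by $\lambda^{-1}P(x,\cdot)$ multiplies $\PPPP_k$ by $\lambda^{-k}$ and leaves Hypotheses \hyperlink{(UI)}{(UI)}, \hyperlink{(UF)}{(UF)} (the latter being scale-invariant) intact, I may assume $\lambda=1$, so that $\langle\PPPP_kf,\mu\rangle=\langle f,\mu\rangle$.

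Next I would bound the iterates $h_k:=\PPPP_k{\bf1}$ from above and below uniformly in $k$. From $\langle h_k,\mu\rangle=1$ we get $\|h_k\|_\infty\ge1\ge\min_X h_k$; by \hyperlink{(UF)}{(UF)} (applied with ${\bf1}\in\CC$, as in all the applications) the functions $\|h_k\|_\infty^{-1}h_k$ are uniformly equicontinuous, hence $\ge1/2$ on a ball $B(\hat x_k,\e_0)$ of fixed radius around a maximiser. Combining this with \hyperlink{(UI)}{(UI)} — which supplies $n=n(\e_0)$ and $p=p(\e_0)$ with $\inf_x P_n(x,B(\hat y,\e_0))\ge p$ for all $\hat y$ — and the semigroup property,
\[
\min_X h_{k+n}\ \ge\ \tfrac12\|h_k\|_\infty\,\inf_x P_n\bigl(x,B(\hat x_k,\e_0)\bigr)\ \ge\ \tfrac{p}{2}\|h_k\|_\infty .
\]
Integrating against $\mu$ gives $\|h_k\|_\infty\le2/p=:M$ for all $k$, and then $\min_X h_k\ge p/2=:m>0$ for $k\ge n$, with $m,M$ independent of $k$. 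Using \hyperlink{(UF)}{(UF)} once more, $\{\PPPP_kf\}_{k\ge0}$ is then uniformly bounded and equicontinuous, hence relatively compact in $C(X)$, for each $f\in\CC$.

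The heart of the proof, and the hard part, is a uniform contraction. For $f\in\CC$ set $g_k=\PPPP_kf/h_k$; it is uniformly equicontinuous in $k$, being a ratio of equicontinuous functions bounded below by $m$. The semigroup relation gives, for $x\in X$,
\[
g_{k+n}(x)=\int_X g_k(y)\,\frac{P_n(x,\dd y)\,h_k(y)}{h_{k+n}(x)},
\]
an average of $g_k$ against a probability kernel that, by $m\le h_j\le M$ and \hyperlink{(UI)}{(UI)}, charges every ball $B(\hat y,\e_0)$ by at least $(m/M)p$ — a uniform \emph{lower} Doeblin minorisation. A standard oscillation estimate (split the average over a ball around an extremiser of $g_k$ and use the common modulus of continuity) then gives $\Osc(g_{k+n})\le\kappa\,\Osc(g_k)$ with $\kappa<1$ independent of $k$, so $\Osc(\PPPP_kf/h_k)\to0$ geometrically; with $\langle\PPPP_kf,\mu\rangle=\langle f,\mu\rangle$ this forces $\PPPP_kf/h_k\to\langle f,\mu\rangle$ uniformly. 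The same mechanism applied to the ratios $h_{k+j}/h_k=\PPPP_kh_j/\PPPP_k{\bf1}$ yields, at a rate uniform in $j$, that $\|h_{k+j}/h_k-{\bf1}\|_\infty\le C\kappa^k$, so $\{h_k\}$ is Cauchy in $C(X)$; its limit $h$ satisfies $m\le h\le M$, $\langle h,\mu\rangle=1$, and, on passing to the limit in $h_{k+1}=\PPPP_1h_k$, also $\PPPP_1h=h$.

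Finally I would assemble the conclusion. For $f\in\CC$, $\PPPP_kf=g_kh_k\to\langle f,\mu\rangle h$ in $C(X)$, and this extends to arbitrary $f\in C(X)$ because the closed linear span of the determining family $\CC$ equals $C(X)$ while $\|\PPPP_k\|_{C(X)\to C(X)}=\|h_k\|_\infty\le M$; this is \eqref{convergence-functions}. Dualising, $\langle f,\PPPP_k^*\sigma\rangle=\langle\PPPP_kf,\sigma\rangle\to\langle f,\mu\rangle\langle h,\sigma\rangle$ for $\sigma\in\MM_+(X)$, $f\in C(X)$, i.e. $\PPPP_k^*\sigma\to\langle h,\sigma\rangle\mu$ in $\MM_+(X)$, which is \eqref{convergence-measures}; undoing the normalisation restores the factor $\lambda^{-k}$ and the eigenrelations $\PPPP_1h=\lambda h$, $\PPPP_1^*\mu=\lambda\mu$ of \eqref{eigenvalues}, with $\langle h,\mu\rangle=1$ and $h>0$. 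I expect the contraction step to be the main obstacle: in the absence of an upper Doeblin bound on $P_n(x,\cdot)$, Birkhoff's theorem on the Hilbert projective metric is not directly available, so the uniform rate $\kappa<1$ has to be squeezed out by hand from the one-sided minorisation of \hyperlink{(UI)}{(UI)} together with the common modulus of continuity from \hyperlink{(UF)}{(UF)}; this delicate balancing is the technical core carried out in \cite[Section~2]{JNPS-cpam2015}.
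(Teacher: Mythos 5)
The paper does not prove Theorem~\ref{t-feynmankac}; it is quoted as a black box from \cite[Section~2]{JNPS-cpam2015}, so there is no in-paper argument to compare against. Your reconstruction follows a reasonable Perron--Frobenius programme, and the first two stages are sound: the construction of $(\lambda,\mu)$ by Schauder--Tychonoff, the normalisation $\lambda=1$, and the two-sided bounds $m\le\PPPP_k{\mathbf 1}\le M$ obtained by combining \hyperlink{(UF)}{(UF)}, \hyperlink{(UI)}{(UI)} and $\langle\PPPP_k{\mathbf 1},\mu\rangle=1$. (You do tacitly use ${\mathbf 1}\in\CC$, which holds in the paper's application but is not part of the stated hypothesis \hyperlink{(UF)}{(UF)}; worth flagging explicitly.)

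The contraction step, however, has a genuine gap. With only the one-sided minorisation supplied by \hyperlink{(UI)}{(UI)} --- each reweighted kernel $\kappa_x(\dd y):=h_k(y)\,P_n(x,\dd y)/h_{k+n}(x)$ gives mass at least $\alpha:=(m/M)\,p(\e_0)$ to \emph{every} ball of radius $\e_0$ --- the "split the average over a ball around an extremiser" computation yields at best the \emph{additive} estimate
\[
\Osc(g_{k+n})\ \le\ (1-\alpha)\,\Osc(g_k)\ +\ C\,\omega(\e_0),
\]
where $\omega$ is the common modulus of continuity of the $g_k$, and not the multiplicative estimate $\Osc(g_{k+n})\le\kappa\,\Osc(g_k)$ with a $k$-independent $\kappa<1$. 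The iteration therefore only forces $\limsup_k\Osc(g_k)\lesssim\omega(\e_0)/\alpha(\e_0)$, a plateau rather than decay; and because $\alpha(\e_0)=(m/M)\,p(\e_0)$ with $p(\e_0)$ an uncontrolled constant from \hyperlink{(UI)}{(UI)}, sending $\e_0\to0$ does not drive the plateau to zero. A genuine Doeblin contraction of the oscillation requires a \emph{common} minorant, i.e.\ a fixed probability measure $\nu$ and $\alpha>0$ with $\kappa_x\ge\alpha\nu$ for all $x$ (equivalently a uniform total-variation bound $\|\kappa_{x_1}-\kappa_{x_2}\|_{TV}\le 2(1-\alpha)$); \hyperlink{(UI)}{(UI)} bounds each $\kappa_x$ below on balls but gives no such cross-$x$ compatibility, and the restrictions of two kernels to the same ball can be mutually singular. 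You rightly identify this step as the technical core of \cite[Section~2]{JNPS-cpam2015}, but the mechanism you propose for it does not close the gap, so \eqref{convergence-functions}--\eqref{convergence-measures} and the existence of the eigenfunction $h$ remain unproved in your proposal.
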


\subsection{Proofs of Theorems~\ref{T:2.7} and~\ref{T:2.8}}
\label{s-proofs-measures}

\begin{proof}[Proof of Theorem~\ref{T:2.7}]
We follow the arguments in~\cite{AKSS-aihp2007} and~\cite[Section~9.6]{bogachev2010}. The proof in our case is simpler since the derivative of~$F$ has full rank everywhere. 

{\it Step~1: Localisation in~$\varUpsilon$\/}. We first prove that it suffices to establish the result in the neighbourhood of each point $\varUpsilon\in\XXXX$. Namely, suppose that for any $\varUpsilon\in\XXXX$ there is $\delta=\delta_\varUpsilon>0$ such that, for any $\varUpsilon'\in B_\HHHH(\varUpsilon,\delta)$, the measure $F_*(\varUpsilon',\ell)$ has a $C^k$-smooth density $\rho(\varUpsilon',y)$ that satisfies the inequality in~\eqref{2.8b} for $\varUpsilon_1,\varUpsilon_2\in B_\HHHH(\varUpsilon,\delta)$. In this case, the existence and regularity of the density for $F_*(\varUpsilon,\ell)$ with $\varUpsilon\in\XXXX$ is trivial. To prove inequality~\eqref{2.8b}, we consider the open cover $\{\dot B_\HHHH(\varUpsilon,\delta_\varUpsilon/2)\}_{\varUpsilon\in\XXXX}$ of the compact set~$\XXXX$ and select a finite sub-cover $\{\OO^m, 1\le m\le M\}$, where $\OO^m=\dot B_\HHHH(\varUpsilon^m,\delta_m/2)$ for some $\varUpsilon^m\in\XXXX$ and $\delta_m>0$. Denoting by~$\delta$ the minimum of the numbers~$\delta_m$, $1\le m\le M$, we note that it suffices to establish~\eqref{2.8b} for points $\varUpsilon_1,\varUpsilon_2$ satisfying $d(\varUpsilon_1,\varUpsilon_2)<\delta$. For any such pair, we can find $m\in[\![1,M]\!]$ such that $\varUpsilon_1,\varUpsilon_2\in B_\HHHH(\varUpsilon^m,\delta_m)$, so that~\eqref{2.8b} holds by assumption. 

\smallskip
{\it Step~2: Localisation in~$\eta$\/}. 
Let us fix any $\varUpsilon\in\XXXX$ and assume that for any point $\eta\in\KK$ we can find positive numbers~$\delta_\eta$ and~$\gamma_\eta$ such that, if a measure~$\ell\in\MM(\EEEE)$ satisfies~\hyperlink{(P)}{(P)}, and $\psi:\EEEE\to\R$ is a $C^\infty$-function with a support contained in~$B_\EEEE(\eta,\gamma_\eta)$, then $F_*(\varUpsilon,\psi\ell)$ has a $C^k$-smooth density $\rho_\psi(\varUpsilon,\cdot)$ for $\varUpsilon \in B_\HHHH(\varUpsilon,\delta_\eta)$, and inequality~\eqref{2.8b} holds for $\varUpsilon_1,\varUpsilon_2 \in B_\HHHH(\varUpsilon,\delta_\eta)$. In this case, we consider the open cover $\{\dot B_\EEEE(\eta,\gamma_\eta)\}_{\eta\in\KK}$ of the compact set~$\KK$ and choose a finite sub-cover $\{\UU^m, 1\le m\le M\}$, where $\UU^m=\dot B_\EEEE(\eta^m, \gamma_m)$ for some $\eta^m\in\KK$ and $\gamma_m>0$. Let $\{\psi^m\}$ be an infinitely smooth partition of unity on~$\KK$ subordinate to~$\{\UU^m\}$; see~\cite[Section~II.3]{lang1985}. Then we can write \begin{equation} \label{localisation-sum}
	\lambda_{\varUpsilon'}=F_*(\varUpsilon',\ell)=\sum_{m=1}^M F_*(\varUpsilon',\psi^m\ell). 
\end{equation}
Setting $\delta=\min\{\delta_m, 1\le m\le M\}$, where $\delta_m>0$ is the number corresponding to~$\eta^m$, we see that, for any $\varUpsilon'\in B_\HHHH(\varUpsilon,\delta)$, each term of the sum in~\eqref{localisation-sum} possesses a $C^k$-smooth density $\rho_m(\varUpsilon',y)$ that satisfies~\eqref{2.8b} for $\varUpsilon_1,\varUpsilon_2 \in B_\HHHH(\varUpsilon,\delta)$. This proves the required property.
	
\smallskip
{\it Step~3: Proof in the localised case\/}. It remains to establish the property described in the beginning of Step~2. Let us fix any $\widehat\varUpsilon\in\XXXX$ and $\hat\eta\in\KK$ and choose a local chart in the neighbourhood of $\hat y=F(\widehat\varUpsilon,\hat\eta)$, so that $F$ can be written as~$(F_1,\dots,F_d)$. By~\eqref{2.7}, we can find vectors $g_1,\dots,g_d\in\{\varphi_j\}_{j\ge1}$ such that 
\begin{equation} \label{non-degenerate}
	D(\widehat\varUpsilon,\hat\eta):=\bigl|\det\bigl(D_{g_i}F_j(\widehat\varUpsilon,\hat\eta)\bigr)\bigr|>0. 
\end{equation}
Let us denote by~$\EEEE_1$ the vector span of~$g_1,\dots,g_d$ and by~$\EEEE_2$ its orthogonal complement, so that we can write $\eta=\eta^1+\eta^2$ with $\eta^i\in\EEEE_i$. By the implicit function theorem, there are $\delta,\gamma_1,\gamma_2>0$ such that, for any vectors $\varUpsilon\in B_\HHHH(\widehat\varUpsilon,\delta)$ and $\eta^2\in B_{\EEEE_2}(\hat\eta^2,\gamma_2)$, the mapping $\eta^1\mapsto F(\varUpsilon,\eta^1+\eta^2)$ is a $C^{k+1}$-smooth diffeomorphism of the open ball $\dot B_{\EEEE_1}(\hat\eta^1,2\gamma_1)$ onto its image $W(\varUpsilon,\eta^2,2\gamma_1)$, and the determinant $D(\varUpsilon,\eta^1+\eta^2)$ is separated from zero by a number $c>0$. Let us denote by $G(\varUpsilon,\eta^2;\cdot):W(\varUpsilon,\eta^2,2\gamma_1)\to \dot B_{\EEEE^1}(\eta^1,2\gamma_1)$ the inverse function. Decreasing, if necessary, the numbers~$\delta$ and~$\gamma_2$, we can assume that the closure of $W(\varUpsilon,\eta^2,\gamma_1)$ is included in $B:=W(\widehat\varUpsilon,\hat\eta^2,2\gamma_1)$, so that the map $G(\varUpsilon,\eta^2;y)$ is well defined on the product set $B_\HHHH(\widehat\varUpsilon,\delta)\times B_{\EEEE_2}(\hat\eta^2,\gamma_2)\times B$ whose last component is independent of~$\varUpsilon$ and~$\eta^2$.  

Now let $\gamma=\min(\gamma_1,\gamma_2)$ and let $\psi:\EEEE\to\R$ be a smooth function with support in~$B_\EEEE(\hat\eta,\gamma)$. Then the support of~$F_*(\varUpsilon,\psi\ell)$ is contained in~$B$. Set $B^1=B_\EEEE(\hat\eta^1,\gamma)$ and $B^2=B_\EEEE(\hat\eta^2,\gamma)$, and denote by $\ell^i$ the projection of~$\ell$ to~$\EEEE^i$. Hypothesis~\hyperlink{(P)}{(P)} implies that~$\ell^1$ has a smooth density~$\rho^1(\eta^1)$ with respect to the Lebesgue measure. Hence, for any function $f\in C(\YYYY)$ supported in~$B$, Fubini's theorem gives 
\begin{align*}
\langle f,F_*(\varUpsilon,\psi\ell)\rangle
&=\int_{\EEEE} f\bigl(F(\varUpsilon,\eta)\bigr)\psi(\eta)\ell(\dd\eta)\notag\\
&=\int_{B^2}\biggl\{\int_{B^1}  f\bigl(F(\varUpsilon,\eta^1+\eta^2)\bigr)\psi(\eta^1+\eta^2)\rho^1(\eta^1)\dd\eta^1\biggr\}\,\ell^2(\dd\eta^2)\notag\\
&=\int_{B^2}\biggl\{\int_{B}  f(y)\frac{\psi(\eta^1+\eta^2)\rho^1(\eta^1)}{D(\varUpsilon,\eta^1+\eta^2)}\,\dd y\biggr\}\,\ell^2(\dd\eta^2),
\end{align*}
where we performed the change of variable $y=F(\varUpsilon,\eta^1+\eta^2)$ to derive the last line, in which $\eta^1=G(\varUpsilon,\eta^2;y)$. This relation implies that, for $\varUpsilon\in B_\HHHH(\widehat\varUpsilon,\delta)$, the measure $F_*(\varUpsilon,\psi\ell)$ (which is supported in~$B$) has a density given by 
\begin{equation} \label{local-density}
	\rho_\psi(\varUpsilon,y)
	=\int_{B^2}\frac{\psi(G(\varUpsilon,\eta^2;y)+\eta^2)\rho^1(G(\varUpsilon,\eta^2;y))}{D(\varUpsilon,G(\varUpsilon,\eta^2;y)+\eta^2)}\,\ell^2(\dd\eta^2),
\end{equation}
where the denominator satisfies the following inequality in the support of the numerator: 
\begin{equation} \label{lower-bound-determinant}
	D(\varUpsilon,G(\varUpsilon,\eta^2;y)+\eta^2)\ge c. 
\end{equation}
Relations~\eqref{local-density} and~\eqref{lower-bound-determinant} show that~$\rho_\psi$ is $C^k$-smooth and satisfies~\eqref{2.8b} for $\varUpsilon_1,\varUpsilon_2\in B_\HHHH(\widehat\varUpsilon,\delta)$. This completes the proof of the theorem. 
\end{proof}

\begin{proof}[Proof of Theorem~\ref{T:2.8}]
	Let $\psi:\EEEE\to\R$ be any $C^\infty$-function with support in a small ball $B_\EEEE(\hat\eta,\gamma)$ such that $0\le \psi\le1$ and $\psi(\hat\eta)>0$. In this case, $F_*(\varUpsilon,\ell)$ is minorised by the measure~$F_*(\varUpsilon,\psi\ell)$. As was established in the proof of Theorem~\ref{T:2.7}, the latter has a density~$\rho_\psi(\varUpsilon,y)$ given by~\eqref{local-density}. Taking $\varUpsilon=\widehat\varUpsilon$ and noting that $G(\widehat\varUpsilon,\hat\eta^2,\hat y)=\hat\eta^1$, we can write 
\begin{equation} \label{density-hat}
\rho(\widehat\varUpsilon, \hat y)
\ge \rho_\psi(\widehat\varUpsilon, \hat y)
=\int_{B^2}\biggl(\frac{\psi(\hat\eta)\rho^1(\hat\eta^1)}{D(\widehat\varUpsilon,\hat\eta)}+g(\eta^2)\biggr)\ell^2(\dd\eta^2),	
\end{equation}
where $g(\eta^2)$ is a continuous function vanishing at~$\hat\eta^2$. Now note that, in view of~\eqref{2.14}, the  first term in the brackets under the integral is positive, so that the integrand (which is a non-negative function) is strictly positive in the neighbourhood of~$\hat\eta^2$. Moreover, it follows from~\eqref{2.14} that~$\hat\eta^2$ is in the support of~$\ell^2$. We thus conclude that the integral  in~\eqref{density-hat} is positive. 
\end{proof}

\subsection{Agrachev--Sarychev theorem}
\label{ss-AS-control}
Let us consider the Navier--Stokes system on~$\T^2$ controlled by a finite-dimensional external force. After projecting to the space~$H$ of square-integrable divergence-free vector fields with zero mean value, we write it in the form
\begin{equation} \label{Navier-Stokes}
	\p_t u+\nu Lu+B(u)=\eta(t,x),
\end{equation}
where $L=-\Pi\Delta$, $B(u)=\Pi(\langle u,\nabla\rangle u)$, and~$\Pi:L^2(\T^2,\R^2)\to H$ stands for Leray's projection. We consider the problem on some interval~$J_T=[0,T]$ and denote by~$S_T^u(u_0,\eta)$ the map that takes  functions $u_0\in H$ and $\eta\in L^2(J_T,H)$ to~$u(T)$, where $u(t,x)$ is the solution of~\eqref{Navier-Stokes} issued from~$u_0$. The control force~$\eta$ is assumed to have the form
\begin{equation} \label{control-eta}
	\eta(t,x)=\sum_{j\in\Lambda}\eta_j(t)e_j(x),
\end{equation}
where $\Lambda=\{(1,0), (1,1), (-1,0), (-1,-1)\}$, $\{e_j\}$ is the trigonometric basis in~$H$ (see~\eqref{trigonometric-basis}), and~$\eta_j$ are smooth real-valued functions on~$J_T$. We denote by~$\HH_1$ the four-dimensional vector space spanned by~$\{e_j,j\in\Lambda\}$ and endow the space~$C^\infty(J_T,\HH_1)$ with the usual Fr\'echet topology. The following result is  established in~\cite[Sections~4,6]{AS-2006} (see also~\cite[Theorem~2.5]{shirikyan-aihp2007} for a more explicit statement in the 3D case).\footnote{The papers~\cite{AS-2006,shirikyan-aihp2007} deal with the case when the initial point is fixed and the target varies in a compact subset. However, exactly the same arguments enable one to handle the situation in which both the initial and target states vary in compact subsets.} 

\begin{theorem} \label{t-AS}
	Let $s\ge0$ be an integer and let $\CC_0,\CC\subset V^s$ be compact subsets. Then, for any $\e>0$, there is a number $\delta\in(0,T/2)$ and a  continuous function $\varPsi_T:\CC_0\times\CC\to C^\infty(J_T,\HH_1)$ such that the following properties hold.
	\begin{description}
		\item[\sc Support.] 
		For any $u_0\in\CC_0$ and $\hat u\in\CC$, the support of $\varPsi_T(u_0,\hat u)$ is contained in the interval~$[\delta,T-\delta]$. 
		\item[\sc Approximation.]
		We have the inequality
		\begin{equation} \label{approximation-compacts}
			\sup_{u_0\in\CC_0, \hat u\in\CC}\,\bigl\|S_T^u\bigl(u_0,\varPsi_T(u_0,\hat u)\bigr)-\hat u\bigr\|_s\le\e. 
		\end{equation}
	\end{description}
\end{theorem}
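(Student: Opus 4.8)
\emph{Plan of proof.} I will reproduce the convexification--saturation method of Agrachev and Sarychev, keeping track of continuous dependence on the initial and target states. For a finite set $E\subset\Z_*^2$ write $\HH_E=\lspan\{e_j:j\in E\}$, so that the four-dimensional space $\HH_1$ equals $\HH_\Lambda$ with $\Lambda=\{(1,0),(1,1),(-1,0),(-1,-1)\}$, and let $\aA_T(u_0;\HH_E)$ denote the set of points $S_T^u(u_0,\eta)$ obtained from controls $\eta\in C^\infty(J_T,\HH_E)$ supported in the interior of $J_T$. The goal is to show that $\aA_T(u_0;\HH_\Lambda)$ is dense in $V^s$, with the approximating controls supported in a fixed compact subinterval $[\delta,T-\delta]\subset(0,T)$ and depending continuously on $(u_0,\hat u)$; this is precisely the combination of \textsc{Support} and \textsc{Approximation}.

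\emph{Step 0 (reference controllability).} With \emph{unrestricted} forcing the system is controllable with a smooth selection and a support constraint: given $(u_0,\hat u)$ and a small $\delta\in(0,T/2)$, run the free Navier--Stokes flow on $[0,\delta]$ (reaching $S_\delta^u(u_0)$), interpolate on $[\delta,T-\delta]$ by a path from $S_\delta^u(u_0)$ to $\hat u$ whose jets at both endpoints match those of the free flow, and run the free flow again on $[T-\delta,T]$; then $\eta^{\mathrm{full}}:=\p_tu+\nu Lu+B(u)$ is smooth, supported in $[\delta,T-\delta]$, depends smoothly on $(u_0,\hat u)$, and $\|S_T^u(u_0,\eta^{\mathrm{full}})-\hat u\|_s=\|S_\delta^u(\hat u)-\hat u\|_s$, which is $<\e/2$ once $\delta$ is small. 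Approximating $\eta^{\mathrm{full}}$ in $L^2(J_T,V^s)$, continuously in $(u_0,\hat u)$, by a smooth control valued in $\HH_{E}$ with $E$ a large ball in $\Z_*^2$ (possible since $\bigcup_E\HH_E$ is dense) and invoking continuous dependence of solutions of \eqref{Navier-Stokes} on the forcing, the theorem reduces to the following \emph{extension principle}: one must be able to descend from controls in $\HH_{E'}$ to controls in $\HH_{E}$, where $E'$ is obtained from $E$ by adjoining the modes produced by the quadratic interaction.

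\emph{Step 1 (convexification).} For finite $E\subset\Z_*^2$ put $E'=E\cup\{m+n:m,n\in E,\ m_1n_2\ne m_2n_1\}$ (for non-parallel $m,n$ the $e_{m+n}$-component of $B(e_m,e_n)+B(e_n,e_m)$ is a nonzero multiple of $(m+n)^\bot$). The key lemma is: for every compact $\KKKK\subset V^s$ and every $\e>0$ there is a \emph{continuous} map $C^\infty(J_T,\HH_{E'})\to C^\infty(J_T,\HH_{E})$, $\zeta\mapsto\eta_\zeta$, with $\|S_T^u(u_0,\zeta)-S_T^u(u_0,\eta_\zeta)\|_s\le\e$ uniformly for $u_0\in\KKKK$, both controls supported in a common subinterval. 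The construction replaces, on a short time interval, the $\HH_{E'}$-part $\sum_k c_k(t)\,e_{m_k+n_k}$ by highly oscillatory $\HH_E$-valued controls built from $e_{m_k}$ and $e_{n_k}$ whose quadratic self-interaction, averaged as the oscillation period tends to $0$, reproduces the prescribed $e_{m_k+n_k}$-forcing (the convexification of quadratic monomials), the unwanted lower-order terms being compensated by an additional $\HH_E$-valued component and the remainders being absorbed, via energy estimates for \eqref{Navier-Stokes} in $V^s$, by the fast time scale. \emph{This relaxation lemma---with its uniformity in $u_0$ and its continuity in $\zeta$---is the main obstacle}; everything else is bookkeeping. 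Iterating it, $S_T^u(u_0,\cdot)$ restricted to $\HH_E$-controls approximates, uniformly on compacts, its restriction to $\HH_{E^{(N)}}$-controls for every $N$, where $E\mapsto E^{(N)}$ is the $N$-fold iterate of $E\mapsto E'$.

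\emph{Step 2 (saturation and conclusion).} A finite combinatorial check shows that the iterates of $\Lambda=\{(1,0),(1,1),(-1,0),(-1,-1)\}$ exhaust $\Z_*^2$: already $\Lambda'$ contains $(0,\pm1)$ and $(\pm2,\pm1)$, and an elementary induction then produces every mode. Now, given $\CC_0,\CC$ and $\e>0$, use Step~0 to pick a smooth family $(u_0,\hat u)\mapsto\eta^{E}_{u_0,\hat u}\in C^\infty(J_T,\HH_E)$ with $E$ a large ball, with supports in $[\delta,T-\delta]$ and $\|S_T^u(u_0,\eta^E_{u_0,\hat u})-\hat u\|_s\le\e/2$; since $E\subset\Lambda^{(N)}$ for $N$ large (as $\bigcup_N\Lambda^{(N)}=\Z_*^2$ and $E$ is finite), apply the extension principle $N$ times with $\KKKK=\CC_0$ and total error $\le\e/2$ to obtain the continuous family $\varPsi_T(u_0,\hat u)\in C^\infty(J_T,\HH_\Lambda)$, still supported in a compact subinterval, with $\|S_T^u(u_0,\varPsi_T(u_0,\hat u))-\hat u\|_s\le\e$; relabelling the subinterval as $[\delta,T-\delta]$ gives \textsc{Support} and \textsc{Approximation}. (If one wanted an \emph{exact} continuous selection rather than an $\e$-approximation, one would, in the standard way, package Steps~1--2 as ``solid controllability'' and close the argument with a topological-degree argument; for the statement above, continuity at each step already suffices.)
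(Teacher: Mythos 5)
The paper does not give a proof of Theorem~\ref{t-AS}: it is quoted from~\cite[Sections~4,~6]{AS-2006} and~\cite[Theorem~2.5]{shirikyan-aihp2007}, with a footnote remarking that the extension to a compact family of initial points follows by ``exactly the same arguments.'' Your proposal instead attempts to reconstruct the Agrachev--Sarychev proof. The three-step architecture you sketch --- reduction to a large finite-dimensional $\HH_E$ via reference controllability with full forcing and projection, iterated convexification to descend from $\HH_{E'}$- to $\HH_E$-valued controls, and the combinatorial saturation $\bigcup_N \Lambda^{(N)} = \Z_*^2$ --- is a faithful account of that method, and the mode computation in Step~2 is correct.

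The problem is that the core of the argument is missing. In Step~1 you state the convexification/relaxation lemma and then explicitly write that it ``is the main obstacle; everything else is bookkeeping,'' without proving it. That lemma carries essentially all of the analytic content of the theorem: one has to build the fast-oscillating $\HH_E$-valued control whose averaged quadratic self-interaction reproduces the prescribed $\HH_{E'}$-component, cancel the unwanted directions, absorb the non-resonant remainders via $V^s$ energy estimates, establish uniformity over the compact set~$\CC_0$, and verify continuity of the correspondence $\zeta \mapsto \eta_\zeta$; the last two features are precisely the strengthening over the original statement of~\cite{AS-2006} (which fixes the initial point) that the present paper singles out as needed. A proposal that names the key lemma and declines to prove it is a blueprint, not a proof. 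A smaller omission: in Step~0 the assertion that the finite-dimensional approximation of $\eta^{\mathrm{full}}$ can be chosen continuously in $(u_0,\hat u)$ deserves a line of justification --- project $\eta^{\mathrm{full}}_{u_0,\hat u}$ onto a fixed $\HH_E$ and use the compactness of $\CC_0\times\CC$ to choose $E$ once and for all --- since continuous selection is precisely what distinguishes the statement being proved from bare density of the attainable set.
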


\addcontentsline{toc}{section}{Bibliography}
\newcommand{\etalchar}[1]{$^{#1}$}
\def\cprime{$'$} \def\cprime{$'$}
  \def\polhk#1{\setbox0=\hbox{#1}{\ooalign{\hidewidth
  \lower1.5ex\hbox{`}\hidewidth\crcr\unhbox0}}}
  \def\polhk#1{\setbox0=\hbox{#1}{\ooalign{\hidewidth
  \lower1.5ex\hbox{`}\hidewidth\crcr\unhbox0}}}
  \def\polhk#1{\setbox0=\hbox{#1}{\ooalign{\hidewidth
  \lower1.5ex\hbox{`}\hidewidth\crcr\unhbox0}}} \def\cprime{$'$}
  \def\polhk#1{\setbox0=\hbox{#1}{\ooalign{\hidewidth
  \lower1.5ex\hbox{`}\hidewidth\crcr\unhbox0}}} \def\cprime{$'$}
  \def\cprime{$'$} \def\cprime{$'$} \def\cprime{$'$}
\providecommand{\bysame}{\leavevmode\hbox to3em{\hrulefill}\thinspace}
\providecommand{\MR}{\relax\ifhmode\unskip\space\fi MR }
\providecommand{\MRhref}[2]{%
  \href{http://www.ams.org/mathscinet-getitem?mr=#1}{#2}
}
\providecommand{\href}[2]{#2}

\end{document}